\DeclareMathOperator{\argmin}{argmin}
\newcommand{\jasa}{0}
\newtheorem*{corollary}{Corollary}
\newtheorem{definition}{Definition}
\newtheorem{remark}{Remark}
\newtheorem{proposition}{Proposition}
\pgfplotsset{
  grid style = {
    dash pattern = on 0.25mm off 0.75mm,
    line cap = round,
    gray,
    line width = 0.1pt
  }
}
\title{The Coupled Rejection Sampler}
\author[1]{Adrien Corenflos}
\author[1]{Simo S\"{a}rkk\"{a}}
\affil[1]{Department of Electrical Engineering and Automation, Aalto University.}
\date{\today}
\begin{document}
\maketitle

\begin{abstract}
    We propose a coupled rejection-sampling method for sampling from couplings of arbitrary distributions. The method relies on accepting or rejecting coupled samples coming from dominating marginals. Contrary to existing acceptance-rejection coupling methods, the variance of the execution time of the proposed method is limited and stays finite as the two target marginals approach each other in the sense of the total variation norm. In the important special case of coupling multivariate Gaussians with different means and covariances, we derive positive lower bounds for the resulting coupling probability of our algorithm, and we then show how the coupling method can be optimized in closed form. Finally, we show how we can modify the coupled rejection-sampling method to propose from coupled ensemble of proposals, so as to asymptotically recover a maximal coupling. We then apply the method to the problem of coupling rare events samplers, derive a parallel coupled resampling algorithm to use in particle filtering, and show how the coupled rejection-sampler can be used to speed up unbiased MCMC methods based on couplings.
\end{abstract}

{\bf Keywords:} coupling methods; parallel resampling; ensemble methods; unbiased Markov chain Monte Carlo

\section{Introduction}
\label{sec:intro}
Given two probability distributions $p$ and $q$, a coupling \citep[see, e.g., ][]{Thorisson2000Coupling,Lindvall2002lectures} of $p$ and $q$ is defined as a pair of random variables $(X, Y)$, defined on a joint probability space, that are marginally distributed according to $p$ and $q$, respectively. Typical examples of such couplings, exhibiting different uses and properties, comprise (i) the trivial independent coupling, where $X \sim p$ and $Y \sim q$ with $X$ independent of $Y$; (ii) optimal transport couplings, for which the joint distribution $\Gamma_{OT}$ of $(X, Y)$ minimizes a given objective over all possible couplings $\Gamma_{OT} = \argmin_{\Gamma} \int c(x, y) \Gamma(\dd{x}, \dd{y})$,~\citep{villani2009optimal}; (iii) common random number couplings, where the same uniform/random seed is used to sample the marginals, e.g., if we know the quantile functions $F^{-}$ and $G^{-}$ of $p$ and $q$, we can sample $X = F^{-}(U)$, $Y = G^{-}(U)$ where $U$ is uniformly distributed~\citep{whitt1976bivariate, gal1984optimality}; (iv) similarly to common random numbers, antithetic variables, which are constructed in the same way as the common random numbers couplings but by using $X = F^{-}(U)$, $Y = G^{-}(1-U)$ instead~\citep{hammersley1956new}.

In this article, we consider another class of couplings, which preserve mass over the diagonal of the joint distribution, that is, couplings $(X, Y)$ such that $\mathbb{P}(X=Y) > 0$. Such couplings are ubiquitous in proving a number of inequalities used, for example, to study the convergence of Monte Carlo algorithms~\citep[see, e.g.,][]{Reutter1995general,Lindvall2002lectures}. More recently, they have also have been used as a computational tool rather than a theoretical one. In particular, the seminal paper of \citet{glynn2014exact} showed how to compute unbiased estimates of expectations using coupled Markov chains, allowing to then compute these to an arbitrary precision using distributed hardware~\citep{Jacob2020UnbiasedMCMC}. Following this work, several extensions have been developed, both in the classical Markov chain Monte Carlo (MCMC) methods~\citep{Jacob2020UnbiasedMCMC,Heng2019unbiasedHMC, Xu2021couplings, Wang2021maximal}, as well as in pseudo-marginal and particle Markov chain Monte Carlo (PMCMC) methods~\citep{Jacob2020UnbiasedSmoothing, Middleton2019unbiased, Middleton2020unbiased}. 
    
If $X_t$ and $Y_t$ are two Markov chains, the coupling (or meeting) time \citep[e.g.][]{Thorisson2000Coupling,Lindvall2002lectures} is defined as $\inf\{t \ge 0 \mid X_t = Y_t\}$. In the context of MCMC methods, a common requirement of debiasing coupling methods is that the constructed coupled Markov chains have a coupling time with finite expectation $\mathbb{E}[\tau] < \infty$ \citep[e.g.][]{Jacob2020UnbiasedMCMC} (more precisely, it needs to have polynomial tails~\citep{Middleton2019unbiased}). To construct such Markov chains, one typically relies on implementing a maximum coupling of the proposal distributions $q(\cdot \mid X_t)$, $q(\cdot \mid Y_t)$ or of the Metropolis-Hastings kernel directly~\citep{Wang2021maximal}. 

When considering coupling the proposals, a maximal coupling is a pair of random variables $(X^*, Y^*)$ which are marginally distributed according to $q(\cdot \mid X_t)$ and $q(\cdot \mid Y_t)$, respectively, such that $\mathbb{P}(X^* = Y^*)$ is maximized over the (infinite) set of all possible couplings. When $q(\cdot \mid X_t)$ and $q(\cdot \mid Y_t)$ are Gaussian distributions with the same covariance, this can be done, for example, using the reflection-maximal coupling introduced in \citet{Bou2020coupling}. However, in the case when the covariances are different, one needs to resort to using a rejection sampling algorithm~\cite[see, e.g.][]{Thorisson2000Coupling, Valen1998Diagnosis}. An important drawback of the latter algorithm is that the variance of its (random) run time goes to infinity as the total variation distance of the marginal distributions goes to zero \citep{Gerber2020discussion}.
This property limits its application in contractive schemes, such as unbiased MCMC methods, where the aim is to get the marginals to converge to one another. However, as pointed out in \citet{Gerber2020discussion}, if one is willing to compromise on the maximal property, that is, sample $X^*$ and $Y^*$ with a suboptimal diagonal mass $\mathbb{P}(X^* = Y^*)$, then it is possible to control the variance of the run time. This can be used as a trade-off between computational cost and optimality \citep[][Ch. 2]{Jacob2021lecture}.

The first contribution of this article is to present an algorithm for sampling from couplings of arbitrary distributions by using an acceptance-rejection scheme on proposal samples coming from a dominating coupling. The algorithm has the advantage that the variance of the execution time stays finite as the coupling marginals approach each other. The second contribution is to specialize this to Gaussians with different means and covariances, derive a principled convex optimisation scheme to find a good dominating proposal, and show that the coupling probability is lower-bounded by a positive (parametric) constant. The lower bounds we derive for the acceptance probability also lead to upper bounds for the total variation distance between Gaussian distributions that are always less than $1$. The third contribution is to modify the acceptance-rejection scheme by considering a coupled ensemble proposal such that, as the size of the proposal ensemble increases, the resulting algorithm asymptotically produces samples from a maximal coupling. We finally apply the methods introduced to derive a novel parallel coupled resampling algorithm for particle filtering, and show the attractiveness of our algorithms on several sampling problems.

The structure of the article is the following. In Section~\ref{sec:rejection-coupling} we describe the coupling algorithm for generic distributions, prove its correctness, and analyse its theoretical properties. In Section~\ref{sec:ensemble-rejection}, by extending \citet{Deligiannidis2020ensemble}, we subsequently derive a coupled ensemble rejection sampler that extends the coupled rejection-sampler algorithm designed in Section~\ref{sec:rejection-coupling}. We derive asymptotic bounds for the coupling success probability of the resulting coupling that show how this new method asymptotically recovers a maximal coupling. In Section~\ref{sec:gaussian} we then discuss the special case of designing efficient coupling of Gaussian distributions with different means and covariances. We derive tight positive lower bounds for the coupling success probability and show how to design and compute efficient dominating Gaussian couplings. Finally, in Section~\ref{sec:applications}, we (i) show how the method can be used to sample from couplings of Gaussian tails, (ii) implement a coupled parallel resampling algorithm, (iii) compare our method with the algorithm discussed in \citet{Gerber2020discussion} on a Gibbs sampling problem, and finally (iv) compare our algorithm to \citet{Wang2021maximal} on a Bayesian logistic regression problem using a manifold Metropolis-adjusted Langevin algorithm sampler.

\subsection*{Assumptions and Notations}
    In this article we suppose that all the random variables are defined on a common probability space $(\Omega, \mathcal{F}, \mathbb{P})$ and take value in a measurable space $\chi$ being either $\mathbb{Z}$ or $\mathbb{R}^d$ for some $d \geq 1$. It is, however, worth noting that the results and algorithms of this article directly extend to any measurable space for which the event $\{X=Y\}$ is measurable for all random variables $X, Y$. 
    
    When considering that $X$ is sampled from a probability distribution $p$ we write $X \sim p$. For the sake of notational simplicity, we will assume throughout this article that all distributions, apart from coupling distributions, have a density with respect to a common measure (e.g., Lebesgue measure for densities defined on $\mathbb{R}^d$ and the counting measure on the set of integers $\mathbb{Z}$), and, where no confusion is possible, we will identify the probability measure with its density. We, however, emphasize that all of our results can easily be rewritten in terms of Radon--Nikodym derivatives. Moreover, for the sake of conciseness, we use a slight abuse of notation by denoting $\int (\cdot) \dd{x} = \int (\cdot) \dd{\mu}$ for an integral with respect to an implicit unnamed reference measure $\mu$.
    
    The uniform distribution on $[0,1]$ is denoted by $\mathcal{U}(0, 1)$. The probability distribution of a general coupling of $(X,Y)$ is denoted by $\Gamma$ and the probability distribution of a given dominating coupling is denoted by $\hat{\Gamma}$. When there is no ambiguity we interchangeably refer to the pair $(X, Y)$ or its distribution as the coupling. Finally, for any random variable $X$ and measurable event $E$, we write $X \mid E$ for a random variable with law $\mathbb{P}((X \mid E) \in A) = \mathbb{P}(X \in A \mid E)$.

    When dealing with sequences $u_N$ and $v_N$, we say that they are equivalent and write $u_N \sim v_N$ if we have $\frac{u_N}{v_N} \to 1$ as $N$ goes to infinity. Furthermore, we write $\limsup_{N \to \infty} u_N$ for the quantity $\lim_{N \to \infty} \sup_{K \geq N} u_K$ when it exists, and similarly for $\liminf$.
    
    We denote the Gaussian distribution with mean $\mu$ and covariance $\Sigma$ by $\mathcal{N}(\mu, \Sigma)$ and the corresponding density by $\mathcal{N}(x; \mu, \Sigma)$. For a given covariance matrix $\Sigma$ we write $\Sigma^{1/2}$ for its lower Cholesky factor and $\Sigma^{-1/2}$ for the inverse thereof.

\section{The coupled rejection-sampling method}
\label{sec:rejection-coupling}
Let $p$ and $q$ be two probability densities defined on $\chi$. We define a diagonal coupling of $p$ and $q$ in the following way.
\begin{definition}[Diagonal coupling]
\label{def:diagonal-coupling}
    A pair of random variables $(X,Y)$ defined on the same probability space and taking values in $\chi$ is a diagonal coupling of the probability densities $p$ and $q$ if and only if it is a coupling, that is, if
    \begin{align*}
        \mathbb{P}(X \in A) = \int_A p(x) \dd{x}, \quad \mathbb{P}(Y \in A) = \int_A q(x) \dd{x},
    \end{align*}
    for all measurable sets $A \subset \chi$, and if $\mathbb{P}(X = Y) > 0.$
\end{definition}
With this definition, a maximal coupling \citep{Thorisson2000Coupling, Lindvall2002lectures} is a diagonal coupling with maximal mass on the diagonal event $\{X=Y\}$. The maximal coupling also has a connection with the total variation distance between the distributions $\mathbb{P}(X)$ and $\mathbb{P}(Y)$. If we define total variation distance as $\| p - q \|_{TV} = \sup_A P(X \in A) - P(Y \in A) = 1 - \int \min(p(x), q(x)) \, dx$ then for a maximal coupling $(X, Y)$, we have
\begin{equation}
    \mathbb{P}(X = Y) = 1 - \| p - q \|_{TV} = \int \min(p(x), q(x)) \, dx.
\label{eq:tv_connection}
\end{equation}

In order to construct the acceptance-rejection scheme, we need the following definition.

\begin{definition}[Dominating pair]
\label{def:dominating-pair}
    Let $(\hat{p}, \hat{q})$ be a pair of probability densities. We say that it dominates $(p, q)$ (or that $(p, q) \preceq (\hat{p}, \hat{q})$) if and only if there exists $1 \leq M(p, \hat{p}) <\infty $ and $1 \leq M(q, \hat{q})< \infty$ such that for any $x \in \mathbb{R}^d$ we have
    \begin{align}
        p(x) \leq M(p, \hat{p}) \, \hat{p}(x), \quad q(x) \leq M(q, \hat{q}) \, \hat{q}(x).
    \end{align}
\end{definition}
It is worth noting that even if $(\hat{p}, \hat{q})$ dominates $(p, q)$, a coupling $\hat{\Gamma}$ of $(\hat{p}, \hat{q})$ does not necessarily dominate a coupling $\Gamma$ of $(p, q)$. Indeed, there might not exist $M(\Gamma, \hat{\Gamma})$ such that $\frac{\dd{\Gamma}}{\dd{\hat{\Gamma}}} \leq M(\Gamma, \hat{\Gamma})$ uniformly. However, in a slight abuse of language and notation, we say that a coupling $\hat{\Gamma}$ dominates the coupling $\Gamma$ (i.e., $\hat{\Gamma} \succeq \Gamma$) when its marginals do. 

Now, suppose that we have a diagonal coupling $\hat{\Gamma}$ of $\hat{p}$ and $\hat{q}$ dominating the independent coupling $(p \otimes q)(x, y) = p(x) \, q(y)$. It turns out that we can produce samples from a diagonal coupling $\Gamma$ with marginals $p$ and $q$ by using an acceptance-rejection scheme with the coupling $\hat{\Gamma}$ as the proposal. Algorithm~\ref{alg:rejection-coupling} shows how do this.
\begin{algorithm}[!htb]
    \caption{Rejection-coupling of $(p, q)$}
    \label{alg:rejection-coupling}
    \DontPrintSemicolon
    \Fn{\RejC{$\hat{\Gamma}$, $p$, $q$}}{
        \tcp{Supposing $\hat{\Gamma} \succeq p \otimes q$ is a coupling of $\hat{p}$ and $\hat{q}$.}
        Set $A_X = 0$ and $A_Y = 0$ \tcp{Acceptance flags}
        \While{
            $A_X = 0$ and $A_Y = 0$
        }
        {
            Sample $X_1, Y_1 \sim \hat{\Gamma}$, $U \sim \mathcal{U}(0, 1)$\;
            \lIf{
                $U < \frac{p(X_1)}{M(p,\hat{p})\hat{p}(X_1)}$
            }{
                set $A_X = 1$
            }
            \lIf{
                $U < \frac{q(Y_1)}{M(q,\hat{q})\hat{q}(Y_1)}$
            }{
                set $A_Y = 1$
            }
        }
        Sample $X_2, Y_2$ from $p \otimes q$\;
        \Ret{$X = A_X \, X_1 + (1 - A_X) \, X_2$, $Y = A_Y \, Y_1 + (1 - A_Y) \, Y_2$}
    }
\end{algorithm}

Let us denote by $\tau$ the number of times $\hat{\Gamma}$ is sampled from. We call this quantity the number of steps of Algorithm~\ref{alg:rejection-coupling} as it is roughly proportional to the run time of the algorithm, noting that in reality, sampling from $p \otimes q$ may be expensive too.
The following proposition shows that Algorithm~\ref{alg:rejection-coupling} samples from a coupling of $p$ and $q$, and the expectation and variance of its number of steps are bounded.
\begin{proposition}
\label{prop:rejection-coupling}
    Suppose that $\hat{\Gamma} \succeq p \otimes q$. Then the random variables $X$ and $Y$ generated by Algorithm~\ref{alg:rejection-coupling} are marginally distributed according to $p$ and $q$, respectively, and the mean and variance of the random number of steps $\tau$ of Algorithm~\ref{alg:rejection-coupling} verify
    \begin{equation}
    \begin{split}
        \mathbb{E}[\tau] &\leq \min\left(M(p, \hat{p}), M(q, \hat{q})\right), \\
        \mathbb{V}[\tau] &\leq \min\left(M(p, \hat{p}), M(q, \hat{q})\right)^2 - \min\left(M(p, \hat{p}), M(q, \hat{q})\right).
    \end{split}
    \end{equation}
    Finally, when the coupling for $X_2, Y_2$ is such that $\mathbb{P}(X_2 = Y_2) = 0$, we also have $\mathbb{P}(X = Y) \leq \mathbb{P}(X_1 = Y_1)$.
\end{proposition}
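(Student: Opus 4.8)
The plan is to analyze Algorithm~\ref{alg:rejection-coupling} by conditioning on what happens at each pass of the \texttt{while} loop, treating the accept/reject mechanism for $X$ and for $Y$ separately but driven by the shared uniform $U$. First I would fix one pass of the loop: conditionally on $X_1 \sim \hat p$, the event $A_X = 1$ has probability $\int \hat p(x)\,\frac{p(x)}{M(p,\hat p)\hat p(x)}\,\dd{x} = 1/M(p,\hat p)$ by the standard rejection-sampling computation, and conditionally on acceptance, $X_1$ has density $p$; the same holds for $Y_1$ with $M(q,\hat q)$. The subtlety is that the loop stops as soon as \emph{either} flag is set, so the returned $X_1$ (when $A_X=1$) is the first $\hat\Gamma$-sample whose $X$-coordinate is accepted, but this stopping is also triggered by $Y$-acceptances. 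To show $X$ still has the right marginal, I would argue that because $U$ is shared, $\{A_X = 1\} \subseteq \{A_Y = 1\}$ whenever $\frac{p(X_1)}{M(p,\hat p)\hat p(X_1)} \le \frac{q(Y_1)}{M(q,\hat q)\hat q(Y_1)}$ pathwise — but more robustly, I would condition on the terminal pass and note that given the loop terminates at pass $\tau$ with $A_X = 1$, the pair $(X_1, U)$ at that pass has the law of $(\hat p, \mathcal U(0,1))$ restricted to the acceptance region for $X$, independently of how many failed passes preceded it (each failed pass is a pass where neither flag fired, which is exchangeable and independent across passes). Hence $X_1 \mid A_X = 1$ has density $p$; and when $A_X = 0$ we set $X = X_2 \sim p$; so unconditionally $X \sim p$, and symmetrically $Y \sim q$.

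Next I would handle $\tau$. The loop terminates at a pass iff $A_X = 1$ or $A_Y = 1$ there; by inclusion–exclusion the per-pass termination probability is
\begin{equation*}
    r := \mathbb{P}(A_X = 1 \text{ or } A_Y = 1) \ge \max\!\left(\tfrac{1}{M(p,\hat p)}, \tfrac{1}{M(q,\hat q)}\right) = \frac{1}{\min(M(p,\hat p), M(q,\hat q))},
\end{equation*}
since each of the two events alone already has the stated probability. Thus $\tau$ is stochastically dominated by a geometric random variable with success probability $1/\min(M(p,\hat p),M(q,\hat q))$, which gives $\mathbb{E}[\tau] \le \min(M(p,\hat p),M(q,\hat q))$ and, using $\mathbb{V}[\text{Geom}(s)] = (1-s)/s^2$ together with monotonicity of this quantity in $s$, the stated variance bound $\min(\cdot)^2 - \min(\cdot)$. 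I would be a little careful to confirm $\tau$ is genuinely dominated (not just equal in the symmetric case) — domination of the survival function $\mathbb{P}(\tau > k) = (1-r)^k \le (1-1/\min(\cdot))^k$ suffices and both mean and variance are increasing functionals of the survival function here.

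Finally, for the last claim, suppose the $(X_2,Y_2)$ coupling has $\mathbb{P}(X_2 = Y_2) = 0$. Then $X = Y$ forces, up to a null event, that both flags were set at the terminal pass, i.e. $A_X = A_Y = 1$ and $X_1 = Y_1$; so $\{X = Y\} \subseteq \{X_1 = Y_1 \text{ at the terminal pass}\}$ modulo a null set. It remains to see that the terminal pass's $(X_1,Y_1)$ has the same law as a generic draw from $\hat\Gamma$ on the event in question — again because each non-terminal pass is an independent failed trial, conditioning on termination does not bias the $(X_1,Y_1)$ law, so $\mathbb{P}(X_1 = Y_1 \text{ at terminal pass}) \le \mathbb{P}_{\hat\Gamma}(X_1 = Y_1)$, which is what is meant by $\mathbb{P}(X_1 = Y_1)$ in the statement. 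Hence $\mathbb{P}(X = Y) \le \mathbb{P}(X_1 = Y_1)$. The main obstacle throughout is the bookkeeping that the shared stopping rule does not distort the conditional law of the accepted sample at the terminal pass; once the exchangeability/independence-across-passes observation is made precise (e.g. by writing $\tau$ as a stopping time for an i.i.d.\ sequence of pass-outcomes and invoking the strong Markov / renewal argument), the marginal-correctness and the diagonal-mass inequality both fall out cleanly.
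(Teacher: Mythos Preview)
Your treatment of the marginals and of $\tau$ is essentially the paper's argument. For the marginals, the key observation (which you reach, if somewhat circuitously) is that the terminal-pass event $\{A_X=1\}$ is exactly the $X$-acceptance event $E_X^{(\tau)}$, and since $E_X^{(\tau)}$ alone already forces termination, the law of $X_1^{(\tau)}$ given $A_X=1$ is just $\hat p$ conditioned on $E_X$, i.e.\ $p$. For $\tau$, the paper notes that $\tau$ is geometric with success parameter $r=\mathbb{P}(E_X\cup E_Y)\ge 1/\min(M(p,\hat p),M(q,\hat q))$ and then writes $\mathbb{V}[\tau]=\mathbb{E}[\tau]^2-\mathbb{E}[\tau]$ together with monotonicity of $x\mapsto x^2-x$ on $[1,\infty)$; your stochastic-domination route is equivalent.

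The gap is in your last paragraph. The inclusion $\{X=Y\}\subseteq\{A_X=A_Y=1,\ X_1=Y_1\}\subseteq\{X_1=Y_1\}$ already \emph{is} the entire proof, because $(X_1,Y_1)$ in the statement are the terminal-pass values produced by the algorithm. Your attempt to go further---``conditioning on termination does not bias the $(X_1,Y_1)$ law, so $\mathbb{P}(X_1=Y_1\text{ at terminal})\le\mathbb{P}_{\hat\Gamma}(X_1=Y_1)$''---is both unnecessary and wrong. The terminal pair has law $\hat\Gamma(\,\cdot\mid E_X\cup E_Y)$, and since $E_X\cup E_Y$ depends on $(X_1,Y_1)$ through the acceptance ratios, this conditioning \emph{does} bias. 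In fact the stronger inequality $\mathbb{P}(X=Y)\le\mathbb{P}_{\hat\Gamma}(\hat X=\hat Y)$ you are aiming at can fail outright: take $\hat p=\hat q$ uniform on $[0,1]$, let $p=q$ equal $c$ on $[0,\tfrac12]$ and $c\epsilon$ on $(\tfrac12,1]$ (so $c=2/(1+\epsilon)$), and let $\hat\Gamma$ set $Y_1=X_1$ when $X_1\le\tfrac12$ and $Y_1$ independently uniform on $(\tfrac12,1]$ otherwise. Then $\mathbb{P}_{\hat\Gamma}(X_1=Y_1)=\tfrac12$ while a direct computation gives $\mathbb{P}(X=Y)=1/(1+\epsilon)>\tfrac12$ for any $\epsilon<1$. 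Drop that final step and your argument is complete.
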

\begin{proof}
    The proof, mostly on classical rejection sampling theory, can be found in {\if1\jasa the supplementary material\else Appendix~\ref{app:proof_prop_rej}\fi}.
\end{proof}

\begin{remark}
    The upper bounds on the mean and variance of number of steps $\tau$ reduce to $1$ and $0$ when the proposals recover the targets.
\end{remark}

\begin{proposition}
\label{prop:diagonal-rejection-coupling}
    Suppose $\hat{\Gamma} \succeq p \otimes q$ is a diagonal coupling of $\hat{p}$ and $\hat{q}$. Furthermore, suppose that for all $x \in \mathbb{R}^d$, we have $\hat{p}(x) > 0 \Rightarrow p(x) > 0$ and similarly for $\hat{q}$ and $q$. In this case, Algorithm~\ref{alg:rejection-coupling} defines a diagonal coupling of $p$ and $q$.
\end{proposition}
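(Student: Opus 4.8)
The plan is to build on Proposition~\ref{prop:rejection-coupling}, which already guarantees that $X$ and $Y$ have the correct marginals $p$ and $q$, so the only thing left to establish is $\mathbb{P}(X = Y) > 0$. The natural strategy is to lower-bound $\mathbb{P}(X = Y)$ by the probability of the single event in which the algorithm accepts both $X_1$ and $Y_1$ in its very first step \emph{and} those two proposal samples happen to coincide, i.e. $\{X_1 = Y_1,\ A_X = 1,\ A_Y = 1\}$ occurring at step $\tau = 1$. On that event we return $X = X_1 = Y_1 = Y$, so it suffices to show this event has positive probability. (If the $X_2,Y_2$ coupling also puts mass on its diagonal, then $\mathbb{P}(X=Y)$ is only larger, so the bound via the first step is safe in all cases.)

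First I would write out the joint acceptance probability on the diagonal. Since $\hat{\Gamma}$ is a diagonal coupling of $\hat p$ and $\hat q$, the event $\{X_1 = Y_1\}$ has positive probability, and we can decompose $\hat{\Gamma}$ as a mixture: with probability $\hat{c} := \mathbb{P}_{\hat\Gamma}(X_1 = Y_1) > 0$ the pair is drawn from the common part (a distribution $\nu$ on the diagonal), and otherwise from the off-diagonal part. Conditioning on a draw from the diagonal component, $X_1 = Y_1 = Z$ with $Z \sim \nu$, and the two acceptance checks use the \emph{same} uniform $U$, so both succeed simultaneously with probability
\begin{equation*}
    \min\!\left(\frac{p(Z)}{M(p,\hat p)\,\hat p(Z)},\ \frac{q(Z)}{M(q,\hat q)\,\hat q(Z)}\right).
\end{equation*}
Hence
\begin{equation*}
    \mathbb{P}(X = Y) \ \geq\ \mathbb{P}(\tau = 1,\ A_X = A_Y = 1,\ X_1 = Y_1)
    \ =\ \hat{c}\int \min\!\left(\frac{p(z)}{M(p,\hat p)\,\hat p(z)},\ \frac{q(z)}{M(q,\hat q)\,\hat q(z)}\right) \nu(\dd z).
\end{equation*}

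It remains to argue this integral is strictly positive, and this is where the extra hypothesis $\hat p(x) > 0 \Rightarrow p(x) > 0$ (and likewise for $\hat q, q$) does the work. The measure $\nu$ lives on the set where $\hat p$ and $\hat q$ — and hence the diagonal density of $\hat\Gamma$ — are positive; by assumption $p$ and $q$ are then also positive there, so the integrand is strictly positive $\nu$-almost everywhere, and a nonnegative function that is positive a.e.\ against a nonzero measure has positive integral. I would be slightly careful about one subtlety: I must ensure $\nu$ is genuinely carried by $\{\hat p > 0, \hat q > 0\}$, which follows because $\hat\Gamma$ dominates $p \otimes q$ and is a coupling of $\hat p, \hat q$ (so its diagonal component cannot charge points where the marginal dominating densities vanish). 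The only real obstacle, then, is a careful measure-theoretic handling of the diagonal decomposition of $\hat\Gamma$ and verifying the support claim for $\nu$; once that bookkeeping is in place, positivity of $\mathbb{P}(X=Y)$ is immediate, and combined with the marginal correctness from Proposition~\ref{prop:rejection-coupling} this shows $(X,Y)$ is a diagonal coupling of $p$ and $q$.
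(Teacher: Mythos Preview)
Your proposal is correct and follows essentially the same approach as the paper: invoke Proposition~\ref{prop:rejection-coupling} for marginal correctness, then lower-bound $\mathbb{P}(X=Y)$ by $\mathbb{P}(X_1=Y_1)\int c(z)\min\!\big(\tfrac{p(z)}{M(p,\hat p)\hat p(z)},\tfrac{q(z)}{M(q,\hat q)\hat q(z)}\big)\dd z$ (your $\hat c$ and $\nu$ are exactly the paper's $\mathbb{P}(X_1=Y_1)$ and $c$), and conclude positivity from the support hypothesis. Your restriction to the first step ($\tau=1$) is a harmless simplification that makes the conditioning cleaner than the paper's slightly informal treatment, and your remark about verifying that $\nu$ is carried by $\{\hat p>0,\hat q>0\}$ is precisely the content of the paper's one-line support argument.
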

\begin{proof}
    Proposition~\ref{prop:rejection-coupling} ensures that Algorithm~\ref{alg:rejection-coupling} defines a coupling of $p$ and $q$. Now, using the notations of Algorithm~\ref{alg:rejection-coupling},
    \begin{equation}
    \begin{split}
        \mathbb{P}(X = Y) 
            &\geq   \mathbb{P}(X_1 = Y_1 \,\&\, A_X = A_Y = 1) \\ 
            &=      \mathbb{P}(A_X = A_Y = 1 \mid X_1 = Y_1) \mathbb{P}(X_1 = Y_1) \\
            &=      \mathbb{P}(X_1 = Y_1) \int c(x) \min\left(\frac{p(x)}{M(p, \hat{p}) \hat{p}(x)}, \frac{q(x)}{M(q, \hat{q}) \hat{q}(x)}\right) \dd{x} ,
    \end{split}
    \end{equation}

    where $c(x)$ is the density of $X_1 \mid X_1 = Y_1$.
    
    Because $\mathbb{P}(X_1 = Y_1) > 0$, we must have a measurable set $A$ with $c(A) > 0$ and, for any $x \in A$, $\hat{p}(x) > 0$ and $\hat{q}(x) > 0$. This ensures that $\int c(x) \min\left(\frac{p(x)}{M(p, \hat{p}) \hat{p}(x)}, \frac{q(x)}{M(q, \hat{q}) \hat{q}(x)}\right) \dd{x} > 0$,
    and finally that $\mathbb{P}(X=Y) > 0$.
\end{proof}

Proposition~\ref{prop:diagonal-rejection-coupling} provides weak conditions for Algorithm~\ref{alg:rejection-coupling} to return a diagonal coupling of $p$ and $q$, however, in practice, the coupling success probability will depend greatly on $\mathbb{P}(A_X = A_Y = 1 \mid X_1 = Y_1)$. This means $M(p, \hat{p})$ and $M(q, \hat{q})$ need to be as close as possible to $1$ for the rejection step to not degrade the coupling success probability too much. This, unfortunately, may come at the cost of making the dominating coupling less successful, and it is not clear what the best trade-off is in this respect. In the next section we discuss how we can completely bypass this trade-off at the cost of additional (parallelisable) computations, while in Section~\ref{sec:gaussian} we will see how we can construct a tight dominating maximal coupling for arbitrary Gaussian distributions at low computational cost.

\section{The ensemble coupled rejection-sampling method}
\label{sec:ensemble-rejection}
As mentioned in Section~\ref{sec:rejection-coupling}, the success probability of the coupling defined by Algorithm~\ref{alg:rejection-coupling} is highly dependent on the acceptance probability of the marginal rejection subroutine. In fact, if the latter goes to $0$ (which is typically the case when the dimension of the sampling space increases), the former will too. Thankfully, in \cite{Deligiannidis2020ensemble}, the authors introduce an ensemble rejection sampler which increases the acceptance probability of a rejection-sampling algorithm by considering not one, but several proposals. Formally, if $p_{\mathrm{RS}}$ is the probability of accepting a sample in the classical rejection-sampling algorithm, then their ensemble rejection-sampler with $N$ proposals has probability $p_{\mathrm{ERS}} \geq \frac{N p_{\mathrm{RS}}}{1 + (N-1) p_{\mathrm{RS}}}$ to result in an accepted sample \citep[][Proposition 1]{Deligiannidis2020ensemble}. 

In order to do so, the authors form the self-normalized importance sample $\sum_{i=1}^N W_i \delta_{X_i}$, where $X_i \sim \hat{p}, i=1, \ldots, N$ are i.i.d., $w_i = \frac{p(X_i)}{\hat{p}(X_i)}$, $W_i = \frac{w_i}{\sum_{j=1}^N w_i}$, for $i = 1, \ldots, N$. When this is done, they sample $I$ from a categorical distribution $\mathrm{Cat}((W_i)_{i=1}^N)$, and then accept the proposal $X_I$ with probability $\frac{\sum_{i=1}^N w_i}{M(p, \hat{p}) + \sum_{i\neq I} w_i}$.

In order to improve the acceptance rate of Algorithm~\ref{alg:rejection-coupling}, we could decide to replace the rejection-sampling step by the ensemble rejection-sampling one and Propositions~\ref{prop:rejection-coupling} and \ref{prop:diagonal-rejection-coupling} would essentially remain unchanged. However, in practice, sampling $I$ and $J$ independently from the categorical distributions corresponding to the $p$ and $q$ component, respectively, would not result in a large probability for the event $I = J$, even if the corresponding weights are close and using common random numbers. 

Thankfully, it is easy to sample from a maximal coupling of multinomial distributions~\citep[see, e.g.,][Section 4.2]{Thorisson2000Coupling}, and we can therefore write a coupled version of the ensemble rejection sampler. This is summarized in Algorithm~\ref{alg:ensemble-rejection-coupling}.

\begin{algorithm}[!htb]
    \caption{Ensemble rejection coupling of $(p, q)$}
    \label{alg:ensemble-rejection-coupling}
    \DontPrintSemicolon
    \Fn{\ERC{$\hat{\Gamma}$, $p$, $q$, $N$}}{
        \tcp{Operations involving the index $i$ need to be done $N$ times}
        Set $A_X = 0$ and $A_Y = 0$ \tcp{Acceptance flags}
        \While{
            $A_X = 0$ and $A_Y = 0$
        }
        {
            Sample $\hat{X}_i, \hat{Y}_i \sim \hat{\Gamma}$, $U \sim \mathcal{U}(0, 1)$\;
            Set $w_i^X = \frac{p(\hat{X}_i)}{\hat{p}(\hat{X}_i)}$, $w_i^Y =\frac{q(\hat{Y}_i)}{\hat{p}(\hat{Y}_i)}$,  $\hat{Z}_X = \frac{1}{N} \sum_{n=1}^N w_n^X$, and $\hat{Z}_Y = \frac{1}{N} \sum_{n=1}^N w_n^Y$\;
            Set $W_i^X = \frac{w_i^X}{N \hat{Z}_X}$ and  $W_i^Y = \frac{w_i^Y}{N \hat{Z}_Y}$\;
            Sample $(I, J)$ from a maximal coupling of $\mathrm{Cat}((W^X_i)_{i=1}^N)$ and  $\mathrm{Cat}((W^Y_i)_{i=1}^N)$\;
            Set $\bar{Z}_X = \hat{Z}_X + \frac{1}{N}(M(p, \hat{p}) - w_I^X)$ and $\bar{Z}_Y = \hat{Z}_Y + \frac{1}{N}(M(q, \hat{q}) - w_J^Y)$\;
            Set $X_1 = \hat{X}_I$ and $Y_1 = \hat{Y}_J$\;
            \lIf{
                $U < \frac{\hat{Z}_X}{\bar{Z}_X}$
            }{
                set $A_X = 1$
            }
            \lIf{
                $U < \frac{\hat{Z}_Y}{\bar{Z}_Y}$
            }{
                set $A_Y = 1$
            }
        }
        Sample $X_2, Y_2$ from $p \otimes q$\;
        \Ret{$X = A_X \, X_1 + (1 - A_X) \, X_2$, $Y = A_Y \, Y_1 + (1 - A_Y) \, Y_2$}
    }
\end{algorithm}

In the remainder of this section, all the quantities $X$, $Y$, $\mathbb{P}(X=Y)$, and $\mathbb{P}(I=J)$ implicitly depend on the number of proposals $N$ in the sampled ensemble. However, we do not notationally emphasize this dependency for the sake of readibility.

Similarly to Proposition~\ref{prop:rejection-coupling}, if we now define $\tau$ to be the number of times we sample $N$ i.i.d. variables from $\hat{\Gamma}$, we have the following result.
\begin{proposition}
    \label{prop:ensemble-rejection-coupling}
    Suppose that $\hat{\Gamma} \succeq p \otimes q$. Then, for any $N>0$, the random variables $X$ and $Y$ generated by Algorithm~\ref{alg:ensemble-rejection-coupling} are distributed according to $p$ and $q$, respectively. Furthermore, the mean and variance of the random number of steps $\tau$ of Algorithm~\ref{alg:ensemble-rejection-coupling} verify
    \begin{equation}\label{eq:bound_nb_steps}
        \begin{split}
        \mathbb{E}[\tau] &\leq \frac{N + \min\left(M(p, \hat{p}), M(q, \hat{q})\right) - 1}{N}, \\
        \mathbb{V}[\tau] &\leq \left(\frac{N + \min\left(M(p, \hat{p}), M(q, \hat{q})\right) - 1}{N}\right)^2 - \frac{N + \min\left(M(p, \hat{p}), M(q, \hat{q})\right) - 1}{N}.
        \end{split}
    \end{equation}
\end{proposition}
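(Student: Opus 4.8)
The plan is to prove the two assertions separately, in each case by reducing a single pass of the \texttt{while} loop of Algorithm~\ref{alg:ensemble-rejection-coupling} to one attempt of the scalar ensemble rejection sampler of \citet{Deligiannidis2020ensemble}. First I would observe that, seen through its $X$-component alone, one pass of the loop \emph{is} one attempt of that sampler for the pair $(p,\hat p)$: we have $\sum_n w_n^X = N\hat Z_X$, and the definition of $\bar Z_X$ gives $N\bar Z_X = M(p,\hat p) + \sum_{n\neq I} w_n^X$, so the test $U<\hat Z_X/\bar Z_X$ succeeds with probability $\sum_n w_n^X/\bigl(M(p,\hat p) + \sum_{n\neq I} w_n^X\bigr)$, exactly the acceptance probability of \citet{Deligiannidis2020ensemble}, which lies in $[0,1]$ because $w_i^X\le M(p,\hat p)$. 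Hence, within one pass, conditionally on $\{A_X=1\}$ the variable $\hat X_I$ has law $p$; the same argument with $(q,\hat q)$ in place of $(p,\hat p)$ shows that conditionally on $\{A_Y=1\}$ the variable $\hat Y_J$ has law $q$. Sharing a single uniform $U$ between the two tests is irrelevant here, since each marginal test is considered on its own.

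Next I would handle the mixture performed in the return statement. Because the ensemble, $U$ and $(I,J)$ are redrawn at each pass, the pass outcomes are i.i.d., and the loop exits at the first pass for which $A_X=1$ or $A_Y=1$; write $p_X=\mathbb P(A_X=1)$, $p_Y=\mathbb P(A_Y=1)$ and $\rho=\mathbb P(A_X=1\text{ or }A_Y=1)=p_X+\mathbb P(A_X=0,A_Y=1)$ for the single-pass probabilities. Then $\rho>0$, $\tau$ is geometric with parameter $\rho$, and in particular $\tau<\infty$ almost surely. Conditioning on $\{\tau=t\}$, the first $t-1$ passes do not enter the output, and the $t$-th pass has the law of one pass conditioned on $\{A_X=1\}\cup\{A_Y=1\}$; on the sub-event $\{A_X=1\}$ the output is $X=\hat X_I\sim p$, while on $\{A_X=0,A_Y=1\}$ it is $X=X_2$, an independent fresh draw from $p$. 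Hence, for every measurable $B$,
\[
\mathbb P(X\in B\mid\tau=t)=\frac{p_X\int_B p+\mathbb P(A_X=0,A_Y=1)\int_B p}{\rho}=\int_B p,
\]
uniformly in $t$, so $X\sim p$; exchanging the two components gives $Y\sim q$.

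For the moment bounds I would use that a geometric variable with parameter $\rho$ satisfies $\mathbb E[\tau]=1/\rho$ and $\mathbb V[\tau]=(1-\rho)/\rho^2=(1/\rho)^2-1/\rho$, so it only remains to lower bound $\rho$. Since $\rho\ge\max(p_X,p_Y)$, and since the identification above together with \citet[Proposition~1]{Deligiannidis2020ensemble} (applied with classical acceptance rate $p_{\mathrm{RS}}=1/M(p,\hat p)$, resp. $1/M(q,\hat q)$) gives $p_X\ge N/\bigl(N+M(p,\hat p)-1\bigr)$ and $p_Y\ge N/\bigl(N+M(q,\hat q)-1\bigr)$, we obtain $\rho\ge\rho_\star:=N/\bigl(N+\min(M(p,\hat p),M(q,\hat q))-1\bigr)$. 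Therefore $\mathbb E[\tau]\le 1/\rho_\star=\bigl(N+\min(M(p,\hat p),M(q,\hat q))-1\bigr)/N$, and since $x\mapsto x^2-x$ is increasing on $[1/2,\infty)$ and $1\le 1/\rho\le 1/\rho_\star$, also $\mathbb V[\tau]=(1/\rho)^2-1/\rho\le(1/\rho_\star)^2-1/\rho_\star$, which is the claimed bound.

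I expect the delicate step to be the mixture argument: showing that the fallback branch $X=X_2$ does not bias the output, i.e.\ that conditioning on the loop having stopped — an event governed jointly by $A_X$ and $A_Y$ — still leaves $X$ distributed as $p$. What makes this work is the elementary identity $\rho=p_X+\mathbb P(A_X=0,A_Y=1)$, which recombines the two contributions to the output into $p$ exactly; the remaining ingredients are the short computation matching Algorithm~\ref{alg:ensemble-rejection-coupling} to the sampler of \citet{Deligiannidis2020ensemble} and the standard moments of a geometric law.
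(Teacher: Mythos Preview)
Your proof is correct and follows essentially the same approach as the paper's. The paper's argument is terser: it notes that the ensemble rejection sampler is a rejection sampler on an extended space (\citet{Deligiannidis2020ensemble}), so that marginally each component of Algorithm~\ref{alg:ensemble-rejection-coupling} is an ensemble rejection sampler, and then invokes the proof of Proposition~\ref{prop:rejection-coupling} wholesale with $p_{\mathrm{RS}}$ replaced by $p_{\mathrm{ERS}}$. You unpack the same reduction explicitly---the identification $N\bar Z_X = M(p,\hat p)+\sum_{n\neq I}w_n^X$, the mixture argument showing the fallback $X_2$ does not bias the output, and the monotonicity of $x\mapsto x^2-x$ on $[1,\infty)$ for the variance bound---all of which are left implicit in the paper's two-sentence proof.
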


\begin{proof}
    As outlined in the proof of Proposition 1 in \citet{Deligiannidis2020ensemble}, the ensemble rejection sampler is a rejection sampler on an extended space. As a consequence, because marginally both $X$ and $Y$ in Algorithm~\ref{alg:ensemble-rejection-coupling} are sampled from an ensemble rejection sampler, the analysis in the proof of Proposition~\ref{prop:rejection-coupling} still holds. The marginals are therefore preserved, and the bounds \eqref{eq:bound_nb_steps} follow from replacing $p_{\mathrm{RS}}$ by $p_{\mathrm{ERS}}$.
\end{proof}

Moreover Algorithm~\ref{alg:ensemble-rejection-coupling} preserves the diagonal property as well, and, in fact, contrarily to Algorithm~\ref{alg:rejection-coupling}, the coupling success probability of Algorithm~\ref{alg:ensemble-rejection-coupling}  can be quantified, and we can show that it asymptotically recovers the maximal coupling probability. This result is given in the following proposition and subsequent corollary.
\begin{proposition}\label{prop:asymptotics-bound} 
    Let $(\hat{X}, \hat{Y}) \sim \hat{\Gamma}$ be distributed according to the dominating coupling. Suppose that the hypotheses of Proposition~\ref{prop:diagonal-rejection-coupling} are verified, then there exists a deterministic sequence $l_N$, such that for all $N$ we have $0 < l_N \leq \mathbb{P}(X = Y)$ and
    \begin{align}
        l_N &\sim \mathbb{P}(\hat{X} = \hat{Y}) \cdot \frac{\int \min(p(x), q(x))\dd{x}}{1 + u \cdot \frac{\sqrt{2 N \log \log N}}{N}} \cdot \frac{N}{N - 1 + \frac{1}{p_{X}}} \cdot \frac{N}{N - 1 + \frac{1}{p_{Y}}},
    \end{align}
    where $u = \max\left(\mathbb{V}\left[\frac{p(\hat{X})}{\hat{p}(\hat{X})}\right]^{1/2}, \mathbb{V}\left[\frac{q(\hat{Y})}{\hat{q}(\hat{Y})}\right]^{1/2}\right)$, and
    \begin{equation}
        p_X = \mathbb{P}\left( \frac{p(\hat{X})}{M(p, \hat{p}) \hat{p}(\hat{X})}\mid \hat{X} = \hat{Y}\right) > 0, \quad
        p_Y = \mathbb{P}\left(\frac{q(\hat{Y})}{M(q, \hat{q}) \hat{q}(\hat{X})}\mid \hat{X} = \hat{Y}\right) > 0
    \end{equation}
    are the conditional acceptance probabilities of the individual rejection samplers. 
    Additionally, when the distributions $p$ and $q$ do not have atoms, there exists a sequence $u_N$, such that for all $N$, $\mathbb{P}(X = Y) \leq u_N$ and
    \begin{align}
        u_N &\sim \mathbb{P}(\hat{X} = \hat{Y}) \cdot \frac{\int \max(p(x), q(x))\dd{x}}{1 - u \cdot \frac{\sqrt{2 N \log \log N}}{N}}.
    \end{align}
    Finally, when the coupling $\hat{\Gamma}$ is maximal, we have $p_X \geq \frac{\mathbb{E}\left[p(\hat{Y})\right]}{\mathbb{P}(\hat{X} = \hat{Y}) \, M(p, \hat{p})}$, and similarly, $p_Y \geq \frac{\mathbb{E}\left[q(\hat{X})\right]}{\mathbb{P}(\hat{X} = \hat{Y}) \, M(q, \hat{q})}$, with the identity $p_X = \frac{1}{M(p, \hat{p})}$ and $p_Y = \frac{1}{M(q, \hat{q})}$ when we also have $\hat{p} = \hat{q}$.
\end{proposition}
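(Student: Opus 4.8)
The plan is to exploit the renewal structure of the \texttt{while} loop. Each pass is an i.i.d.\ ``round'' producing $\hat{X}_{1:N},\hat{Y}_{1:N}\sim\hat{\Gamma}^{\otimes N}$, a shared uniform $U$, and indices $(I,J)$ drawn from a maximal coupling of the two categorical distributions; let $E_X$ (resp.\ $E_Y$) be the event that the round sets $A_X=1$ (resp.\ $A_Y=1$), and recall from the algorithm that $\{X_1=Y_1\}=\{\hat{X}_I=\hat{Y}_J\}$. Since $\mathbb{P}(E_X\cup E_Y)\ge\mathbb{P}(E_X)=p_{\mathrm{ERS}}^X>0$, the loop terminates almost surely, the terminal round has the law of a round conditioned on $E_X\cup E_Y$, and — when $p,q$ have no atoms — the event $\{X=Y\}$ coincides with exiting via $A_X=A_Y=1$ together with $\hat{X}_I=\hat{Y}_J$. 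I would therefore start from
\begin{equation*}
    \mathbb{P}(X=Y)\;=\;\frac{\mathbb{P}\!\left(A_X=A_Y=1,\ \hat{X}_I=\hat{Y}_J\right)}{\mathbb{P}(E_X\cup E_Y)}\qquad(\text{atomless }p,q),
\end{equation*}
with ``$\ge$'' holding in general and every quantity on the right referring to a single round; all three assertions of the proposition are read off from this formula.

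\emph{Lower bound.} I would bound the denominator by $1$ and the numerator from below by restricting to $\{I=J\}\cap\{\hat{X}_I=\hat{Y}_I\}$, i.e.\ to the diagonal indices of the ensemble. Conditioning on $\hat{X}_{1:N},\hat{Y}_{1:N}$, the maximal categorical coupling gives $\mathbb{P}(I=J=i\mid\cdot)=\min(W_i^X,W_i^Y)$, and integrating the shared $U$ against the common-random-number acceptance event leaves
\begin{equation*}
    \mathbb{P}\!\left(A_X=A_Y=1,\ \hat{X}_I=\hat{Y}_J\right)\ \ge\ \mathbb{E}\Big[\sum_{i\,:\,\hat{X}_i=\hat{Y}_i}\min(W_i^X,W_i^Y)\,\min\!\big(\tfrac{\hat{Z}_X}{\bar{Z}_X},\tfrac{\hat{Z}_Y}{\bar{Z}_Y}\big)\Big]
\end{equation*}
with $\bar{Z}_X,\bar{Z}_Y$ evaluated at $I=J=i$. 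From here the three factors appear as follows. (i) Using that the conditional diagonal density $c$ of $\hat{X}\mid\hat{X}=\hat{Y}$ satisfies $\mathbb{P}(\hat{X}=\hat{Y})\,c\le\min(\hat{p},\hat{q})$ pointwise, I would rearrange the diagonal sum to expose the factor $\mathbb{P}(\hat{X}=\hat{Y})\int\min(p(x),q(x))\dd{x}$. (ii) I would apply the law of the iterated logarithm to $\hat{Z}_X=\tfrac1N\sum_n w_n^X$ and to $\hat{Z}_Y$ (each of mean $1$ with variances $\mathbb{V}[w^X],\mathbb{V}[w^Y]$) to confine $\hat{Z}_X,\hat{Z}_Y,\bar{Z}_X,\bar{Z}_Y$ to $1\pm u\sqrt{2\log\log N/N}$ outside an event of probability $o(1)$, which produces the factor $\big(1+u\sqrt{2N\log\log N}/N\big)^{-1}$. (iii) I would invoke the Deligiannidis ensemble bound $p_{\mathrm{ERS}}\ge Np_{\mathrm{RS}}/(1+(N-1)p_{\mathrm{RS}})$ for the $X$- and $Y$-rejection subroutines \emph{conditioned on the diagonal}, where the classical acceptance probabilities are exactly $p_X$ and $p_Y$, producing $N/(N-1+1/p_X)$ and $N/(N-1+1/p_Y)$; positivity $p_X,p_Y>0$ follows precisely as in the proof of Proposition~\ref{prop:diagonal-rejection-coupling} (there is a set of positive $c$-measure on which $\hat{p},\hat{q}>0$). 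Collecting the factors and defining $l_N$ to be the resulting explicit positive sequence yields $0<l_N\le\mathbb{P}(X=Y)$ and the stated equivalence.

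\emph{Upper bound and maximal coupling.} For the upper bound (atomless $p,q$) I would bound the numerator above by $\mathbb{P}(\hat{X}_I=\hat{Y}_J)$; atomlessness forces $I=J$ on that event, so $\mathbb{P}(\hat{X}_I=\hat{Y}_J)=\mathbb{E}\big[\sum_i\min(W_i^X,W_i^Y)\mathbbm{1}_{\hat{X}_i=\hat{Y}_i}\big]$, and since $\min(W_i^X,W_i^Y)=\tfrac1N\min(w_i^X/\hat{Z}_X,w_i^Y/\hat{Z}_Y)$, the LIL lower bound $\hat{Z}_X,\hat{Z}_Y\ge1-u\sqrt{2\log\log N/N}$ gives, up to an $o(1)$ term, $\mathbb{P}(\hat{X}_I=\hat{Y}_J)\le\big(1-u\sqrt{2N\log\log N}/N\big)^{-1}\,\mathbb{P}(\hat{X}=\hat{Y})\int c(x)\min(p/\hat{p},q/\hat{q})\dd{x}$, and $\mathbb{P}(\hat{X}=\hat{Y})\,c\le\min(\hat{p},\hat{q})$ then bounds this by $\big(1-u\sqrt{2N\log\log N}/N\big)^{-1}\mathbb{P}(\hat{X}=\hat{Y})\int\min(p,q)\dd{x}\le\big(1-u\sqrt{2N\log\log N}/N\big)^{-1}\mathbb{P}(\hat{X}=\hat{Y})\int\max(p,q)\dd{x}$; the denominator $\mathbb{P}(E_X\cup E_Y)\ge\max(p_{\mathrm{ERS}}^X,p_{\mathrm{ERS}}^Y)\to1$ only perturbs the bound by a factor tending to $1$, which defines $u_N$. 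When $\hat{\Gamma}$ is maximal, $\mathbb{P}(\hat{X}=\hat{Y})\,c=\min(\hat{p},\hat{q})$, so $p_X=\big(M(p,\hat{p})\,\mathbb{P}(\hat{X}=\hat{Y})\big)^{-1}\int\min(\hat{p},\hat{q})(x)\,p(x)/\hat{p}(x)\dd{x}$, and the stated lower bound follows by bounding $\min(\hat{p},\hat{q})/\hat{p}=\min(1,\hat{q}/\hat{p})$ from below; when moreover $\hat{p}=\hat{q}$ the conditional diagonal density is $c=\hat{p}$ and the integral collapses to $\int p(x)\dd{x}=1$, giving $p_X=1/M(p,\hat{p})$ and, symmetrically, $p_Y=1/M(q,\hat{q})$.

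The hard part will be steps (i)--(iii) of the lower bound: disentangling the maximal-coupling-of-categoricals mechanism from the common-random-number acceptance so that, conditionally on landing on the diagonal, the joint acceptance splits — as a lower bound — into a \emph{product} of two Deligiannidis-type quantities. This is delicate because $\hat{Z}_X$ and $\hat{Z}_Y$ appear both in the sampling weights $W_i^X,W_i^Y$ and in the acceptance ratios $\hat{Z}_X/\bar{Z}_X,\hat{Z}_Y/\bar{Z}_Y$, and because every error term must be kept at the $o(1)$ level the ``$\sim$'' demands — which is precisely why one needs the sharp iterated-logarithm constant rather than a cruder concentration inequality.
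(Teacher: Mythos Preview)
Your proposal follows essentially the same strategy as the paper. The paper's proof also reduces to a single-round analysis, writing
\[
\mathbb{P}(X=Y)\ \ge\ \mathbb{P}(A_X=A_Y=1\mid I=J,\hat{X}_I=\hat{Y}_J)\,\mathbb{P}(\hat{X}_I=\hat{Y}_J\mid I=J)\,\mathbb{P}(I=J),
\]
bounds the first factor below by the product $\tfrac{N}{N-1+1/p_X}\cdot\tfrac{N}{N-1+1/p_Y}$ via the Deligiannidis ensemble bound together with positive association of the two common-$U$ acceptance events, and controls $\mathbb{P}(I=J)$ by exactly the law-of-the-iterated-logarithm argument you describe for $\hat{Z}_X,\hat{Z}_Y$. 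The upper bound and the maximal-coupling identities for $p_X,p_Y$ (via $c\propto\min(\hat{p},\hat{q})$) also match. Your renewal-structure formula $\mathbb{P}(X=Y)=\mathbb{P}(A_X=A_Y=1,\hat{X}_I=\hat{Y}_J)/\mathbb{P}(E_X\cup E_Y)$ is slightly more explicit about the while loop than the paper, which simply writes the single-round inequalities directly, but this is only a presentational difference.

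One small point to watch: in your step~(i) you invoke $\mathbb{P}(\hat{X}=\hat{Y})\,c\le\min(\hat{p},\hat{q})$ to ``expose'' the factor $\mathbb{P}(\hat{X}=\hat{Y})\int\min(p,q)$ in the \emph{lower} bound, but that pointwise inequality goes the wrong way for a lower bound on the diagonal expectation; the paper sidesteps this by keeping $\mathbb{P}(\hat{X}_I=\hat{Y}_J\mid I=J)$ as a standalone factor rather than reducing it to a density inequality. You will find it cleaner to follow the paper's chain-rule decomposition here rather than trying to pull $\mathbb{P}(\hat{X}=\hat{Y})$ out of the sum via $c$.
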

\begin{proof}
    The proof of Proposition~\ref{prop:asymptotics-bound}, relying on the law of iterated logarithm, can be found in {\if1\jasa the supplementary material\else Appendix~\ref{app:proof-asymptotic-bounds}\fi}.
\end{proof}
Proposition~\ref{prop:asymptotics-bound} implies the following corollary.
\begin{corollary}
    Suppose that the hypotheses of Proposition~\ref{prop:diagonal-rejection-coupling} are verified and that the distributions $p$ and $q$ do not have atoms. Using the same notations, we have
    \begin{align}
        \mathbb{P}(X=Y) \underset{N \to \infty}{\longrightarrow} \mathbb{P}(\hat{X} = \hat{Y}) \int \min(p(x), q(x))\dd{x}.
    \end{align}
\end{corollary}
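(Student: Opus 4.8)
The plan is to read off the corollary from the $N \to \infty$ degeneration of the two-sided bound in Proposition~\ref{prop:asymptotics-bound}. Since $p$ and $q$ are atomless, that proposition hands us deterministic sequences $l_N$ and $u_N$ with $0 < l_N \le \mathbb{P}(X = Y) \le u_N$ for every $N$, each with an explicit asymptotic equivalent. I would verify that both equivalents tend to the common value $\mathbb{P}(\hat{X} = \hat{Y}) \int \min(p(x), q(x)) \dd{x}$ and then close by the squeeze theorem; no further probabilistic analysis of the algorithm is needed, since the substance has already been discharged inside Proposition~\ref{prop:asymptotics-bound}.

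The limit computation reduces to a handful of elementary facts. First, $\frac{\sqrt{2 N \log \log N}}{N} = \sqrt{\tfrac{2 \log\log N}{N}} \to 0$, so $1 + u \cdot \frac{\sqrt{2 N \log \log N}}{N} \to 1$ and $1 - u \cdot \frac{\sqrt{2 N \log \log N}}{N} \to 1$; this is meaningful because $u < \infty$ holds automatically under the hypotheses, domination forcing $0 \le \frac{p(\hat{X})}{\hat{p}(\hat{X})} \le M(p, \hat{p})$ and $0 \le \frac{q(\hat{Y})}{\hat{q}(\hat{Y})} \le M(q, \hat{q})$ almost surely, whence $\mathbb{V}\!\left[\tfrac{p(\hat{X})}{\hat{p}(\hat{X})}\right] \le M(p, \hat{p}) - 1$ and $\mathbb{V}\!\left[\tfrac{q(\hat{Y})}{\hat{q}(\hat{Y})}\right] \le M(q, \hat{q}) - 1$. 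Second, Proposition~\ref{prop:asymptotics-bound} guarantees $p_X > 0$ and $p_Y > 0$, so $\tfrac{1}{p_X}, \tfrac{1}{p_Y} < \infty$ and $\tfrac{N}{N - 1 + 1/p_X} \to 1$, $\tfrac{N}{N - 1 + 1/p_Y} \to 1$. Feeding these into the asymptotic equivalents, both the equivalent of $l_N$ and the equivalent of $u_N$ converge to $\mathbb{P}(\hat{X} = \hat{Y}) \int \min(p(x), q(x)) \dd{x}$, and a sequence asymptotically equivalent to a convergent one shares its limit, so $l_N$ and $u_N$ converge there too; the squeeze theorem then gives the claim. In this split, the lower bound $l_N$ already delivers $\liminf_N \mathbb{P}(X = Y) \ge \mathbb{P}(\hat{X} = \hat{Y}) \int \min(p(x), q(x)) \dd{x}$, and the matching upper bound is precisely what the sequence $u_N$ --- available thanks to the atomlessness of $p$ and $q$ --- is built to supply. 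As a sanity check one may note separately that $(X, Y)$ is, by Proposition~\ref{prop:ensemble-rejection-coupling}, a coupling of $p$ and $q$, so $\mathbb{P}(X = Y) \le \int \min(p(x), q(x)) \dd{x}$ with no limiting argument at all; when $\mathbb{P}(\hat{X} = \hat{Y}) = 1$ this crude bound together with $l_N$ already forces convergence and makes transparent the statement that the ensemble rejection coupler recovers a maximal coupling in the large-$N$ limit.

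The hard part, such as it is, sits elsewhere: the delicate ingredients --- the law of the iterated logarithm controlling the fluctuations of the self-normalising constants $\hat{Z}_X, \hat{Z}_Y$ about $1$, and the ensemble-rejection identity responsible for the $\tfrac{N}{N - 1 + 1/p_\bullet}$ corrections --- are packaged inside Proposition~\ref{prop:asymptotics-bound}. The only point that demands care here is that the lower and upper bracketing sequences genuinely share a limit, which, as above, comes down to $u < \infty$ (automatic from the dominating inequalities) and $p_X, p_Y > 0$ (inherited from the atomlessness of $p, q$ and the support condition of Proposition~\ref{prop:diagonal-rejection-coupling}); once those are noted the corollary follows in a single line.
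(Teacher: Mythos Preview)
Your squeeze argument has a gap: you assert that the asymptotic equivalent of $u_N$ converges to $\mathbb{P}(\hat{X} = \hat{Y}) \int \min(p(x), q(x))\,\dd{x}$, but Proposition~\ref{prop:asymptotics-bound} as stated gives
\[
u_N \;\sim\; \mathbb{P}(\hat{X} = \hat{Y}) \cdot \frac{\int \max(p(x), q(x))\,\dd{x}}{1 - u \cdot \frac{\sqrt{2 N \log \log N}}{N}},
\]
whose limit is $\mathbb{P}(\hat{X} = \hat{Y}) \int \max(p(x), q(x))\,\dd{x}$. Since $\int \max(p, q)\,\dd{x} = 2 - \int \min(p, q)\,\dd{x} > \int \min(p, q)\,\dd{x}$ whenever $p \neq q$, your upper and lower bracketing sequences do \emph{not} share a limit, and the squeeze does not close. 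The fallback you mention, the maximal-coupling inequality $\mathbb{P}(X = Y) \le \int \min(p, q)\,\dd{x}$, does not rescue the argument either: it lacks the prefactor $\mathbb{P}(\hat{X} = \hat{Y})$ and therefore only pins down the limit in the degenerate case $\mathbb{P}(\hat{X} = \hat{Y}) = 1$, which you yourself flag as a special instance.

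If you inspect the proof of Proposition~\ref{prop:asymptotics-bound} in Appendix~\ref{app:proof-asymptotic-bounds}, the two-sided estimate actually derived there for $\mathbb{P}(I = J)$ carries $\int \min(p,q)$ on \emph{both} sides, which is precisely what your squeeze needs; the $\int\max$ appearing in the displayed asymptotic for $u_N$ in the proposition statement is inconsistent with that derivation. To make your argument airtight you must either rederive the upper half of the bound and justify the $\int \min$ version of $u_N$, or supply an independent upper bound on $\mathbb{P}(X=Y)$ that converges to $\mathbb{P}(\hat{X} = \hat{Y}) \int \min(p, q)\,\dd{x}$. As written, you have simply asserted the equality of limits that is in question.
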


\begin{remark}\label{rem:equal_prop}
    At first sight, the result of Proposition \ref{prop:asymptotics-bound} might seem disappointing. Indeed, we are proving that by throwing an infinite amount of computing power at the coupling algorithm, we can lower bound the coupling success probability by a quantity lower than the maximal one. However, we need to remember that the choice of $\hat{\Gamma}$ is in fact arbitrary, and that we can often choose it in such a way that it increases $\mathbb{P}(\hat{X} = \hat{Y})$. Furthermore, we can also use a common proposal $\gamma$ for $p$ and $q$, in which case using the dominating coupling $(X, Y)$, where we have $X \sim \gamma$ and $Y = X$ results in $\mathbb{P}(\hat{X} = \hat{Y}) = 1$, and therefore the asymptotics of Proposition~\ref{prop:ensemble-rejection-coupling} recover the maximal coupling probability. 
\end{remark}

Proposition~\ref{prop:asymptotics-bound} also provides a clear rule of thumb on how to choose $N$ to achieve a desired coupling probability. For simplicity, suppose that we are using the dominating coupling $\hat{\Gamma}$ corresponding to the duplicated distribution $\gamma$. In this case, in order to obtain a coupling success of $\alpha \cdot \int \min(p(x), q(x))\dd{x}$, $0 < \alpha < 1$, we can simply solve $N$ for 
\begin{equation}
    \alpha = \frac{1}{1 + u \cdot \frac{\sqrt{2 N \log \log N}}{N}} \Leftrightarrow u \cdot \frac{\sqrt{2 N \log \log N}}{N} = \frac{1}{\alpha} - 1,
\end{equation}
which approximately corresponds to $N \approx 2\left(\frac{u}{\frac{1}{\alpha} - 1}\right)^2$.

A natural critique that may be raised when using ensemble methods is its computational cost. In fact, in \citet{Deligiannidis2020ensemble}, the authors argue that using an ensemble proposal to do rejection sampling is less efficient than running a classical rejection sampler for longer. However, we discuss in {\if1\jasa the supplementary material\else Appendix~\ref{app:efficiency}\fi} that this may not be the case when the algorithm is run on parallel hardware, at least for small enough acceptance probabilities. The theoretical analysis done in {\if1\jasa the supplementary material\else Appendix~\ref{app:efficiency}\fi} was also confirmed empirically in Section~\ref{subsec:resampling} (see Figure~\ref{fig:runtime_resampling}).

\section{The multidimensional Gaussian case}
\label{sec:gaussian}
In Sections~\ref{sec:rejection-coupling} and \ref{sec:ensemble-rejection}, we provided a general framework for sampling from diagonal couplings of arbitrary probability distributions by using a dominating coupling $\hat{\Gamma}$ as a proposal. We now focus on the important special case of coupling $d$-dimensional Gaussian distributions with different means and covariances $\mathcal{N}(\mu_p, \Sigma_{p})$ and $\mathcal{N}(\mu_q, \Sigma_{q})$. 

The problem of coupling Gaussian distributions has appeared in \citep{eberle2019couplings, Bou2020coupling} and efficient solutions are known for the case when $\Sigma_p = \Sigma_q$. When it is not the case, one can use Thorisson's algorithm~\citep{Thorisson2000Coupling}, or its finite variance version~\citep{Gerber2020discussion} in order to sample from a (sub-)maximal coupling of these. This last solution, however, presents several inconveniences: first, as already discussed, it has a run time variance that increases as the marginals get closer. While this may be acceptable on distributed systems (such as the one considered in \citet{Jacob2020UnbiasedMCMC}), the context of parallel hardware, which we mainly consider here, requires all parallel executions to finish to proceed. Second, when used with a MCMC algorithm, Thorisson's algorithm does not result in strong contractive properties~\citep{Wang2021maximal}, thereby increasing the total run time of the method, as well as the variance of the estimate. The goal of this section is therefore to provide a fast and efficient way to generalize the reflection approach of \citet{Bou2020coupling,eberle2019couplings} to Gaussians with different covariance so as to reduce the variance of the run time, and preserve the efficient geometric properties that they induce.

Our aim is now to use Algorithm~\ref{alg:rejection-coupling} to sample from a coupling of $\mathcal{N}(\mu_p, \Sigma_{p})$ and $\mathcal{N}(\mu_q, \Sigma_{q})$, with different means and covariances, by using a maximal coupling between Gaussians sharing the same covariance matrix~\citep{Bou2020coupling} as the proposal $\hat{\Gamma}$. For the sake of completeness, we reproduce the method of \citet{Bou2020coupling}, called ``reflection-maximal'' coupling in {\if1\jasa the supplementary material\else Appendix~\ref{app:reflection}\fi}. A natural candidate for a diagonal dominating coupling of $\mathcal{N}(\mu_p, \Sigma_{p})$ and $\mathcal{N}(\mu_q, \Sigma_{q})$ now consists in a reflection-maximal coupling of $\mathcal{N}(\mu_p, \hat{\Sigma})$ and $\mathcal{N}(\mu_q, \hat{\Sigma})$. For this coupling to be dominating, the covariance matrix $\hat{\Sigma}$ needs to verify $\hat{\Sigma}^{-1} \preceq \Sigma_p^{-1}$ and that $\hat{\Sigma}^{-1} \preceq \Sigma_q^{-1}$ in the sense of Loewner ordering. This ensures that we have $\mathcal{N}(x; \mu_p, \Sigma_{p}) \leq \frac{\det(2 \pi \, \hat{\Sigma})^{1/2}}{\det(2 \pi \, \Sigma_{p})^{1/2}} \mathcal{N}(x; \mu_p, \hat{\Sigma})$ and $\mathcal{N}(x; \mu_q, \Sigma_{q}) \leq \frac{\det(2 \pi \, \hat{\Sigma})^{1/2}}{\det(2 \pi \, \Sigma_{q})^{1/2}} \mathcal{N}(x; \mu_q, \hat{\Sigma})$. 

The resulting algorithm now preserves diagonality of the coupling, and we have explicit upper bounds for the mean and variance of the number of steps $\tau$ .

\begin{proposition}\label{prop:mvn-coupling}
    Let $\hat{\Sigma}$ be a covariance matrix with $\hat{\Sigma}^{-1} \preceq \Sigma_p^{-1}$ and $\hat{\Sigma}^{-1} \preceq \Sigma_q^{-1}$, and let $\hat{\Gamma}$ be a reflection-maximal coupling of $\mathcal{N}(\mu_p, \hat{\Sigma})$ and $\mathcal{N}(\mu_q, \hat{\Sigma})$. In this case, we have $M(p,\hat{p}) = \frac{\det(2 \pi \, \hat{\Sigma})^{1/2}}{\det(2 \pi \, \Sigma_{p})^{1/2}}$ and $M(q,\hat{q}) = \frac{\det(2 \pi \, \hat{\Sigma})^{1/2}}{\det(2 \pi \, \Sigma_{q})^{1/2}}$ and Algorithm~\ref{alg:rejection-coupling} for $\hat{\Gamma}$, $\mathcal{N}(\mu_p, \Sigma_{p})$, and $\mathcal{N}(\mu_q, \Sigma_{q})$ returns a diagonal coupling of $\mathcal{N}(\mu_p, \Sigma_{p})$. 
\end{proposition}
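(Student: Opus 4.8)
The plan is to verify the two ingredients separately: (i) that the claimed constants are the (smallest) valid dominating constants, so that $(p,q)\preceq(\hat p,\hat q)$ in the sense of Definition~\ref{def:dominating-pair}; and (ii) that the hypotheses of Proposition~\ref{prop:diagonal-rejection-coupling} are met, after which its conclusion gives the statement immediately.

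For (i), I would fix $x\in\mathbb{R}^d$ and write out the ratio of the two Gaussian densities sharing the mean $\mu_p$:
\[
\frac{\mathcal{N}(x;\mu_p,\Sigma_p)}{\mathcal{N}(x;\mu_p,\hat{\Sigma})}
= \frac{\det(2\pi\hat{\Sigma})^{1/2}}{\det(2\pi\Sigma_p)^{1/2}}\,
\exp\!\left(-\tfrac{1}{2}(x-\mu_p)^{\top}\bigl(\Sigma_p^{-1}-\hat{\Sigma}^{-1}\bigr)(x-\mu_p)\right).
\]
Since $\hat{\Sigma}^{-1}\preceq\Sigma_p^{-1}$ in the Loewner order, the quadratic form in the exponent is nonnegative, so the exponential is at most $1$, with equality at $x=\mu_p$; hence the supremum over $x$ of this ratio equals $\det(2\pi\hat{\Sigma})^{1/2}/\det(2\pi\Sigma_p)^{1/2}$, which is therefore the smallest constant $M(p,\hat p)$ for which $p(x)\le M(p,\hat p)\,\hat p(x)$ holds for all $x$. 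It remains to check $1\le M(p,\hat p)<\infty$: finiteness is clear since $\hat\Sigma$ is nonsingular, and monotonicity of the determinant on the positive-definite cone applied to $0\prec\hat{\Sigma}^{-1}\preceq\Sigma_p^{-1}$ gives $\det\hat{\Sigma}^{-1}\le\det\Sigma_p^{-1}$, i.e. $\det\hat{\Sigma}\ge\det\Sigma_p$, so $M(p,\hat p)\ge1$. Running the identical argument with $q$ and $\mu_q$ yields $M(q,\hat q)=\det(2\pi\hat{\Sigma})^{1/2}/\det(2\pi\Sigma_q)^{1/2}\in[1,\infty)$.

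For (ii), the reflection-maximal coupling $\hat{\Gamma}$ has marginals $\hat p=\mathcal{N}(\mu_p,\hat{\Sigma})$ and $\hat q=\mathcal{N}(\mu_q,\hat{\Sigma})$, so part (i) shows $\hat{\Gamma}\succeq p\otimes q$. By its construction $\hat{\Gamma}$ is moreover a maximal coupling of $\hat p$ and $\hat q$, so $\mathbb{P}(\hat X=\hat Y)=1-\|\hat p-\hat q\|_{TV}$, which is strictly positive because two non-degenerate Gaussians with the same covariance have total variation distance strictly below $1$; hence $\hat{\Gamma}$ is a diagonal coupling of $\hat p$ and $\hat q$. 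Finally, all four densities $p,q,\hat p,\hat q$ are strictly positive on $\mathbb{R}^d$, so the implications $\hat p(x)>0\Rightarrow p(x)>0$ and $\hat q(x)>0\Rightarrow q(x)>0$ hold trivially. All hypotheses of Proposition~\ref{prop:diagonal-rejection-coupling} are thus satisfied, and that proposition shows that Algorithm~\ref{alg:rejection-coupling} applied to $\hat{\Gamma}$, $\mathcal{N}(\mu_p,\Sigma_p)$, $\mathcal{N}(\mu_q,\Sigma_q)$ returns a diagonal coupling of $\mathcal{N}(\mu_p,\Sigma_p)$ and $\mathcal{N}(\mu_q,\Sigma_q)$.

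I do not expect any real obstacle: the only points that need a little care are the two consequences of the Loewner ordering (nonnegativity of the quadratic form $(x-\mu_p)^\top(\Sigma_p^{-1}-\hat\Sigma^{-1})(x-\mu_p)$, and determinant monotonicity giving $M\ge1$), together with the fact, inherited from the construction of the reflection-maximal coupling, that it places positive mass on the diagonal; the rest is bookkeeping on top of Propositions~\ref{prop:rejection-coupling} and \ref{prop:diagonal-rejection-coupling}.
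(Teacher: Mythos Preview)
Your proposal is correct and follows essentially the same route as the paper: the paper records the density-ratio bound $\mathcal{N}(x;\mu_p,\Sigma_p)\le \frac{\det(2\pi\hat{\Sigma})^{1/2}}{\det(2\pi\Sigma_p)^{1/2}}\,\mathcal{N}(x;\mu_p,\hat{\Sigma})$ in the text immediately preceding the proposition and then gives a one-line proof (``a direct consequence of Proposition~\ref{prop:rejection-coupling} applied to Gaussian distributions''). If anything, your write-up is more careful than the paper's, since you explicitly verify $M\ge 1$ via determinant monotonicity and invoke Proposition~\ref{prop:diagonal-rejection-coupling} (rather than Proposition~\ref{prop:rejection-coupling}) to obtain the \emph{diagonal} coupling conclusion, which is what the statement actually claims.
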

\begin{proof}
    This is a direct consequence of Proposition~\ref{prop:rejection-coupling} applied to Gaussian distributions.
\end{proof}

\begin{remark}
    When $\Sigma_p = \Sigma_q$, one can take $\hat{\Sigma} = \Sigma_p = \Sigma_q$, so that the coupling proposal is always accepted and the resulting algorithm recovers the reflection-maximal coupling. 
\end{remark}

A covariance matrix $\hat{\Sigma}$ with the properties $\hat{\Sigma}^{-1} \preceq \Sigma_p^{-1}$ and $\hat{\Sigma}^{-1} \preceq \Sigma_q^{-1}$
always exists, as one can take, for example, $\hat{\Sigma}_{\max} = \max([\sigma(\Sigma_{p}), \sigma(\Sigma_{q})]) \, I_d$, where $\sigma(M)$ is the spectrum of the matrix $M$. Given a chosen matrix $\hat{\Sigma}$, we could, in theory, compute the probability of a successful coupling for Algorithm~\ref{alg:rejection-coupling}, but the expression is computationally intractable. Therefore, we resort to finding lower and upper bounds for the probability $\mathbb{P}(X = Y)$. 
\begin{proposition}
\label{prop:proba}
    Under the hypotheses of Propositions~\ref{prop:mvn-coupling}, we have
    \begin{align} %
        0 &<  \frac{\det(2\pi H)^{1/2}}{\det(2\pi \hat{\Sigma})^{1/2}} \left[\exp(-\beta/2) \mathcal{F}(\alpha, H) + \exp(-\gamma/2) \left[1 - \mathcal{F}(\delta, H)\right]\right] \label{eq:lower-bound} \\
        & \leq \mathbb{P}(X=Y) \leq 2F\left(-\frac{1}{2} \norm{\hat{\Sigma}^{-1/2}(\mu_p - \mu_{q})}_2\right) \label{eq:upper-bound},
    \end{align}
    where 
    \begin{equation}
    \begin{split}
        H &= \left(\Sigma_p^{-1}+ \Sigma_q^{-1} - \hat{\Sigma}^{-1}\right)^{-1},\\
        \alpha &= H\, \left(\Sigma_p^{-1}\, \mu_p + \left(\Sigma_q^{-1} - \hat{\Sigma}^{-1}\right)\, \mu_q\right),\quad \beta= \mu_p^{\top} \Sigma_p^{-1} \mu_p + \mu_q^{\top} \left(\Sigma_q^{-1} - \hat{\Sigma}^{-1}\right) \mu_q - \alpha^{\top} H^{-1} \alpha,\\
        \delta &= H\, \left(\Sigma_q^{-1}\, \mu_q + \left(\Sigma_p^{-1} - \hat{\Sigma}^{-1}\right)\, \mu_p \right),\quad
        \gamma = \mu_q^{\top} \Sigma_q^{-1} \mu_q + \mu_p^{\top} \left(\Sigma_p^{-1} - \hat{\Sigma}^{-1}\right) \mu_p - \delta^{\top} H^{-1} \delta, 
    \end{split}
    \end{equation}
    along with
    \begin{align}
        \mathcal{F}(u, V) \coloneqq F\left(\frac{1}{2}\frac{\mu_p^\top \hat{\Sigma}^{-1} \mu_p - \mu_q^\top \hat{\Sigma}^{-1} \mu_q - 2 u^{\top} \hat{\Sigma}^{-1}(\mu_p-\mu_{q})}{\norm{ \left[ V^{1/2} \right]^\top \hat{\Sigma}^{-1}(\mu_p - \mu_{q})}_2}\right),
    \end{align}
    where $F$ is the cumulative distribution function of a standard normal random variable.
\end{proposition}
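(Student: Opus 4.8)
The plan is to build on the chain of inequalities already obtained in the proof of Proposition~\ref{prop:diagonal-rejection-coupling}. Write $M_p = M(p,\hat p)$, $M_q = M(q,\hat q)$, $\hat p = \mathcal N(\cdot\,;\mu_p,\hat\Sigma)$, $\hat q = \mathcal N(\cdot\,;\mu_q,\hat\Sigma)$, and recall that for a reflection-maximal coupling the density of $\hat X$ given $\hat X = \hat Y$ is $c(x) = \min(\hat p(x),\hat q(x))/\mathbb P(\hat X = \hat Y)$ with $\mathbb P(\hat X=\hat Y)=\int \min(\hat p,\hat q)\dd{x}$. Substituting this $c$ into the bound from that proof gives $\mathbb P(X=Y)\ge \int \min(\hat p(x),\hat q(x))\,\min\!\left(\frac{p(x)}{M_p\hat p(x)},\frac{q(x)}{M_q\hat q(x)}\right)\dd{x}$. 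The one nontrivial idea is that under the hypotheses of Proposition~\ref{prop:mvn-coupling} both rejection ratios lie in $(0,1]$, so that the elementary inequality $\min(a,b)\ge ab$ on $(0,1]$ lets us replace the inner minimum by the product $\frac{p(x)}{M_p\hat p(x)}\cdot\frac{q(x)}{M_q\hat q(x)}$. Splitting $\mathbb R^d$ into the half-space $\{\hat p\le\hat q\}$ and its complement, and cancelling $\min(\hat p,\hat q)$ against one of the denominators on each piece, reduces the lower bound to $\int_{\{\hat p\le\hat q\}}\frac{p}{M_p}\cdot\frac{q}{M_q\hat q}\dd{x} \;+\; \int_{\{\hat p>\hat q\}}\frac{p}{M_p\hat p}\cdot\frac{q}{M_q}\dd{x}$.

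Next I would evaluate each integral by completing the square. Using $M_p=\det(2\pi\hat\Sigma)^{1/2}/\det(2\pi\Sigma_p)^{1/2}$ one gets $\frac{p(x)}{M_p}=\det(2\pi\hat\Sigma)^{-1/2}\exp\!\big(-\tfrac12(x-\mu_p)^\top\Sigma_p^{-1}(x-\mu_p)\big)$ and $\frac{q(x)}{M_q\hat q(x)}=\exp\!\big(-\tfrac12(x-\mu_q)^\top(\Sigma_q^{-1}-\hat\Sigma^{-1})(x-\mu_q)\big)$; multiplying these, the combined precision matrix is $\Sigma_p^{-1}+\Sigma_q^{-1}-\hat\Sigma^{-1}=H^{-1}$, which is positive definite since $\Sigma_q^{-1}\succ0$ and $\Sigma_p^{-1}-\hat\Sigma^{-1}\succeq0$, and the linear and constant parts of the exponent collapse exactly to the $\alpha$ and $\beta$ of the statement, so that $\frac{p(x)}{M_p}\cdot\frac{q(x)}{M_q\hat q(x)}=\frac{\det(2\pi H)^{1/2}}{\det(2\pi\hat\Sigma)^{1/2}}\,e^{-\beta/2}\,\mathcal N(x;\alpha,H)$. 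Since $\{\hat p(x)\le\hat q(x)\}$ is exactly the half-space $\{x : 2(\mu_p-\mu_q)^\top\hat\Sigma^{-1}x\le\mu_p^\top\hat\Sigma^{-1}\mu_p-\mu_q^\top\hat\Sigma^{-1}\mu_q\}$, the mass a $\mathcal N(\alpha,H)$ law puts on it is a one-dimensional normal CDF, and after simplification (using $\sqrt{a^\top H a}=2\|[H^{1/2}]^\top\hat\Sigma^{-1}(\mu_p-\mu_q)\|_2$ for $H^{1/2}$ the Cholesky factor) this is precisely $\mathcal F(\alpha,H)$; hence the first integral equals $\frac{\det(2\pi H)^{1/2}}{\det(2\pi\hat\Sigma)^{1/2}}e^{-\beta/2}\mathcal F(\alpha,H)$. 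The second integral is the same computation with $p$ and $q$ swapped: $H$ is unchanged, $(\alpha,\beta)$ become $(\delta,\gamma)$, and the complementary half-space $\{\hat p>\hat q\}$ contributes $1-\mathcal F(\delta,H)$, giving $\frac{\det(2\pi H)^{1/2}}{\det(2\pi\hat\Sigma)^{1/2}}e^{-\gamma/2}[1-\mathcal F(\delta,H)]$. Summing the two yields \eqref{eq:lower-bound}. Strict positivity is then immediate: $\det(2\pi H)^{1/2}>0$, the exponentials are positive, and $\mathcal F(\alpha,H),\mathcal F(\delta,H)\in(0,1)$ whenever $\mu_p\ne\mu_q$ (the argument of $F$ is finite because $\hat\Sigma^{-1}$ and $H^{1/2}$ are invertible); if $\mu_p=\mu_q$ the first integral alone equals $\frac{\det(2\pi H)^{1/2}}{\det(2\pi\hat\Sigma)^{1/2}}e^{-\beta/2}>0$.

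For the upper bound I would invoke the final claim of Proposition~\ref{prop:rejection-coupling}: because $X_2,Y_2$ are drawn from $p\otimes q$, which has no mass on its diagonal, $\mathbb P(X=Y)\le\mathbb P(X_1=Y_1)=\mathbb P(\hat X=\hat Y)$. Since $\hat\Gamma$ is a maximal coupling of $\mathcal N(\mu_p,\hat\Sigma)$ and $\mathcal N(\mu_q,\hat\Sigma)$, $\mathbb P(\hat X=\hat Y)=1-\|\mathcal N(\mu_p,\hat\Sigma)-\mathcal N(\mu_q,\hat\Sigma)\|_{TV}$, and whitening by $\hat\Sigma^{-1/2}$ reduces the total variation distance of two equal-covariance Gaussians to the one-dimensional problem along the direction $\mu_p-\mu_q$, so $\mathbb P(\hat X=\hat Y)=2F\!\left(-\tfrac12\|\hat\Sigma^{-1/2}(\mu_p-\mu_q)\|_2\right)$, which is \eqref{eq:upper-bound}.

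The main obstacle I anticipate is purely organizational: carrying out the completion of the square cleanly enough to identify $H,\alpha,\beta,\delta,\gamma$ and to reduce the Gaussian half-space probabilities to the $\mathcal F$ functions, while correctly tracking the Cholesky normalization $\|[H^{1/2}]^\top\hat\Sigma^{-1}(\mu_p-\mu_q)\|_2$ in the denominator. There is no conceptual difficulty beyond the observation $\min(a,b)\ge ab$ on $(0,1]$, which is exactly what makes the otherwise quadratic region $\{\frac{p}{M_p\hat p}\le\frac{q}{M_q\hat q}\}$ irrelevant and replaces it by the single separating hyperplane $\{\hat p\le\hat q\}$.
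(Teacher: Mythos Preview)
Your proposal is correct and follows essentially the same route as the paper's proof: bound $\mathbb{P}(X=Y)$ below by $\int \min(\hat p,\hat q)\min\!\big(\tfrac{p}{M_p\hat p},\tfrac{q}{M_q\hat q}\big)\dd{x}$, apply $\min(a,b)\ge ab$ for $a,b\in(0,1]$, split along the hyperplane $\{\hat p\le\hat q\}$, and complete the square on each piece to recognise the $\mathcal N(\alpha,H)$ and $\mathcal N(\delta,H)$ densities and the half-space probabilities $\mathcal F(\alpha,H)$ and $1-\mathcal F(\delta,H)$; the upper bound is likewise obtained from $\mathbb P(X=Y)\le\mathbb P(\hat X=\hat Y)$ and the explicit overlap of two equal-covariance Gaussians. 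The only addition is that you treat the strict positivity and the degenerate case $\mu_p=\mu_q$ a bit more explicitly than the paper does.
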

\begin{proof}
    The right hand side directly follows from the fact that the rejection coupling is at best as successful as the dominating diagonal coupling used, which in this case is maximal with success probability given by $2F\left(-\frac{1}{2} \norm{\hat{\Sigma}^{-1/2}(\mu_p - \mu_{q})}_2\right)$.
    The proof of the left hand side can be found in {\if1\jasa the supplementary material\else Appendix~\ref{app:proof-lower-bound}\fi}.
\end{proof}

\begin{remark}
The bound in Proposition~\ref{prop:proba} is tight in the sense that the lower and upper bounds coincide when $\Sigma_p = \Sigma_q = \hat{\Sigma}$. In that case, $H=P$, $\alpha=\mu_p$, $\beta=0$, $\delta=\mu_q$, and $\gamma=0$, resulting in 
    \begin{align}
        \mathcal{F}(\alpha, H) = F\left(-\frac{1}{2} \norm{\hat{\Sigma}^{-1/2}(\mu_p - \mu_{q})}_2\right), \quad \mathcal{F}(\beta, H) = 1 - F\left(-\frac{1}{2} \norm{\hat{\Sigma}^{-1/2}(\mu_p - \mu_{q})}_2\right),
    \end{align} 
    which give \eqref{eq:upper-bound} when substituted into \eqref{eq:lower-bound}.
\end{remark}

As a direct corollary, Proposition~\ref{prop:proba} provides a new class of upper bounds for the total variation distance between two Gaussians via \eqref{eq:tv_connection}. These upper bounds are always less than $1$. It is however not clear how they compare to already known bounds (see, e.g., \citet{Devroye2018total} for a recent arXiv report on the subject), in particular in the high dimensional regime, and comparing these is left for future work.

An important aspect of the algorithm is the suitable selection of $\hat{\Sigma}$ so that the coupling probability is as high as possible. Although $\hat{\Sigma}_{\max} = \max([\sigma(\Sigma_{p}), \sigma(\Sigma_{q})]) \, I_d$ is one possible choice for $\hat{\Sigma}$, it might not be a good choice in practice as it may result in a low acceptance probability. Another possible option would be to optimize the lower bound in \eqref{eq:lower-bound} with respect to $\hat{\Sigma}$, which unfortunately leads to a computationally heavy (and likely non-convex) optimisation problem. However, a suitable criterion for maximising the chance of coupling is to simply maximize $\mathbb{P}(A_X = 1) \, \mathbb{P}(A_Y = 1)$. As functions of $\hat{\Sigma}^{-1}$ the logarithms of marginal acceptance probabilities are both of form $\frac{1}{2} \log \det (\hat{\Sigma}^{-1}) + \text{constant}$, and therefore a suitable optimisation problem for $\hat{\Sigma}^{-1}$ is
\begin{equation}
\label{eq:opt_cov}
    \begin{split}
    &\max \log \det(\hat{\Sigma}^{-1}) \\
            & \hat{\Sigma}^{-1} \preceq \Sigma_p^{-1}, \quad
                \hat{\Sigma}^{-1} \preceq \Sigma_q^{-1}, \quad
                \hat{\Sigma}^{-1} \succeq 0.
    \end{split}
\end{equation}
\begin{proposition}
\label{prop:explicit_cov}
    A solution to \eqref{eq:opt_cov} is given by 
    \begin{align}
        \hat{\Sigma}_{opt} = C\, V\, U \, V^\top C^\top
    \end{align}
    where $C = \Sigma_q^{1/2}$, $V\, D \, V^\top = C^\top \, \Sigma_p^{-1} \, C$ is such that $V$ is orthonormal and $D$ is diagonal, and $U$ is diagonal with $U_{ii} = 1 / \min(1, D_{ii})$ for all $i$.
\end{proposition}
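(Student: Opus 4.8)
The plan is to reduce the semidefinite program \eqref{eq:opt_cov} to a coordinatewise scalar problem by two congruence transformations, and then to invoke Hadamard's inequality. Writing $S = \hat{\Sigma}^{-1}$, the problem is to maximise the (strictly) concave function $\log\det S$ over the convex set $\{S \succeq 0 : S \preceq \Sigma_p^{-1},\ S \preceq \Sigma_q^{-1}\}$, so it suffices to exhibit a feasible point attaining the supremum; such a point is automatically the unique maximiser.

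First I would apply the congruence $T = C^\top S C$ with $C = \Sigma_q^{1/2}$. Since $C$ is invertible this is a bijection of the positive semidefinite cone to itself; using $\Sigma_q^{-1} = C^{-\top} C^{-1}$ it turns $S \preceq \Sigma_q^{-1}$ into $T \preceq I$, and it turns $S \preceq \Sigma_p^{-1}$ into $T \preceq B$ with $B := C^\top \Sigma_p^{-1} C \succ 0$; moreover $\log\det S = \log\det T - \log\det \Sigma_q$, so the additive constant is irrelevant. Next I would diagonalise $B = V D V^\top$ with $V$ orthonormal and $D$ diagonal with strictly positive entries, and set $R = V^\top T V$; this leaves $\log\det$ unchanged and turns the constraints into $0 \preceq R \preceq I$ and $R \preceq D$. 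It then remains to show that the diagonal matrix $R^\star$ with $R^\star_{ii} = \min(1, D_{ii})$ maximises $\log\det R$ over this feasible set.

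Feasibility of $R^\star$ is immediate: it is diagonal with entries in $(0,1]$ lying below the corresponding $D_{ii}$. For optimality, any feasible $R$ satisfies $R_{ii} = e_i^\top R e_i \le \min(e_i^\top I e_i,\ e_i^\top D e_i) = \min(1, D_{ii}) = R^\star_{ii}$ by the constraints $R \preceq I$ and $R \preceq D$, and then Hadamard's inequality $\det R \le \prod_i R_{ii}$ for positive semidefinite $R$ gives $\log\det R \le \sum_i \log R^\star_{ii} = \log\det R^\star$. Undoing the substitutions, $T^\star = V R^\star V^\top = V U^{-1} V^\top$ with $U_{ii} = 1/\min(1, D_{ii})$, hence $S^\star = C^{-\top} V U^{-1} V^\top C^{-1}$ and $\hat{\Sigma}_{opt} = (S^\star)^{-1} = C V U V^\top C^\top$, matching the statement (with $C^\top \Sigma_p^{-1} C = V D V^\top$ as defined there).

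The only delicate part is the bookkeeping of the reduction — verifying that each congruence maps the feasible region of one program bijectively onto that of the next, and tracking how $\log\det$ changes (by an additive constant under $C$, not at all under the orthogonal $V$). This is exactly the classical simultaneous diagonalisation of a pair of symmetric matrices one of which is positive definite, after which Hadamard's inequality closes the argument in one line. One could instead verify optimality via the KKT conditions of \eqref{eq:opt_cov}, but the congruence route avoids dual variables entirely.
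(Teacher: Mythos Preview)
Your proof is correct and follows essentially the same route as the paper: the same two congruence transformations (first by $C=\Sigma_q^{1/2}$, then by the orthogonal $V$ diagonalising $C^\top\Sigma_p^{-1}C$) reduce \eqref{eq:opt_cov} to the diagonal-constraint problem, whose solution is $\bar U_{ii}=\min(1,D_{ii})$. Your use of Hadamard's inequality to justify optimality of the diagonal solution actually fills in a step the paper leaves implicit.
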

\begin{proof}
    The proof of Proposition~\ref{prop:explicit_cov} can be found in {\if1\jasa the supplementary material\else Appendix~\ref{app:optimal_cov}\fi}
\end{proof}

The most expensive operation involved in computing the solution covariance matrix given by Proposition~\ref{prop:explicit_cov} is a spectral decomposition, which typically has the same complexity as computing a Cholesky decomposition. As a consequence, computing $\hat{\Sigma}_{opt}$ only incurs a constant additional cost compared to using $\hat{\Sigma}_{max}$. The benefit of optimising for $\hat{\Sigma}$ as compared to simply using $\hat{\Sigma}_{max}$ is illustrated in {\if1\jasa the supplementary material\else Appendix~\ref{app:gauss}\fi}.

\section{Applications}
\label{sec:applications}
To illustrate the scope and attractiveness of our proposed method, we apply it on a number of examples covering extreme events sampling, particle filtering, Gibbs, and manifold MCMC methods. The code used to generate the examples in this section can be found {\if1\jasa in the supplementary material\else at the following \href{https://github.com/AdrienCorenflos/coupled_rejection_sampling/}{\textbf{address}}\fi}. For consistency, all the examples in this Section were run using a Nvidia GeForce GTX 1080 Ti graphics processing unit with 3584 cores and 11 GB of memory.

\subsection{Coupling tails of Gaussian distributions} \label{sec:gauss_tails}
    The first illustrative example consists in coupled sampling from the tails of Gaussian distributions. This sampling is known to be challenging even without the coupling \citep{Robert1995simulation}. In particular, let $X \sim \mathcal{N}(0, 1)$ and $Y \sim \mathcal{N}(0, 1)$ be Gaussian random variables and let $p_{\mu}(x)$ and $q_{\eta}(y)$ be the densities corresponding to random variables $X_\mu = X \mid X > \mu$ and $Y_\eta = Y \mid Y > \eta$, respectively. When $\mu$ and $\eta$ are small enough, we can sample from $p$ and $q$ by computing the corresponding inverse CDF formulas. However, when $\mu$ and $\eta$ are large ( $\gtrsim 5$) this method becomes unreliable and rejection sampling methods are needed~\citep{Robert1995simulation}. Sampling from a maximal coupling of $p_\mu$ and $q_\eta$ in this case would in fact be even harder than sampling from the marginals independently. Fortunately, we can use our methodology to construct a coupled version of the Gaussian tail rejection sampling algorithm in \citet{Robert1995simulation}, by constructing a maximal coupling of translated exponentials as the dominating coupling. 

    \citet{Robert1995simulation} uses translated exponential distributions $\mathrm{Exp}(x; m,\lambda) = \lambda e^{-\lambda(x - m)} \mathbbm{1}_{x \geq m}$ as the proposal distributions for rejection sampling from the Gaussian tails. Let us now suppose that $\eta > \mu$. As per \citet{Robert1995simulation}, the optimal exponential proposals for $p(x)$ and $q(y)$ are given by $\hat{p}(x) = \mathrm{Exp}(x; \mu,\alpha(\mu))$ and $\hat{q}(x) = \mathrm{Exp}(x; \eta,\alpha(\eta))$, where $\alpha(z) = \frac{z + \sqrt{z^2 + 4}}{2}$ for all $z$ and hence we have 
    \begin{equation}\label{eq:exp_ratio}
        \frac{p(x)}{M(p, \hat{p}) \hat{p}(x)} 
            = e^{-\frac{1}{2}(x - \alpha(\mu))^2}, \quad \frac{q(x)}{M(q, \hat{q}) \hat{q}(x)}
            = e^{-\frac{1}{2}(x - \alpha(\eta))^2}.
    \end{equation}
    Sampling from a maximal coupling of $\hat{p}$ and $\hat{q}$ can then be done explicitly by using the following algorithm (see, e.g, \citep{Thorisson2000Coupling,Lindvall2002lectures}):
\begin{enumerate}
\item Draw $u \sim \mathcal{U}(0,1)$, then
\begin{enumerate}
\item If $u \le \int \min( \hat{p}(x), \hat{q}(x) ) \, dx$: sample $x$ from $c(x) \propto \min( \hat{p}(x), \hat{q}(x) )$ and put $y = x$.
\item Else sample $x$ from $\tilde{p}(x) \propto \hat{p}(x) - \min( \hat{p}(x), \hat{q}(x) )$ and $y$ from $\tilde{q}(y) \propto \hat{q}(y) - \min( \hat{p}(y), \hat{q}(y) )$.
\end{enumerate}
\end{enumerate}
All the probabilities and probability densities above can be evaluated in closed form and the samplings can be done with inverse CDF method. The only difficult parts are the distributions $\tilde{p}(x)$ and $\tilde{q}(y)$ whose inverse CDFs are not available in closed form. However, sampling from these distributions can be implemented with a fast line search method which numerically solves the values of the inverse CDF, see {\if1\jasa the supplementary material\else Appendix~\ref{app:max_coupling_expon}\fi} for more details on the implementation. 
 
    The above method along with the ratios defined in \eqref{eq:exp_ratio} allow us to use Algorithm~\ref{alg:rejection-coupling} or \ref{alg:ensemble-rejection-coupling} to sample from a diagonal coupling of Gaussian tails without ever having to compute the CDF or the inverse CDF, as per the motivation in \citet{Robert1995simulation}. 
    In order to assess the performance of the resulting coupled rejection-sampler, we use Algorithm~\ref{alg:rejection-coupling} to sample $100\,000$ from a coupling of $p_{6}$ and $q_{\eta}$ for different values of $\eta \geq 6$ using the dominating coupling given by the translated exponentials described in \eqref{eq:exp_ratio}. 
    
    In Figure~\ref{fig:gauss_tails_coupling} we report both the maximal coupling probability of $p_{6}$ and $q_{\eta}$, as computed numerically by $\int \min(p_{6}, q_{\eta})$, as well as the empirical coupling probability recovered by our coupled rejection-sampler. In Figure~\ref{fig:gauss_tails_runtime} we also report the resulting run time for different values of the truncation constants. As shown by Figure~\ref{fig:gauss_tails_coupling}, for this specific example, the coupling probability of the samples obtained by Algorithm~\ref{alg:rejection-coupling} is essentially indistinguishable from its theoretical maximum. Furthermore, Figure~\ref{fig:gauss_tails_runtime} shows that the resulting run time of the method is well-behaved, even as $\eta \to 6$, which in the case of using Thorisson's algorithm would have resulted in a run time with infinite variance (see Section~\ref{sec:intro}).
    
    \begin{figure}[!htb]
         \centering
         \begin{subfigure}[b]{0.45\textwidth}
             \centering
             \resizebox{\textwidth}{!}{
             \input{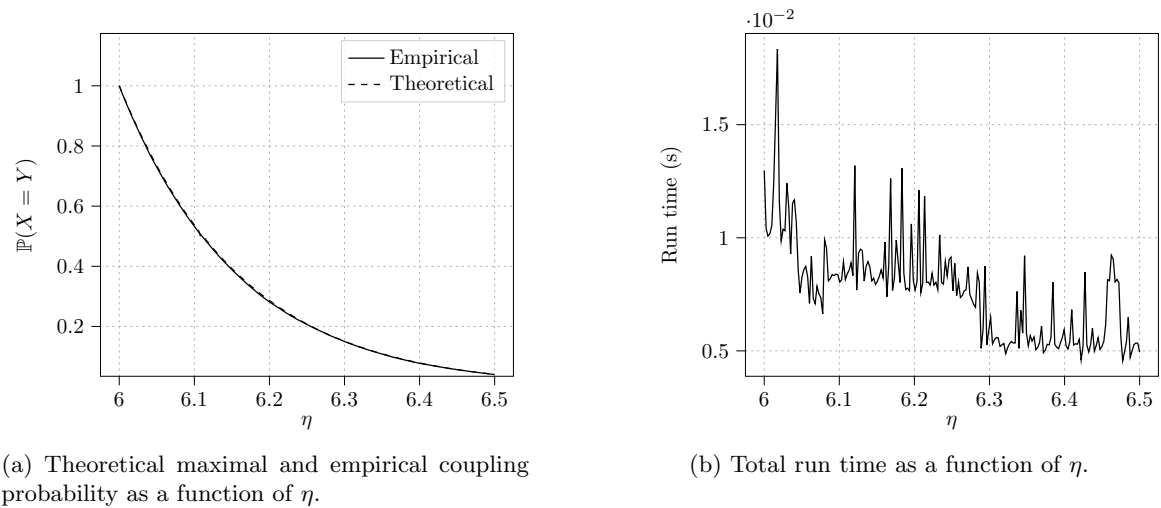}
             }
             \caption{Theoretical maximal and empirical coupling probability as a function of $\eta$.}
             \label{fig:gauss_tails_coupling}
         \end{subfigure}
         \hfill
         \begin{subfigure}[b]{0.45\textwidth}
             \centering
            \resizebox{\textwidth}{!}{
             \begin{tikzpicture}

\begin{axis}[
legend style={fill opacity=0.8, draw opacity=1, text opacity=1, draw=white!80!black},
tick align=outside,
tick pos=left,
x grid style={white!69.0196078431373!black},
xlabel={\(\displaystyle \eta\)},
ylabel={Run time (s)},
ymajorgrids=true,
yminorgrids=true,
xmajorgrids=true,
xminorgrids=true,
xmin=5.97500105, xmax=6.52499995,
xtick style={color=black},
y grid style={white!69.0196078431373!black},
ymin=0.00387736558914185, ymax=0.01903156042099,
ytick style={color=black}
]
\addplot [semithick, black, forget plot]
table {%
6.000001 0.0129694938659668
6.00251355778894 0.010408878326416
6.00502611557789 0.0100724697113037
6.00753867336683 0.0101819038391113
6.01005123115578 0.0105302333831787
6.01256378894472 0.0123376846313477
6.01507634673367 0.0153050422668457
6.01758890452261 0.0183427333831787
6.02010146231156 0.0116064548492432
6.0226140201005 0.00987029075622559
6.02512657788945 0.0103812217712402
6.02763913567839 0.0103130340576172
6.03015169346734 0.012408971786499
6.03266425125628 0.0113537311553955
6.03517680904523 0.00928187370300293
6.03768936683417 0.0115230083465576
6.04020192462312 0.0116724967956543
6.04271448241206 0.010683536529541
6.045227040201 0.00858569145202637
6.04773959798995 0.00756001472473145
6.05025215577889 0.00826382637023926
6.05276471356784 0.00858759880065918
6.05527727135678 0.00873565673828125
6.05778982914573 0.00828671455383301
6.06030238693467 0.00710129737854004
6.06281494472362 0.00918745994567871
6.06532750251256 0.00731253623962402
6.06784006030151 0.00709128379821777
6.07035261809045 0.00785255432128906
6.0728651758794 0.00751519203186035
6.07537773366834 0.00735163688659668
6.07789029145729 0.00663065910339355
6.08040284924623 0.00989723205566406
6.08291540703518 0.00954270362854004
6.08542796482412 0.00809550285339355
6.08794052261307 0.00819087028503418
6.09045308040201 0.00837135314941406
6.09296563819095 0.00832867622375488
6.0954781959799 0.00839114189147949
6.09799075376884 0.00837135314941406
6.10050331155779 0.00804424285888672
6.10301586934673 0.00813913345336914
6.10552842713568 0.00893425941467285
6.10804098492462 0.00815224647521973
6.11055354271357 0.00840449333190918
6.11306610050251 0.00856518745422363
6.11557865829146 0.00888681411743164
6.1180912160804 0.00832939147949219
6.12060377386935 0.0131893157958984
6.12311633165829 0.00768470764160156
6.12562888944724 0.00931715965270996
6.12814144723618 0.00949692726135254
6.13065400502513 0.0094304084777832
6.13316656281407 0.00808906555175781
6.13567912060302 0.00873231887817383
6.13819167839196 0.00894284248352051
6.1407042361809 0.00870156288146973
6.14321679396985 0.00809836387634277
6.14572935175879 0.00821185111999512
6.14824190954774 0.00793719291687012
6.15075446733668 0.00807905197143555
6.15326702512563 0.00829553604125977
6.15577958291457 0.00857973098754883
6.15829214070352 0.00823497772216797
6.16080469849246 0.0098109245300293
6.16331725628141 0.00738954544067383
6.16582981407035 0.00877857208251953
6.1683423718593 0.0126187801361084
6.17085492964824 0.00767302513122559
6.17336748743719 0.00821089744567871
6.17588004522613 0.00989985466003418
6.17839260301508 0.00891852378845215
6.18090516080402 0.00802206993103027
6.18341771859297 0.0130691528320312
6.18593027638191 0.0084376335144043
6.18844283417085 0.00771474838256836
6.1909553919598 0.00777077674865723
6.19346794974874 0.0076594352722168
6.19598050753769 0.0106122493743896
6.19849306532663 0.00813746452331543
6.20100562311558 0.00769138336181641
6.20351818090452 0.0080564022064209
6.20603073869347 0.012101411819458
6.20854329648241 0.0075829029083252
6.21105585427136 0.00797581672668457
6.2135684120603 0.011838436126709
6.21608096984925 0.00802493095397949
6.21859352763819 0.00804424285888672
6.22110608542714 0.00791025161743164
6.22361864321608 0.00842452049255371
6.22613120100502 0.00790691375732422
6.22864375879397 0.00803136825561523
6.23115631658291 0.00770711898803711
6.23366887437186 0.0101275444030762
6.2361814321608 0.00802469253540039
6.23869398994975 0.00795316696166992
6.24120654773869 0.00895428657531738
6.24371910552764 0.00831317901611328
6.24623166331658 0.0090327262878418
6.24874422110553 0.00914263725280762
6.25125677889447 0.00765037536621094
6.25376933668342 0.0088646411895752
6.25628189447236 0.0074460506439209
6.25879445226131 0.00804519653320312
6.26130701005025 0.00735616683959961
6.2638195678392 0.00745058059692383
6.26633212562814 0.00765347480773926
6.26884468341709 0.00769495964050293
6.27135724120603 0.0087130069732666
6.27386979899497 0.00748443603515625
6.27638235678392 0.00727510452270508
6.27889491457286 0.00705957412719727
6.28140747236181 0.0069277286529541
6.28392003015075 0.00844049453735352
6.2864325879397 0.00802016258239746
6.28894514572864 0.0051119327545166
6.29145770351759 0.00594854354858398
6.29397026130653 0.0087437629699707
6.29648281909548 0.00527024269104004
6.29899537688442 0.00588369369506836
6.30150793467337 0.00648283958435059
6.30402049246231 0.00530886650085449
6.30653305025126 0.00548601150512695
6.3090456080402 0.00558948516845703
6.31155816582915 0.00557613372802734
6.31407072361809 0.00519728660583496
6.31658328140704 0.00526666641235352
6.31909583919598 0.00532841682434082
6.32160839698492 0.00487685203552246
6.32412095477387 0.00513577461242676
6.32663351256281 0.00531196594238281
6.32914607035176 0.00541496276855469
6.3316586281407 0.00534963607788086
6.33417118592965 0.00534820556640625
6.33668374371859 0.00762486457824707
6.33919630150754 0.00511336326599121
6.34170885929648 0.00678873062133789
6.34422141708543 0.00578665733337402
6.34673397487437 0.00920677185058594
6.34924653266332 0.00578570365905762
6.35175909045226 0.00522351264953613
6.35427164824121 0.00567126274108887
6.35678420603015 0.00543069839477539
6.3592967638191 0.00561356544494629
6.36180932160804 0.00505828857421875
6.36432187939698 0.00515437126159668
6.36683443718593 0.00538039207458496
6.36934699497487 0.00610089302062988
6.37185955276382 0.00492000579833984
6.37437211055276 0.00502967834472656
6.37688466834171 0.00528883934020996
6.37939722613065 0.00526714324951172
6.3819097839196 0.00559282302856445
6.38442234170854 0.00803446769714355
6.38693489949749 0.0052940845489502
6.38944745728643 0.00516152381896973
6.39196001507538 0.0050969123840332
6.39447257286432 0.00534296035766602
6.39698513065327 0.00558614730834961
6.39949768844221 0.00593018531799316
6.40201024623116 0.0052335262298584
6.4045228040201 0.00508499145507812
6.40703536180905 0.00541067123413086
6.40954791959799 0.00682163238525391
6.41206047738693 0.00525546073913574
6.41457303517588 0.00531578063964844
6.41708559296482 0.00529193878173828
6.41959815075377 0.0055077075958252
6.42211070854271 0.00459694862365723
6.42462326633166 0.00522828102111816
6.4271358241206 0.00847887992858887
6.42964838190955 0.00526857376098633
6.43216093969849 0.00495147705078125
6.43467349748744 0.00527095794677734
6.43718605527638 0.00600028038024902
6.43969861306533 0.00499057769775391
6.44221117085427 0.00524377822875977
6.44472372864322 0.00557899475097656
6.44723628643216 0.00506424903869629
6.44974884422111 0.0051872730255127
6.45226140201005 0.0054166316986084
6.454773959799 0.00616955757141113
6.45728651758794 0.008148193359375
6.45979907537688 0.00809931755065918
6.46231163316583 0.00922870635986328
6.46482419095477 0.00905680656433105
6.46733674874372 0.00804972648620605
6.46984930653266 0.00815725326538086
6.47236186432161 0.00801730155944824
6.47487442211055 0.00565814971923828
6.4773869798995 0.00456619262695312
6.47989953768844 0.00496530532836914
6.48241209547739 0.00539612770080566
6.48492465326633 0.00649690628051758
6.48743721105528 0.00471210479736328
6.48994976884422 0.00500369071960449
6.49246232663317 0.00527667999267578
6.49497488442211 0.00534582138061523
6.49748744221105 0.00533699989318848
6.5 0.00494623184204102
};
\end{axis}

\end{tikzpicture}
             }
             \caption{Total run time as a function of $\eta$. \newline}
             \label{fig:gauss_tails_runtime}
         \end{subfigure}
         \caption{Coupling statistics~\ref{fig:gauss_tails_coupling} and run time~\ref{fig:gauss_tails_runtime} (computed over $100\,000$ samples) of using Algorithm~\ref{alg:rejection-coupling} with shifted exponential proposals to sample a coupling of Gaussian tails $p_6$ and $q_{\eta}$. The average coupling probability observed $\mathbb{P}(X=Y)$ is almost indistinguishable from its theoretical maximal for all values of $\eta$. On the other hand, the total run time for sampling from Algorithm~\ref{alg:rejection-coupling} stays limited as $\eta \to 6$.}
        \label{fig:gauss_tails}
    \end{figure}
    
    It is worth noting that even in the case when we are not worried about the variance of the run time when $\eta \to 6$, using Thorisson's algorithm would have required sampling from the marginals $p_6$ and $q_{\eta}$, which would have required running a rejection sampling algorithm anyway and would have required computing the ratio of the truncated normal PDFs.

\subsection{Coupled parallel resampling}
\label{subsec:resampling}
    Resampling is a key ingredient in sequential Monte Carlo (SMC) methods~\citep[for a recent review, see, e.g.,][]{Chopin2020book} as it allows to avoid path degeneracy by forgetting unsuccessful trajectories. Formally, given an empirical distribution with unnormalized weights $(w^Z_i)_{i=1}^M$ and locations $(Z_i)_{i=1}^M$, resampling methods typically form a random empirical distribution with uniform weights $\frac{1}{M}$ and locations $(Z_{B^Z_i})_{i=1}^N$, where $\mathbb{P}(B^Z_i = j) \propto w^Z_j$ for all $i, j=1, \ldots, M$.
    
    While most operations in SMC are trivially parallelisable, standard techniques for resampling, such as multinomial, systematic, or stratified resampling~\cite{Carpenter1999improved} require communication and prevent full parallelisation of the method. In \citet{Lawrence2016Parallel}, however, the authors show how the resampling operation can be performed by a tweaked parallel rejection sampling algorithm provided that one knows an upper bound to the unnormalized weights $(w^Z_i)_{i=1}^M$.
    
    On the other hand, \citet{Chopin2015particle} discuss how we can construct a maximal coupling of multinomial distributions by first computing the overlap term $\int \min(p, q)$ and then sampling from a mixture representation of the maximal coupling (this coupling, also used in \citet{Jacob2020UnbiasedSmoothing}, is however much older and can found at least in \citet{Lindvall2002lectures,Thorisson2000Coupling}). While sampling from the maximal coupling -- once it has been formed -- can be then done in parallel using \citet{Lawrence2016Parallel}, computing the mass overlap term $\int \min(p, q)$ breaks the parallelisation obtained via \citet{Lawrence2016Parallel} and therefore the algorithm needs to be modified to retain parallelisation.
    
    Our Algorithms~\ref{alg:rejection-coupling} and \ref{alg:ensemble-rejection-coupling} can readily be adapted to provide a parallel coupled version of the rejection sampler in \citet{Lawrence2016Parallel}\footnote{The only subtlety residing in the fact that the initial proposal in \citet{Lawrence2016Parallel} is deterministic.} and we summarize the modification in {\if1\jasa the supplementary material\else Appendix~\ref{app:resampling-algo}\fi}. Moreover, using the ensemble version also helps with reducing both the expectation and variance of the individual rejection sampler run time. Because in the parallel context variable length tasks need to be synchronized~\citep{Lawrence2016Parallel}, using an ensemble rejection method should therefore improve the effective performance of the resulting algorithm. This, however, also comes at the cost of additional computation on $M \ll N$ auxiliary weights, thereby resulting in an increased memory consumption as well as a total parallel computational complexity of $O(\log(M))$ instead of $O(1)$.
    
    In order to illustrate our method on this problem, we consider a similar setting as in \citet{Lawrence2016Parallel}. Let $M=2^{14}$, and $x_i$ and $z_i$ be identically distributed samples from $\mathcal{N}(0, 1)$,  for $i=1, \ldots, M$. For $y=0, 1, 2, 3$, we then take the weight function to be $\omega_y(a) = \mathcal{N}(y - a; 0, 1)$, forming two unnormalized samples $w_{i, y}^X = \omega_y(x_i)$ and $w_{i, y}^Z = \omega_y(z_i)$ which are both upper bounded by $1/\sqrt{2 \pi}$. The experiment consists of applying the coupled rejection resampling algorithm $100$ times for the two fixed sets of normally distributed particles and the different values of $y$ and $N$.
    
    \begin{figure}[!htb]
        \centering
        \resizebox{\textwidth}{!}{
        \begin{tikzpicture}

\definecolor{color0}{rgb}{0.12156862745098,0.466666666666667,0.705882352941177}

\begin{groupplot}[
    group style={group size=2 by 2},
    width=10cm,
    height=7cm,
]
\nextgroupplot[
log basis x={2},
scaled x ticks=manual:{}{\pgfmathparse{#1}},
tick align=outside,
tick pos=left,
title={{\large $y=0$}},
ylabel={\(\displaystyle \mathbb{P}(B_i^X = B_i^Z)\)},
x grid style={white!69.0196078431373!black},
xmin=1, xmax=128,
xmode=log,
xtick style={color=black},
xticklabels={},
y grid style={white!69.0196078431373!black},
ymin=0, ymax=1,
ymajorgrids=true,
yminorgrids=true,
xmajorgrids=true,
xminorgrids=true,
ytick style={color=black}
]
\path [draw=color0, fill=color0, opacity=0.75]
(axis cs:1,0.782426595687866)
--(axis cs:1,0.517170667648315)
--(axis cs:2,0.517170667648315)
--(axis cs:2,0.471414625644684)
--(axis cs:4,0.471414625644684)
--(axis cs:4,0.515281736850739)
--(axis cs:8,0.515281736850739)
--(axis cs:8,0.556296586990356)
--(axis cs:16,0.556296586990356)
--(axis cs:16,0.594582498073578)
--(axis cs:32,0.594582498073578)
--(axis cs:32,0.62507951259613)
--(axis cs:64,0.62507951259613)
--(axis cs:64,0.647225260734558)
--(axis cs:128,0.647225260734558)
--(axis cs:128,0.66004478931427)
--(axis cs:128,0.891605496406555)
--(axis cs:128,0.891605496406555)
--(axis cs:128,0.882220506668091)
--(axis cs:64,0.882220506668091)
--(axis cs:64,0.866004586219788)
--(axis cs:32,0.866004586219788)
--(axis cs:32,0.844777882099152)
--(axis cs:16,0.844777882099152)
--(axis cs:16,0.8136887550354)
--(axis cs:8,0.8136887550354)
--(axis cs:8,0.778778254985809)
--(axis cs:4,0.778778254985809)
--(axis cs:4,0.744224965572357)
--(axis cs:2,0.744224965572357)
--(axis cs:2,0.782426595687866)
--(axis cs:1,0.782426595687866)
--cycle;

\addplot [semithick, color0, const plot mark left]
table {%
1 0.649798631668091
2 0.607819795608521
4 0.647029995918274
8 0.684992671012878
16 0.719680190086365
32 0.745542049407959
64 0.764722883701324
128 0.775825142860413
};
\addplot [thick, black, dashed]
table {%
0.784584097896751 0.783500730991364
163.143760296866 0.783500730991364
};

\nextgroupplot[
log basis x={2},
scaled x ticks=manual:{}{\pgfmathparse{#1}},
scaled y ticks=manual:{}{\pgfmathparse{#1}},
tick align=outside,
tick pos=left,
title={{\large $y=1$}},
x grid style={white!69.0196078431373!black},
xmin=1, xmax=128,
xmode=log,
ymajorgrids=true,
yminorgrids=true,
xmajorgrids=true,
xminorgrids=true,
xtick style={color=black},
xticklabels={},
y grid style={white!69.0196078431373!black},
ymin=0, ymax=1,
ytick style={color=black},
yticklabels={}
]
\path [draw=color0, fill=color0, opacity=0.75]
(axis cs:1,0.625294327735901)
--(axis cs:1,0.347095876932144)
--(axis cs:2,0.347095876932144)
--(axis cs:2,0.309666484594345)
--(axis cs:4,0.309666484594345)
--(axis cs:4,0.348267525434494)
--(axis cs:8,0.348267525434494)
--(axis cs:8,0.391602993011475)
--(axis cs:16,0.391602993011475)
--(axis cs:16,0.428562641143799)
--(axis cs:32,0.428562641143799)
--(axis cs:32,0.462950825691223)
--(axis cs:64,0.462950825691223)
--(axis cs:64,0.488470673561096)
--(axis cs:128,0.488470673561096)
--(axis cs:128,0.505023241043091)
--(axis cs:128,0.77067494392395)
--(axis cs:128,0.77067494392395)
--(axis cs:128,0.757264137268066)
--(axis cs:64,0.757264137268066)
--(axis cs:64,0.734768867492676)
--(axis cs:32,0.734768867492676)
--(axis cs:32,0.705719590187073)
--(axis cs:16,0.705719590187073)
--(axis cs:16,0.668128490447998)
--(axis cs:8,0.668128490447998)
--(axis cs:8,0.627059578895569)
--(axis cs:4,0.627059578895569)
--(axis cs:4,0.584139704704285)
--(axis cs:2,0.584139704704285)
--(axis cs:2,0.625294327735901)
--(axis cs:1,0.625294327735901)
--cycle;

\addplot [semithick, color0, const plot mark left]
table {%
1 0.486195087432861
2 0.446903109550476
4 0.487663567066193
8 0.529865741729736
16 0.567141115665436
32 0.598859846591949
64 0.622867405414581
128 0.637849092483521
};
\addplot [thick, black, dashed]
table {%
0.78458409789675 0.650229811668396
163.143760296865 0.650229811668396
};

\nextgroupplot[
log basis x={2},
tick align=outside,
tick pos=left,
title={{\large $y=2$}},
xlabel={\(\displaystyle N\)},
ylabel={\(\displaystyle \mathbb{P}(B_i^X = B_i^Z)\)},
x grid style={white!69.0196078431373!black},
xmin=1, xmax=128,
xmode=log,
xtick style={color=black},
ymajorgrids=true,
yminorgrids=true,
xmajorgrids=true,
xminorgrids=true,
y grid style={white!69.0196078431373!black},
ymin=0, ymax=1,
ytick style={color=black}
]
\path [draw=color0, fill=color0, opacity=0.75]
(axis cs:1,0.380186587572098)
--(axis cs:1,0.138311952352524)
--(axis cs:2,0.138311952352524)
--(axis cs:2,0.116687960922718)
--(axis cs:4,0.116687960922718)
--(axis cs:4,0.131962776184082)
--(axis cs:8,0.131962776184082)
--(axis cs:8,0.15852589905262)
--(axis cs:16,0.15852589905262)
--(axis cs:16,0.186877444386482)
--(axis cs:32,0.186877444386482)
--(axis cs:32,0.211361780762672)
--(axis cs:64,0.211361780762672)
--(axis cs:64,0.232857227325439)
--(axis cs:128,0.232857227325439)
--(axis cs:128,0.25114893913269)
--(axis cs:128,0.521312117576599)
--(axis cs:128,0.521312117576599)
--(axis cs:128,0.502374768257141)
--(axis cs:64,0.502374768257141)
--(axis cs:64,0.47527402639389)
--(axis cs:32,0.47527402639389)
--(axis cs:32,0.443940460681915)
--(axis cs:16,0.443940460681915)
--(axis cs:16,0.409362316131592)
--(axis cs:8,0.409362316131592)
--(axis cs:8,0.374072372913361)
--(axis cs:4,0.374072372913361)
--(axis cs:4,0.352611392736435)
--(axis cs:2,0.352611392736435)
--(axis cs:2,0.380186587572098)
--(axis cs:1,0.380186587572098)
--cycle;

\addplot [semithick, color0, const plot mark left]
table {%
1 0.259249269962311
2 0.234649673104286
4 0.253017574548721
8 0.283944100141525
16 0.315408945083618
32 0.343317896127701
64 0.36761599779129
128 0.386230528354645
};
\addplot [thick, black, dashed]
table {%
0.784584097896751 0.40905499458313
163.143760296866 0.40905499458313
};

\nextgroupplot[
log basis x={2},
scaled y ticks=manual:{}{\pgfmathparse{#1}},
tick align=outside,
tick pos=left,
title={{\large $y=3$}},
xlabel={\(\displaystyle N\)},
x grid style={white!69.0196078431373!black},
xmin=1, xmax=128,
xmode=log,
ymajorgrids=true,
yminorgrids=true,
xmajorgrids=true,
xminorgrids=true,
xtick style={color=black},
y grid style={white!69.0196078431373!black},
ymin=0, ymax=1,
ytick style={color=black},
yticklabels={}
]
\path [draw=color0, fill=color0, opacity=0.75]
(axis cs:1,0.213148266077042)
--(axis cs:1,0.0315953865647316)
--(axis cs:2,0.0315953865647316)
--(axis cs:2,0.0231214016675949)
--(axis cs:4,0.0231214016675949)
--(axis cs:4,0.0227711796760559)
--(axis cs:8,0.0227711796760559)
--(axis cs:8,0.0295374095439911)
--(axis cs:16,0.0295374095439911)
--(axis cs:16,0.0422850996255875)
--(axis cs:32,0.0422850996255875)
--(axis cs:32,0.0546276494860649)
--(axis cs:64,0.0546276494860649)
--(axis cs:64,0.0676607862114906)
--(axis cs:128,0.0676607862114906)
--(axis cs:128,0.077771469950676)
--(axis cs:128,0.294525921344757)
--(axis cs:128,0.294525921344757)
--(axis cs:128,0.276142925024033)
--(axis cs:64,0.276142925024033)
--(axis cs:64,0.254444628953934)
--(axis cs:32,0.254444628953934)
--(axis cs:32,0.231337949633598)
--(axis cs:16,0.231337949633598)
--(axis cs:16,0.210201352834702)
--(axis cs:8,0.210201352834702)
--(axis cs:8,0.196777164936066)
--(axis cs:4,0.196777164936066)
--(axis cs:4,0.196595400571823)
--(axis cs:2,0.196595400571823)
--(axis cs:2,0.213148266077042)
--(axis cs:1,0.213148266077042)
--cycle;

\addplot [semithick, color0, const plot mark left]
table {%
1 0.122371822595596
2 0.109858401119709
4 0.109774172306061
8 0.119869381189346
16 0.136811524629593
32 0.15453614294529
64 0.171901851892471
128 0.186148688197136
};
\addplot [thick, black, dashed]
table {%
0.78458409789675 0.217771664261818
163.143760296865 0.217771664261818
};
\end{groupplot}

\end{tikzpicture}
        }
        \caption{Coupling probability $\mathbb{P}(B_i^X = B_i^Z)$ for the parallel rejection resampling in Section~\ref{subsec:resampling} as a function of the number of proposals $N$ in Algorithm~\ref{alg:ensemble-rejection-coupling}. The blue full line and shaded area correspond to the average coupling probability of the resampling experiments, as well as the 95\% confidence interval of these. The dashed line corresponds to the theoretical maximal coupling probability.}
        \label{fig:coupling_proba_resampling}
    \end{figure}
    
    \begin{figure}[!htb]
        \centering
        \resizebox{\textwidth}{!}{
        \begin{tikzpicture}

\definecolor{color0}{rgb}{0.12156862745098,0.466666666666667,0.705882352941177}

\begin{groupplot}[
    group style={group size=2 by 2},
    width=10cm,
    height=7cm,
]
\nextgroupplot[
log basis x={2},
log basis y={10},
scaled x ticks=manual:{}{\pgfmathparse{#1}},
tick align=outside,
tick pos=left,
title={{\large $y=0$}},
x grid style={white!69.0196078431373!black},
xmin=1, xmax=128,
ymajorgrids=true,
yminorgrids=true,
xmajorgrids=true,
xminorgrids=true,
xmode=log,
xtick style={color=black},
xticklabels={},
y grid style={white!69.0196078431373!black},
ymin=9.9817427218364e-05, ymax=0.00108276417488752,
ymode=log,
ytick style={color=black}
]
\path [draw=color0, fill=color0, opacity=0.75]
(axis cs:1,0.000144295292557217)
--(axis cs:1,0.000111241468403023)
--(axis cs:2,0.000111241468403023)
--(axis cs:2,0.000122661411296576)
--(axis cs:4,0.000122661411296576)
--(axis cs:4,0.000111487010144629)
--(axis cs:8,0.000111487010144629)
--(axis cs:8,0.000111825349449646)
--(axis cs:16,0.000111825349449646)
--(axis cs:16,0.000115770519187208)
--(axis cs:32,0.000115770519187208)
--(axis cs:32,0.000132314409711398)
--(axis cs:64,0.000132314409711398)
--(axis cs:64,0.000152684850036167)
--(axis cs:128,0.000152684850036167)
--(axis cs:128,0.00020219900761731)
--(axis cs:256,0.00020219900761731)
--(axis cs:256,0.000295975740300491)
--(axis cs:512,0.000295975740300491)
--(axis cs:512,0.000393445952795446)
--(axis cs:512,0.000633380259387195)
--(axis cs:512,0.000633380259387195)
--(axis cs:512,0.000408564781537279)
--(axis cs:256,0.000408564781537279)
--(axis cs:256,0.000277489045402035)
--(axis cs:128,0.000277489045402035)
--(axis cs:128,0.000198874025954865)
--(axis cs:64,0.000198874025954865)
--(axis cs:64,0.000168523183674552)
--(axis cs:32,0.000168523183674552)
--(axis cs:32,0.000136490489239804)
--(axis cs:16,0.000136490489239804)
--(axis cs:16,0.000130125670693815)
--(axis cs:8,0.000130125670693815)
--(axis cs:8,0.000131702196085826)
--(axis cs:4,0.000131702196085826)
--(axis cs:4,0.000151708634803072)
--(axis cs:2,0.000151708634803072)
--(axis cs:2,0.000144295292557217)
--(axis cs:1,0.000144295292557217)
--cycle;

\addplot [semithick, color0, const plot mark left]
table {%
1 0.000127768376842141
2 0.000137185023049824
4 0.000121594603115227
8 0.000120975506433751
16 0.000126130500575528
32 0.000150418796692975
64 0.000175779437995516
128 0.000239844026509672
256 0.000352270260918885
512 0.000513413106091321
};

\nextgroupplot[
log basis x={2},
log basis y={10},
scaled x ticks=manual:{}{\pgfmathparse{#1}},
scaled y ticks=manual:{}{\pgfmathparse{#1}},
tick align=outside,
tick pos=left,
title={{\large $y=1$}},
x grid style={white!69.0196078431373!black},
xmin=1, xmax=128,
xmode=log,
ymajorgrids=true,
yminorgrids=true,
xmajorgrids=true,
xminorgrids=true,
xtick style={color=black},
xticklabels={},
y grid style={white!69.0196078431373!black},
ymin=9.9817427218364e-05, ymax=0.00108276417488752,
ymode=log,
ytick style={color=black},
yticklabels={}
]
\path [draw=color0, fill=color0, opacity=0.75]
(axis cs:1,0.000139749405207112)
--(axis cs:1,0.00012084480113117)
--(axis cs:2,0.00012084480113117)
--(axis cs:2,0.000141240467200987)
--(axis cs:4,0.000141240467200987)
--(axis cs:4,0.000122172627015971)
--(axis cs:8,0.000122172627015971)
--(axis cs:8,0.000120544304081704)
--(axis cs:16,0.000120544304081704)
--(axis cs:16,0.000123509627883323)
--(axis cs:32,0.000123509627883323)
--(axis cs:32,0.000144959834869951)
--(axis cs:64,0.000144959834869951)
--(axis cs:64,0.000166956349858083)
--(axis cs:128,0.000166956349858083)
--(axis cs:128,0.000206928802072071)
--(axis cs:256,0.000206928802072071)
--(axis cs:256,0.000328861497109756)
--(axis cs:512,0.000328861497109756)
--(axis cs:512,0.000425696140155196)
--(axis cs:512,0.000711430911906064)
--(axis cs:512,0.000711430911906064)
--(axis cs:512,0.000397865689592436)
--(axis cs:256,0.000397865689592436)
--(axis cs:256,0.000267764116870239)
--(axis cs:128,0.000267764116870239)
--(axis cs:128,0.000198684792849235)
--(axis cs:64,0.000198684792849235)
--(axis cs:64,0.000176503090187907)
--(axis cs:32,0.000176503090187907)
--(axis cs:32,0.00014775940508116)
--(axis cs:16,0.00014775940508116)
--(axis cs:16,0.000143756580655463)
--(axis cs:8,0.000143756580655463)
--(axis cs:8,0.000144375269883312)
--(axis cs:4,0.000144375269883312)
--(axis cs:4,0.000166852420079522)
--(axis cs:2,0.000166852420079522)
--(axis cs:2,0.000139749405207112)
--(axis cs:1,0.000139749405207112)
--cycle;

\addplot [semithick, color0, const plot mark left]
table {%
1 0.00013029710680712
2 0.000154046443640254
4 0.000133273948449641
8 0.000132150438730605
16 0.000135634516482241
32 0.000160731462528929
64 0.000182820571353659
128 0.000237346452195197
256 0.000363363593351096
512 0.00056856352603063
};

\nextgroupplot[
log basis x={2},
log basis y={10},
tick align=outside,
tick pos=left,
title={{\large $y=2$}},
xlabel={\(\displaystyle N\)},
x grid style={white!69.0196078431373!black},
xmin=1, xmax=128,
ymajorgrids=true,
yminorgrids=true,
xmajorgrids=true,
xminorgrids=true,
xmode=log,
xtick style={color=black},
y grid style={white!69.0196078431373!black},
ymin=9.9817427218364e-05, ymax=0.00108276417488752,
ymode=log,
ytick style={color=black}
]
\path [draw=color0, fill=color0, opacity=0.75]
(axis cs:1,0.00023681802849751)
--(axis cs:1,0.000191001410712488)
--(axis cs:2,0.000191001410712488)
--(axis cs:2,0.000218106200918555)
--(axis cs:4,0.000218106200918555)
--(axis cs:4,0.000168534927070141)
--(axis cs:8,0.000168534927070141)
--(axis cs:8,0.000163623277330771)
--(axis cs:16,0.000163623277330771)
--(axis cs:16,0.000156846581376158)
--(axis cs:32,0.000156846581376158)
--(axis cs:32,0.000174785935087129)
--(axis cs:64,0.000174785935087129)
--(axis cs:64,0.000198051420738921)
--(axis cs:128,0.000198051420738921)
--(axis cs:128,0.000252929836278781)
--(axis cs:256,0.000252929836278781)
--(axis cs:256,0.000326271634548903)
--(axis cs:512,0.000326271634548903)
--(axis cs:512,0.000579029030632228)
--(axis cs:512,0.000718819617759436)
--(axis cs:512,0.000718819617759436)
--(axis cs:512,0.000459457223769277)
--(axis cs:256,0.000459457223769277)
--(axis cs:256,0.000319429818773642)
--(axis cs:128,0.000319429818773642)
--(axis cs:128,0.000239924731431529)
--(axis cs:64,0.000239924731431529)
--(axis cs:64,0.000221613678149879)
--(axis cs:32,0.000221613678149879)
--(axis cs:32,0.00019525493553374)
--(axis cs:16,0.00019525493553374)
--(axis cs:16,0.000198599183931947)
--(axis cs:8,0.000198599183931947)
--(axis cs:8,0.000233911792747676)
--(axis cs:4,0.000233911792747676)
--(axis cs:4,0.000287614006083459)
--(axis cs:2,0.000287614006083459)
--(axis cs:2,0.00023681802849751)
--(axis cs:1,0.00023681802849751)
--cycle;

\addplot [semithick, color0, const plot mark left]
table {%
1 0.000213909719604999
2 0.000252860103501007
4 0.000201223359908909
8 0.000181111230631359
16 0.000176050758454949
32 0.000198199806618504
64 0.000218988076085225
128 0.000286179827526212
256 0.00039286442915909
512 0.000648924324195832
};

\nextgroupplot[
log basis x={2},
log basis y={10},
scaled y ticks=manual:{}{\pgfmathparse{#1}},
tick align=outside,
tick pos=left,
title={{\large $y=3$}},
xlabel={\(\displaystyle N\)},
x grid style={white!69.0196078431373!black},
xmin=1, xmax=128,
xmode=log,
xtick style={color=black},
y grid style={white!69.0196078431373!black},
ymin=9.9817427218364e-05, ymax=0.00108276417488752,
ymode=log,
ymajorgrids=true,
yminorgrids=true,
xmajorgrids=true,
xminorgrids=true,
ytick style={color=black},
yticklabels={}
]
\path [draw=color0, fill=color0, opacity=0.75]
(axis cs:1,0.000663938233628869)
--(axis cs:1,0.000506061129271984)
--(axis cs:2,0.000506061129271984)
--(axis cs:2,0.000577710918150842)
--(axis cs:4,0.000577710918150842)
--(axis cs:4,0.000413532252423465)
--(axis cs:8,0.000413532252423465)
--(axis cs:8,0.000356609962182119)
--(axis cs:16,0.000356609962182119)
--(axis cs:16,0.000300720130326226)
--(axis cs:32,0.000300720130326226)
--(axis cs:32,0.000306800531689078)
--(axis cs:64,0.000306800531689078)
--(axis cs:64,0.000302208791254088)
--(axis cs:128,0.000302208791254088)
--(axis cs:128,0.000364859966794029)
--(axis cs:256,0.000364859966794029)
--(axis cs:256,0.000505393662024289)
--(axis cs:512,0.000505393662024289)
--(axis cs:512,0.000722653465345502)
--(axis cs:512,0.000971568748354912)
--(axis cs:512,0.000971568748354912)
--(axis cs:512,0.00064082135213539)
--(axis cs:256,0.00064082135213539)
--(axis cs:256,0.000478948903037235)
--(axis cs:128,0.000478948903037235)
--(axis cs:128,0.000398017262341455)
--(axis cs:64,0.000398017262341455)
--(axis cs:64,0.000396533694583923)
--(axis cs:32,0.000396533694583923)
--(axis cs:32,0.000391799811040983)
--(axis cs:16,0.000391799811040983)
--(axis cs:16,0.000473028951091692)
--(axis cs:8,0.000473028951091692)
--(axis cs:8,0.000567047507502139)
--(axis cs:4,0.000567047507502139)
--(axis cs:4,0.000801839283667505)
--(axis cs:2,0.000801839283667505)
--(axis cs:2,0.000663938233628869)
--(axis cs:1,0.000663938233628869)
--cycle;

\addplot [semithick, color0, const plot mark left]
table {%
1 0.000584999681450427
2 0.000689775100909173
4 0.000490289879962802
8 0.000414819456636906
16 0.000346259970683604
32 0.0003516671131365
64 0.000350113026797771
128 0.000421904434915632
256 0.00057310750707984
512 0.000847111106850207
};
\end{groupplot}

\end{tikzpicture}
        }
        \caption{Run time (in seconds) for the parallel rejection resampling in Section~\ref{subsec:resampling} as a function of the number of proposals $N$ in Algorithm~\ref{alg:ensemble-rejection-coupling}. The blue full line and shaded area correspond to the average run time of the resampling experiments, as well as its 95\% confidence interval.}
        \label{fig:runtime_resampling}
    \end{figure}
    
   In Figure~\ref{fig:coupling_proba_resampling} and Figure~\ref{fig:runtime_resampling} we respectively report the empirical coupling success $\mathbb{P}(B_i^X = B_i^Z)$ and the run time of running {\if1\jasa the coupled rejection-based resampling algorithm\else Appendix~\ref{alg:coupled-rejection-resampling}\fi} as $y$ changes (and therefore the variance of the weights increases) for different ensemble proposal sizes $N$. In Figure~\ref{fig:coupling_proba_resampling}, the empirical coupling probability is compared with the theoretical maximal coupling probability as given by $\sum_{i}\min(W_{i, y}^X, W_{i, y}^Z)$, where $W_{i, y}^X = \frac{w_{i, y}^X}{\sum_{j=1}^M w_{j, y}^X}$ and $W_{i, y}^Z \frac{w_{i, y}^Z}{\sum_{j=1}^M w_{j, y}^Z}$ are the self-normalized versions of the weights $w_{i, y}^{X}$ and $w_{i, y}^{Z}$. The experiment was run $100$ times.
    
   As can be seen in Figure~\ref{fig:coupling_proba_resampling}, the empirical coupling probability seems to converge to the theoretical one for the large number of proposals $N$ regime. This behavior is in agreement with the asymptotics of Proposition~\ref{prop:asymptotics-bound} as the dominating coupling is here given by the trivial coupling $\hat{X} = \hat{Y} \sim \mathcal{U}(\{1, \ldots, N\})$, so that Remark~\ref{rem:equal_prop} applies. As shown by the average run time reported in Figure~\ref{fig:runtime_resampling} increasing the ensemble size in Algorithm~\ref{alg:ensemble-rejection-coupling}, at the same time, lowers the total run time of the resampling routine for higher weights variance, corresponding to $y=2$ and $y=3$. For a discussion on the computational benefits of using ensemble proposals in the GPU context we refer the reader to {\if1\jasa the supplementary material\else Appendix~\ref{app:efficiency}\fi}.
   
   The run time results reported in Figure~\ref{fig:runtime_resampling} suggest that using an ensemble rejection method in the parallel context may lead to broader applicability of parallel rejection resampling than the recommendation in Section 3.4 of \citet{Lawrence2016Parallel}, in particular in the high weight variance regime. More precisely, when the resampling routine is used as part of a particle filtering loop, it is highly likely that the number of threads necessary to compute the weights is fairly large. This means that, once these have been computed, the number of threads available at resampling time is typically higher than the total number of particles. Allowing to improve the run time by leveraging the idle compute capacity therefore seems valuable in most cases, and particularly so in the coupling context, due to its additional benefit of increasing the coupling probability.

\subsection{Coupled Gibbs sampling}
\label{subsec:Gibbs}
    Let us now consider the following density $p(x, y) \propto \exp\left(-\frac{1}{2}\left(x^{\top}x \cdot y^\top y + x^{\top}x + y^\top y\right)\right)$, $x, y \in \mathbb{R}^d$. 
    We can express the conditional densities $p_{x\mid y}$ and $p_{y \mid x}$ in the following way:
    \begin{align}
        p_{x \mid y}(x \mid y) &= \mathcal{N}(x; 0, P_y), \quad P_y = \frac{1}{1 + y^{\top} y} I_d, \\
        p_{y \mid x}(y \mid x) &= \mathcal{N}(y; 0, P_x), \quad P_x = \frac{1}{1 + x^{\top} x} I_d.
    \end{align}
    This decomposition ensures that we can sample from $p(x, y)$ using Gibbs sampling~\citep[see, e.g.,][Ch. 4]{Robert2004} by sampling from $p_{x \mid y}$ and $p_{y \mid x}$ alternatively. As described in \citet{Jacob2020UnbiasedMCMC}, to couple two resulting Gibbs chains, we only need to couple their conditional transition densities. That is, if we have two states $(x, y)$ and $(x', y')$, we need the ability to sample from a coupling of $p_{x \mid y}$ and $p_{x \mid y'}$ and from a coupling of $p_{y \mid x}$ and $p_{y \mid x'}$. Because the densities have different covariance matrices for different states, we cannot use the reflection-maximal coupling of~\cite{Bou2020coupling} described {\if1\jasa the supplementary material\else Appendix~\ref{app:reflection}\fi}, but,  instead, we need to resort to either Thorisson's algorithm~\cite{Thorisson2000Coupling} in its modified form~\cite{Gerber2020discussion} (so as to have finite run time variance) or to our proposed method. For the sake of completeness, we reproduce the modified version of Thorisson's algorithm in {\if1\jasa the supplementary material\else Appendix~\ref{app:thorisson}\fi}.
    
    \begin{figure}[!htb]
        \centering
        \resizebox{\textwidth}{!}{
        \input{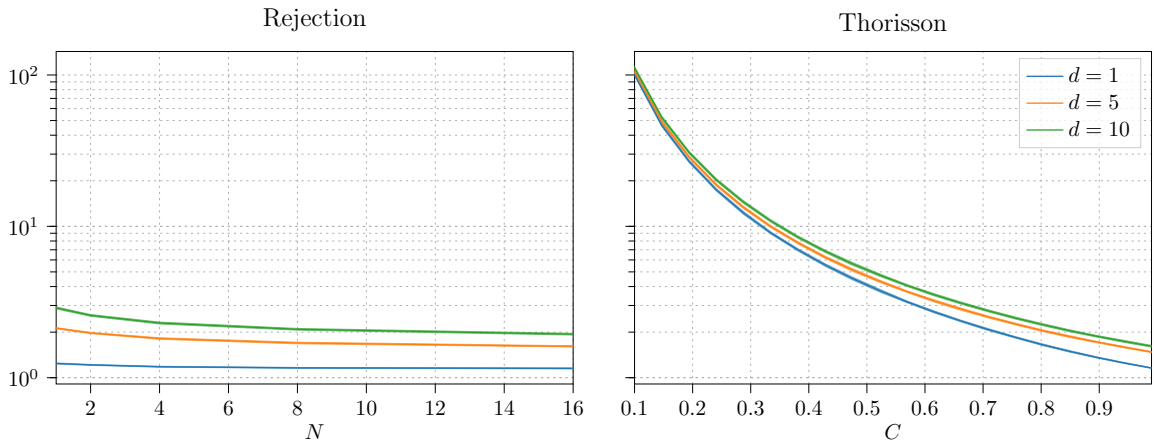}
        }
        \caption{Average coupling time (computed over $50$ experiments) of the coupled Gibbs chains for our rejection method and Thorisson's modified algorithm. The ``Thorisson'' x-axis represents the suboptimality parameter $C$ in the method (see {\if1\jasa the supplementary material\else Appendix~\ref{app:thorisson}\fi}).}
        \label{fig:gibbs_meeting_time}
    \end{figure}
    
    In Figure~\ref{fig:gibbs_meeting_time}, we report the average meeting time~\cite{Jacob2020UnbiasedMCMC} of $10\,000$ coupled chains and, in Figure~\ref{fig:gibbs_runtime}, the average run time of an experiment, as well of the $95$\% confidence of these (we repeated the experiment $50$ times to measure the uncertainty in these quantities). Our results were generated by using the dominating covariance matrix $Q_{max}$, which in this case actually corresponds to the optimal matrix $Q_{opt}$. 
    
    \begin{figure}[!htb]
        \centering
        \resizebox{\textwidth}{!}{
        \input{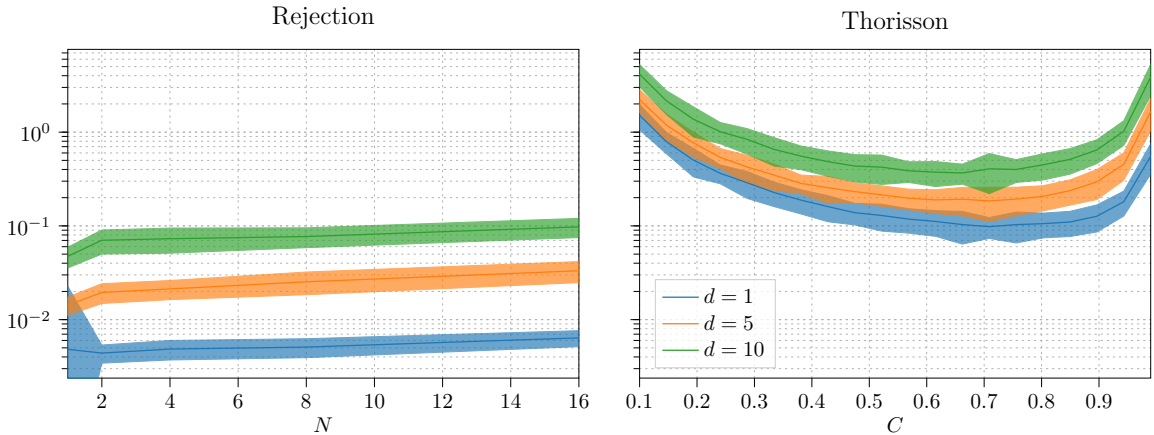}
        }
        \caption{Average time in seconds (computed over $50$ experiments) taken for the coupled Gibbs chains to meet for our rejection method and Thorisson's modified algorithm. The ``Thorisson'' x-axis represents the suboptimality parameter $C$ in the method (see {\if1\jasa the supplementary material\else Appendix~\ref{app:thorisson}\fi}).}
        \label{fig:gibbs_runtime}
    \end{figure}
    
    We can see that our proposed method outperforms the modified version of Thorisson's algorithm in run time (Figure~\ref{fig:gibbs_runtime}) for all sizes of ensembles and almost all values of $C$, on the other hand, Thorisson's algorithm results in slightly faster coupling times for $C \approx 1$, at the cost of a longer running algorithm than our method (due to the increase in variance in the run time). We believe, similarly to the analysis in \citet{Wang2021maximal} that the mostly faster contraction of the Gibbs chains when using our method is explained by the fact that our coupling preserves the reflection properties the reflection-maximal coupling.
        
\subsection{Coupled manifold Markov chain Monte Carlo}
    \label{subsec:mmala}
    Adaptation for the proposal distribution~\citep[for a review see, e.g.,][]{andrieu2008tutorial} typically improves the convergence of MCMC algorithm by adapting the sampling mechanism to the target at hand, either globally, or locally. Coupling two adaptive or pre-conditioned algorithms however requires to be able to couple MCMC methods with different proposal distributions. In the literature, only two methods have been proposed to do this in the general case: coupling proposals using Thorisson's algorithm, or sampling from a maximal coupling of Metropolis--Hastings kernels~\citep{Wang2021maximal}.  
    
    We now consider the simplified manifold Metropolis-adjusted Langevin algorithm (mMALA) applied to the logistic regression example as described in \citet{Girolami2011mMCMC}. Specifically, we consider the ``heart disease'' binary regression dataset~\citep{detrano1989international}, and the same setup as \citet{Girolami2011mMCMC} apart from the starting point of the Markov chains, which we take to be randomly distributed according to $\mathcal{N}(0, 1/4^2)$. Because Riemannian manifold MCMC methods rely on sampling Gaussian distributions, the parameters of which depend on the Fisher information matrix computed at the current state of the chain, applying the reflection-maximal coupling method directly is impossible. This means that we can only hope to couple the two chains by using (i) Thorisson's (modified) algorithm to couple proposals, (ii) the ``full-kernel coupling'' with independent or reflected marginals in \citet[][Algorithms 3 and 4]{Wang2021maximal}, or by coupling the proposals using the coupled rejection sampler introduced in this article~\footnote{It is also worth noting that the empirical evaluation of the different methods proposed in \citet{Wang2021maximal} suffers from an implementation bug, and that, contrarily to what was reported in the article, Algorithm 4 results in faster coupling times than Algorithm 5 in high dimensional problems. This further justifies comparing with Algorithm 4 only.}. In accordance to \citet{Wang2021maximal} and the rest of our experiments, using Thorisson's algorithm underperformed the other alternatives, so we do not report the results below.
    
    We run two mMALA chains with coupled proposals, using either the ``full-kernel coupling with reflection residuals'' Metropolis-Hastings method of \citet[][Algorithm 4]{Wang2021maximal}, or the ensemble coupled rejection sampler with reflection-maximal proposals using the optimal covariance matrix described in Section~\ref{sec:gaussian}. 
    
    \begin{table}[!htb]
        \centering
        \begin{tabular}{llrrrrr}
\toprule
{} & $N$ &  mean &  std &  5\% &  50\% &  95\% \\
\midrule
This paper & 1  &  11.0 & 15.4 & 1 &  5 & 41 \\
     & 4  &   7.7 & 10.8 & 1 &  4 & 28 \\
     & 16 &   6.3 &  8.4 & 1 &  4 & 21 \\
     & 64 &   5.8 &  7.7 & 1 &  3 & 19 \\
\midrule
\citet{Wang2021maximal} &    &   7.5 &  8.5 & 1 &  5 & 24 \\
\bottomrule
\end{tabular}

        \caption{Coupling time statistics of the coupled ensemble rejection sampler on the simplified manifold MALA algorithm of \citep{Girolami2011mMCMC} for different values of $N$ and Algorithm 4 in \citet{Wang2021maximal}. The percentages represent empirical quantiles of the distribution.}
        \label{tab:mmala_coupling}
    \end{table}
        
    \begin{table}[!htb]
        \centering
        \begin{tabular}{llrrrrr}
\toprule
{} & $N$ &    mean &     std &      5\% &     50\% &     95\% \\
\midrule
This paper &1  & 9.3e-03 & 1.3e-02 & 1.0e-03 & 4.4e-03 & 3.5e-02 \\
{} &4  & 7.3e-03 & 1.0e-02 & 1.1e-03 & 3.8e-03 & 2.6e-02 \\
{} &16 & 6.1e-03 & 8.0e-03 & 1.1e-03 & 3.7e-03 & 2.0e-02 \\
{} &64 & 5.8e-03 & 7.4e-03 & 1.1e-03 & 3.5e-03 & 1.9e-02 \\
\midrule
\citet{Wang2021maximal} &   & 1.5e-02 & 1.9e-02 & 6.8e-04 & 8.7e-03 & 5.2e-02 \\
\bottomrule
\end{tabular}

        \caption{Run time statistics of the coupled ensemble rejection sampler on the simplified manifold MALA algorithm of \citep{Girolami2011mMCMC} for different values of $N$ and Algorithm 4 in \citet{Wang2021maximal}. The percentages represent empirical quantiles of the distribution.}
        \label{tab:mmala_runtime}
    \end{table}
    
    In Table~\ref{tab:mmala_coupling} and Table~\ref{tab:mmala_runtime}, we report the resulting meeting times and run times, respectively, over $100\,000$ experiments for the ensemble rejection sampler proposal coupling with different $N$'s as well as the ``full-kernel coupling with reflection residuals''~\citep[][Algorithm 4]{Wang2021maximal}. On average, the coupled rejection sampler method runs faster and results in smaller coupling times Algorithm 4 in \citet{Wang2021maximal} method as soon as $N>4$. As we increase $N$, it also starts exhibiting smaller variance for both quantities, highlighting its attractiveness for implementing coupled chains in parallel environments. This improvement is the same phenomenon as the one discussed in~\citet{Wang2021maximal}: using high quality proposals (when possible) may sometimes beat the full-kernel coupling, even when this one has reflected residuals.

\section{Conclusion and discussion}
\label{sec:discussion}
In this article, we have introduced a novel coupled-rejection algorithm and an ensemble version of it for sampling from a target coupling using an acceptance-rejection scheme on samples from a dominating coupling. We have derived upper and lower bounds for their coupling success probabilities as well as the expectation and variance of their number of steps. In particular, our method provably preserves positive coupling probabilities of dominating coupling proposals and, in the ensemble case, asymptotically recovers the maximal coupling. 

An important special case of our method consists in the multivariate Gaussian case where the algorithm allows to sample from coupled Gaussian distributions with different means and covariances with a low (finite) run time variance. This also allowed us to derive a family of upper bounds to the total variation distance between Gaussians. We believe that these bounds can be further improved by a more careful study of the quantities involved, possibly leading to even more useful bounds. 

The proposed method is, however, not limited to the Gaussian case, and, in fact, if an extension of the reflection-maximal coupling of~\citet{Bou2020coupling} to Gaussians with different means and covariances was derived, it could be used as a proposal in our algorithm. This would in turn simply expand and improve the applications of our proposed method for sampling diagonal couplings of sub-exponential distributions with high success probability.

We have also demonstrated the empirical usefulness of our method in coupling rare events, parallel resampling in particle filtering, Gibbs samplers, as well as Riemannian manifold MCMC methods. Possible future research include applications such as fully parallel unbiased particle smoothing~\citep{Jacob2020UnbiasedSmoothing}, and possible extensions such as coupled empirical rejection sampling or coupling of state-space models for which the ensemble rejection sampling method was originally introduced~\citep{Deligiannidis2020ensemble}. An interesting conclusion of our experimental evaluation, analysed in {\if1\jasa the supplementary material\else Appendix~\ref{app:efficiency}\fi}, is that using ensemble proposals not only increases the coupling probability, as predicted by the theory, but also improves upon the total run time of the method when using parallel hardware such as a GPU. This is the consequence of two compounding factor: first, a computational improvement coming from reducing the variance in the run time of the rejection sampler (Section~\ref{subsec:resampling}, {\if1\jasa the supplementary material\else Appendix~\ref{app:efficiency}\fi}), second, a decrease in the expected meeting time, reducing the total number of operations run before two chains meet (Sections~\ref{subsec:Gibbs} and \ref{subsec:mmala}).

Another direction of work is concerned with the special case of Gaussian distributions. In particular, finding computationally advantageous alternatives to $Q_{opt}$, or even computing a covariance matrix maximising the actual coupling probability instead of the surrogate objective introduced in Section~\ref{sec:gaussian} would result in more successful and computationally feasible coupling. Another question consists in computing tighter bounds to the coupling probability of our algorithm, automatically resulting in better upper bounds for the TV norm between Gaussians. 

It is worth noting that while the focus of this article has been with implementing couplings distributions for use in standard MCMC algorithms, such as Metropolis--Hastings, or Gibbs sampling, \citet{Wang2021maximal} introduced direct maximal couplings of Metropolis--Hastings type kernels. These have however been shown to be suboptimal in higher dimensions, probably due to the lack of contractive behavior of the resulting chains. This behavior was confirmed in Section~\ref{subsec:mmala}, where the ``full-kernel coupling with reflection residuals'' of \citet{Wang2021maximal} under-performed the simple Metropolis--Hastings using our coupled rejection sampling method, thereby hinting to the same conclusion that using good proposals is more important than optimality for coupling MCMC chains. To improve upon these, it could be possible to combine both approaches so as to increase the performance of maximal (or otherwise) couplings of MH kernels, for instance by coupling multiple-try MCMC methods~\citep{Jun2000multi}.

Finally, our method can be used to propagate arbitrary couplings from a dominating pair of marginals to the target ones. For example, suboptimal Wasserstein-like couplings~\cite[for a review see, e.g.,][]{villani2009optimal} could be sampled efficiently from a dominating optimal transport coupling of Gaussian distributions. Similarly, couplings other than the maximal coupling can be used for sampling the indices in Algorithm~\ref{alg:ensemble-rejection-coupling}. For example using optimal transport coupling of the indices based on a well-chosen metric may lead to faster converging algorithms at the cost of submaximality. Studying the theoretical and empirical properties of these different extensions is left for future works.

\section*{Individual contributions}
\label{sec:contribution}
The original idea, implementation, and redaction of this article are all due to Adrien Corenflos. Simo S\"arkk\"a contributed Proposition~\ref{prop:explicit_cov}, and helped with reviewing and finalising the manuscript. Adrien {Corenflos} and Simo S\"arkk\"a jointly developed the example of Section~\ref{sec:gauss_tails}. 

\section*{Acknowledgements}
Adrien Corenflos and Simo S\"arkk\"a gratefully acknowledge the support of Academy of Finland (project 321900). The authors would like to thank Pierre E. Jacob for his valuable comments on the original version of this manuscript. The first author would also like to warmly thank him for making his lecture notes on coupling methods freely available on the internet. This work would most likely never have been written otherwise.

\bibliography{bib}

\newpage
\appendix
\section{Proof of Proposition~\ref{prop:rejection-coupling}}
\label{app:proof_prop_rej}
Let us first prove that the algorithm samples from the correct marginal distributions. In the case when $A_X = 0$, we simply sample from $p$. Otherwise, when $A_X=1$, sampling of $X$ again corresponds to sampling from $p$, but now with rejection sampling with proposal $\hat{p}$ and ratio bound $M(p,\hat{p})$. Therefore the marginal of $X$ is $p$. With a similar reasoning we can deduce that the marginal of $Y$ is $q$.

Now, following classical rejection sampling theory~\citep[see, e.g.,][Ch. 1]{Pages2018}, we know that Algorithm~\ref{alg:rejection-coupling} termination time $\tau$ follows a geometric law with acceptance probability $\frac{1}{\mathbb{E}[\tau]}$.
In order for the algorithm to finish, we need either $A_X = 1$ or $A_Y=1$. This implies that $1 \leq \tau \leq \min(\tau_X, \tau_Y)$ where $\tau_X$ and $\tau_Y$ are the time until acceptance of a sample for $X$ and $Y$, respectively. Similarly as for $\tau$, we know that $\tau_X$ and $\tau_Y$ are marginally distributed according to Geometric laws with expectation $\mathbb{E}\left[\tau_X\right] = M(p, \hat{p})$ and $\mathbb{E}\left[\tau_Y\right] = M(q, \hat{q})$. 
Now, due to the fact that $\tau$ follows a geometric disribution, we know that $\mathbb{V}[\tau] = \mathbb{E}[\tau]^2 - \mathbb{E}[\tau] = \mathbb{E}[\tau](\mathbb{E}[\tau] - 1)$, which yields the result.
Finally, when $p$ and $q$ do not have atoms, then $X = Y$ only happens when $X_1 = Y_1$ and $A_X = 1, A_Y = 1$. This implies that we have $\mathbb{P}(X = Y) \le \mathbb{P}(X_1 = Y_1)$. 

\section{Proof of Proposition~\ref{prop:asymptotics-bound}}
\label{app:proof-asymptotic-bounds}
Similarly as for the proof of Proposition~\ref{prop:diagonal-rejection-coupling}, using the notations of Algorithm~\ref{alg:ensemble-rejection-coupling}, we have
\begin{equation}
\begin{split}
    \mathbb{P}(X = Y) 
        &\geq   \mathbb{P}(\hat{X}_I = \hat{Y}_J \,\&\, A_X = A_Y = 1 \,\&\, I = J) \\ 
        &=      \mathbb{P}(A_X = A_Y = 1 \mid I = J, \hat{X}_I = \hat{Y}_J) \, 
                \mathbb{P}(\hat{X}_I = \hat{Y}_J \mid I = J) \,
                \mathbb{P}(I = J)
\end{split}
\end{equation}
as well as, when the distributions $p$ and $q$ do not have atoms,  
\begin{equation}
\begin{split}
    \mathbb{P}(X = Y) 
        &\leq   \mathbb{P}(\hat{X}_I = \hat{Y}_J \,\&\, I = J) \\ 
        &=      \mathbb{P}(\hat{X}_I = \hat{Y}_J \mid I = J) \,
                \mathbb{P}(I = J).
\end{split}
\end{equation}
Following \citet{Deligiannidis2020ensemble}, we can lower bound $\mathbb{P}(A_X = A_Y = 1 \mid I = J, \hat{X}_I = \hat{Y}_J)$ by 
\begin{equation}
\begin{split}
    &\mathbb{P}(A_X = A_Y = 1 \mid I = J, \hat{X}_I = \hat{Y}_J) \\
    &\geq \mathbb{P}(A_X = 1 \mid I = J, \hat{X}_I = \hat{Y}_J) \cdot \mathbb{P}(A_Y = 1 \mid I = J, \hat{X}_I = \hat{Y}_J) \\
    &\geq \frac{N}{N - 1 + \frac{1}{p_{X}}} \cdot \frac{N}{N - 1 + \frac{1}{p_{Y}}},
\end{split}
\end{equation}
where $p_X = \mathbb{P}\left( \frac{p(\hat{X})}{M(p, \hat{p}) \hat{p}(\hat{X})}\mid \hat{X} = \hat{Y}\right) > 0$ 
and $p_Y = \mathbb{P}\left(\frac{q(\hat{Y})}{M(q, \hat{q}) \hat{q}(\hat{X})}\mid \hat{X} = \hat{Y}\right) > 0$, for $(\hat{X}, \hat{Y}) \sim \hat{\Gamma}$. Furthermore, using the maximality property of $\hat{\Gamma}$, the density of $\hat{X}$ conditionally on $\hat{X} = \hat{Y}$ is $\min(\hat{p}, \hat{q})$~\citep[][Lemma 2]{Wang2021maximal}, so that $p_X \geq \frac{1}{\mathbb{P}(\hat{X} = \hat{Y}) \, M(p, \hat{p})} \mathbb{E}\left[p(\hat{Y})\right]$, and similarly, $p_Y \geq \frac{1}{\mathbb{P}(\hat{X} = \hat{Y}) \, M(q, \hat{q})} \mathbb{E}\left[q(\hat{X})\right]$, with the identity $p_X = \frac{1}{M(p, \hat{p})}$ and $p_Y = \frac{1}{\mathbb{P}(M(q, \hat{q})}$ when we also have $\hat{p} = \hat{q}$.
Now we have
\begin{equation}
\begin{split}
    &\mathbb{P}(I = J) \\
    &= \mathbb{E}\left[\sum_{n=1}^N \min(W^X_n, W^Y_n)\right] \\ 
    &= \mathbb{E}\left[\sum_{n=1}^N \min\left(
        \frac{\frac{p(\hat{X}_n)}{\hat{p}(\hat{X}_n)}}
             {\sum_{m=1}^N \frac{p(\hat{X}_m)}{\hat{p}(\hat{X}_m)}},
        \frac{\frac{q(\hat{Y}_n)}{\hat{q}(\hat{Y}_n)}}
             {\sum_{m=1}^N \frac{q(\hat{Y}_m)}{\hat{q}(\hat{Y}_m)}}\right)
        \right]\\
    &= N\mathbb{E}\left[\min\left(
        \frac{\frac{p(\hat{X}_1)}{\hat{p}(\hat{X}_1)}}
             {\sum_{m=1}^N \frac{p(\hat{X}_m)}{\hat{p}(\hat{X}_m)}},
        \frac{\frac{q(\hat{Y}_1)}{\hat{q}(\hat{Y}_1)}}
             {\sum_{m=1}^N \frac{q(\hat{Y}_m)}{\hat{q}(\hat{Y}_m)}}\right)
        \right]\\
    &= \frac{N}{\sqrt{2 N \log \log N}}\mathbb{E}\left[\min\left(
        \frac{\frac{p(\hat{X}_1)}{\hat{p}(\hat{X}_1)}}
             {\frac{1}{\sqrt{2 N \log \log N}}\sum_{m=1}^N \frac{p(\hat{X}_m)}{\hat{p}(\hat{X}_m)}},
        \frac{\frac{q(\hat{Y}_1)}{\hat{q}(\hat{Y}_1)}}
             {\frac{1}{\sqrt{2 N \log \log N}}\sum_{m=1}^N \frac{q(\hat{Y}_m)}{\hat{q}(\hat{Y}_m)}}\right)
        \right] > 0,
\end{split}
\end{equation}
the positivity following from the same argument as for Proposition~\ref{prop:diagonal-rejection-coupling}.

We can notice that 
\begin{align}
    \frac{1}{\sqrt{2 N \log \log N}}\sum_{m=1}^N \frac{p(\hat{X}_m)}{\hat{p}(\hat{X}_m)} 
        & = \frac{1}{\sqrt{2 N \log \log N}}\sum_{m=1}^N \left[\frac{p(\hat{X}_m)}{\hat{p}(\hat{X}_m)} - 1\right] + \frac{N}{\sqrt{2 N \log \log N}}
\end{align}
so that 
\begin{align}
    \frac{1}{\sqrt{2 N \log \log N}}\sum_{m=1}^N \frac{p(\hat{X}_m)}{\hat{p}(\hat{X}_m)} \leq  \frac{N}{\sqrt{2 N \log \log N}} + \sup_{K \geq N} \frac{1}{\sqrt{2 K \log \log K}}\sum_{m=1}^K \left[\frac{p(\hat{X}_m)}{\hat{p}(\hat{X}_m)} - 1\right],
\end{align}
and
\begin{align}
    \frac{1}{\sqrt{2 N \log \log N}}\sum_{m=1}^N \frac{p(\hat{X}_m)}{\hat{p}(\hat{X}_m)} \geq  \frac{N}{\sqrt{2 N \log \log N}} + \inf_{K \geq N} \frac{1}{\sqrt{2 K \log \log K}}\sum_{m=1}^K \left[\frac{p(\hat{X}_m)}{\hat{p}(\hat{X}_m)} - 1\right].
\end{align}
Furthermore, the law of the iterated logarithm \citep[see, e.g.,][]{Shiryaev:1996} ensures that 
\begin{align}
    \limsup_{N \to \infty} \frac{1}{\sqrt{2 N \log \log N}}\sum_{m=1}^N \left[\frac{p(\hat{X}_m)}{\hat{p}(\hat{X}_m)} - 1\right] &= \mathbb{V}\left[\frac{p(\hat{X})}{\hat{p}(\hat{X})}\right]^{1/2}
\end{align}
and 
\begin{align}
    \liminf_{N \to \infty} \frac{1}{\sqrt{2 N \log \log N}}\sum_{m=1}^N \left[\frac{p(\hat{X}_m)}{\hat{p}(\hat{X}_m)} - 1\right] &= -\mathbb{V}\left[\frac{p(\hat{X})}{\hat{p}(\hat{X})}\right]^{1/2},
\end{align}
a similar result being true for the $\hat{Y}$ part.
By rearranging the inequalities together, this ensures that there exist two sequences $l_N$ and $u_N$ such that for any $N$ we have
\begin{align}
    l_N \leq \mathbb{P}(I = J) \leq u_N,
\end{align}
with 
\begin{align}
    l_N \sim \frac{\int \min(p(x), q(x))\dd{x}}{1 + u \, \frac{\sqrt{2 N \log \log N}}{N}}, \quad u_N &\sim \frac{\int \min(p(x), q(x))\dd{x}}{1 - u \, \frac{\sqrt{2 N \log \log N}}{N}},
\end{align}
where $u = \max\left(\mathbb{V}\left[\frac{p(\hat{X})}{\hat{p}(\hat{X})}\right]^{1/2}, \mathbb{V}\left[\frac{q(\hat{Y})}{\hat{q}(\hat{Y})}\right]^{1/2}\right)$.

\newpage
\section{Proof of Proposition~\ref{prop:proba}}
\label{app:proof-lower-bound}

Let $\hat{X}, \hat{Y}$ be the output of the reflection-maximal coupling of Algorithm~\ref{alg:reflection-coupling}.
Because it is a maximal coupling~\citep[][Lemma 2]{Wang2021maximal}, the reflection-maximal coupling is successful with probability
\begin{align}
        \mathbb{P}(\hat{X} = \hat{Y}) = 1 - \norm{\mathcal{N}(\mu_p, \hat{\Sigma}) - \mathcal{N}(\mu_q, \hat{\Sigma})}_{TV} = 2 F\left(-\frac{1}{2}\norm{Q^{-1/2}(\mu_p - \mu_q)}_2\right),
\end{align}
where $F$ is the cumulative distribution function of a standard Gaussian and we have
\begin{equation}
\begin{split}
    \mathbb{P}(\hat{X} \in A \mid \hat{X} = \hat{Y})
    &= \mathbb{P}(\hat{Y} \in A \mid \hat{X} = \hat{Y}) \\
    &= \frac{1}{\mathbb{P}(\hat{X} = \hat{Y})}\int \mathbbm{1}(x \in A) \min\left(\mathcal{N}(x; \mu_p, \hat{\Sigma}), \mathcal{N}(x; \mu_q, \hat{\Sigma})\right) \dd{x}.
\end{split}
\end{equation}

For the coupling of Algorithm~\ref{alg:rejection-coupling} to be successful, we need the dominating coupling to be successful and for both $\hat{X}$ or $\hat{Y}$ to be accepted, that is,
\begin{align}
    \mathbb{P}(X = Y) \geq \mathbb{P}(\hat{X} = \hat{Y}) \mathbb{P}(A_X = A_Y = 1 \mid \hat{X} = \hat{Y}).
\end{align}
Now, using generic distributions notations,
\begin{equation}
\begin{split}
    &\mathbb{P}(A_X = A_Y = 1 \mid \hat{X}) \\
    &= \frac{1}{\mathbb{P}(\hat{X} = \hat{Y})} \int \min\left(\frac{p(x)}{M(p, \hat{p}) \hat{p}(x)}, \frac{q(x)}{M(q, \hat{q}) \hat{q}(x)}\right) \min(\hat{p}(x), \hat{q}(x)) \dd{x} \\
    &\leq \frac{1}{\mathbb{P}(\hat{X} = \hat{Y})} \int \frac{p(x)}{M(p, \hat{p}) \hat{p}(x)} \cdot \frac{q(x)}{M(q, \hat{q}) \hat{q}(x)} \min(\hat{p}(x), \hat{q}(x)) \dd{x}, \label{eq:base-eq-gauss}
\end{split}
\end{equation}
because both terms in the minimum are at most $1$.
\begin{remark}
    The last line would have corresponded to using a different uniform sample $U$ to accept $\hat{X}$ and $\hat{Y}$ in Algorithm~\ref{alg:rejection-coupling}.
\end{remark}
We can now compute the following integral explicitly:
\begin{equation}
\begin{split}
    &\frac{p(x)}{M(p, \hat{p}) \hat{p}(x)} \cdot \frac{q(x)}{M(q, \hat{q}) \hat{q}(x)}\\
    &= \exp\left(-\frac{1}{2}\left[(x - \mu_p)^{\top}\left(\Sigma_p^{-1} - \hat{\Sigma}^{-1}\right)(x - \mu_p) + (x - \mu_q)^{\top}\left(\Sigma_q^{-1} - \hat{\Sigma}^{-1}\right) (x - \mu_q)\right]\right).
\end{split}
\end{equation}
Now, we have
\begin{equation}
\begin{split}
    &\quad \mathcal{N}(x; \mu_p, \hat{\Sigma}) < \mathcal{N}(x; \mu_q, \hat{\Sigma}) \\
    &\Leftrightarrow x^\top \hat{\Sigma}^{-1}(\mu_p - \mu_q) < \frac{1}{2}\left[\mu_p^\top \hat{\Sigma}^{-1} \mu_p - \mu_q^\top \hat{\Sigma}^{-1} \mu_q\right].
\end{split}
\end{equation}
Moreover, when we have $\mathcal{N}(x; \mu_p, \hat{\Sigma}) < \mathcal{N}(x; \mu_q, \hat{\Sigma})$, the integrand in \eqref{eq:base-eq-gauss} becomes 
\begin{equation}
\begin{split}
    &\quad \frac{p(x)}{M(p, \hat{p}) \hat{p}(x)} \cdot \frac{q(x)}{M(q, \hat{q}) \hat{q}(x)} \mathcal{N}(x; \mu_p, \hat{\Sigma}) \\
    &= \frac{1}{\det(2\pi \hat{\Sigma})^{1/2}} \exp\left(-\frac{1}{2}\left[(x - \mu_p)^{\top}\Sigma_p^{-1}(x - \mu_p) + (x - \mu_q)^{\top}\left(\Sigma_q^{-1} - \hat{\Sigma}^{-1}\right) (x - \mu_q)\right]\right).
\end{split}
\end{equation}
Now,
\begin{equation}
\begin{split}
    &(x - \mu_p)^{\top}\Sigma_p^{-1}(x - \mu_p) + (x - \mu_q)^{\top}\left(\Sigma_q^{-1} - \hat{\Sigma}^{-1}\right) (x - \mu_q) \\
    &= (x - \alpha)^\top H^{-1} (x - \alpha) + \beta,
\end{split}
\end{equation}
where 
\begin{equation}
\begin{split}
    H &= \left(\Sigma_p^{-1}+ \Sigma_q^{-1} - \hat{\Sigma}^{-1}\right)^{-1}\\
    \alpha &= H\, \left(\Sigma_p^{-1}\, \mu_p + \left(\Sigma_q^{-1} - \hat{\Sigma}^{-1}\right)\, \mu_q\right)\\
    \beta &= \mu_p^{\top} \Sigma_p^{-1} \mu_p + \mu_q^{\top} \left(\Sigma_q^{-1} - \hat{\Sigma}^{-1}\right) \mu_q - \alpha^{\top} H^{-1} \alpha.
\end{split}
\end{equation}

This means that when $\mathcal{N}(x; \mu_p, \hat{\Sigma}) < \mathcal{N}(x; \mu_q, \hat{\Sigma})$, the integrand in \eqref{eq:base-eq-gauss} becomes
\begin{align}
    \exp(-\beta / 2)\frac{\det(2\pi H)^{1/2}}{\det(2\pi \hat{\Sigma})^{1/2}} \, \mathcal{N}(x; \alpha, H).
\end{align}
Under $\mathcal{N}\left(\alpha, H \right)$, the event
\begin{equation}
\begin{split}
    &\quad              x^\top \hat{\Sigma}^{-1}(\mu_p - \mu_q) < \frac{1}{2}\left[\mu_p^{\top} \hat{\Sigma}^{-1} \mu_p - \mu_q^\top \hat{\Sigma}^{-1} \mu_q\right]\\
    &\Leftrightarrow    (x - \alpha)^\top H^{-\top/2} H^{\top/2}\hat{\Sigma}^{-1}(\mu_p - \mu_q) < \frac{1}{2}\left[\mu_p^\top \hat{\Sigma}^{-1} \mu_p - \mu_q^\top \hat{\Sigma}^{-1} \mu_q - 2 \alpha^{\top} \hat{\Sigma}^{-1}(\mu_p-\mu_q)\right]
\end{split}
\end{equation}
has probability
\begin{align}
    \mathcal{F}(\alpha, H) \coloneqq F\left(\frac{1}{2}\frac{\mu_p^{\top} \hat{\Sigma}^{-1} \mu_p - \mu_q^\top \hat{\Sigma}^{-1} \mu - 2 \alpha^{\top} \hat{\Sigma}^{-1}(\mu_p - \mu_q)}{\norm{H^{\top/2}\hat{\Sigma}^{-1}(\mu_p - \mu_q)}_2}\right),
\end{align}
where $F$ is the cumulative distribution of a standard normal distribution.
Symmetrically, when $\mathcal{N}(x; \mu_p, \hat{\Sigma}) > \mathcal{N}(x; \mu_q, \hat{\Sigma})$, the integrand in \eqref{eq:base-eq-gauss} becomes 
\begin{align}
    \exp(-\gamma / 2) \frac{\det(2\pi H)^{1/2}}{\det(2\pi \hat{\Sigma})^{1/2}} \, \mathcal{N}(x; \delta, H),
\end{align}
where 
\begin{equation}
\begin{split}
    \delta &= H\, \left(\Sigma_q^{-1}\, \mu_q + \left(\Sigma_p^{-1} - \hat{\Sigma}^{-1}\right)\, \mu_p \right), \\
    \gamma &= \mu_q^{\top} \Sigma_q^{-1} \mu_q + \mu_p^{\top} \left(\Sigma_p^{-1} - \hat{\Sigma}^{-1}\right) \mu_p - \delta^{\top} H^{-1} \delta.
\end{split}
\end{equation}
and the event $\mathcal{N}(x; \mu_p, \hat{\Sigma}) > \mathcal{N}(x; \mu_q, \hat{\Sigma})$ has probability $1 - \mathcal{F}(\delta, H)$ under $\mathcal{N}\left(\delta, H \right)$. Therefore, we finally obtain
\begin{align}
    \mathbb{P}(X=Y) \geq \frac{\det(2\pi H)^{1/2}}{\det(2\pi \hat{\Sigma})^{1/2}} \left[\exp(-\beta/2) \mathcal{F}(\alpha, H) + \exp(-\gamma/2) \left[1 - \mathcal{F}(\delta, H)\right]\right].
\end{align}

\newpage
\section{Reflection maximal coupling}
\label{app:reflection}
\begin{algorithm}[!htb]
    \caption{Reflection-maximal coupling}
    \label{alg:reflection-coupling}
    \DontPrintSemicolon
    \Fn{\RC{$a$, $b$, $\Sigma$}}{
        $z = \Sigma^{-1/2}(a - b)$\;
        $e = z / \norm{z}$\;
        Sample $\dot{X} \sim \mathcal{N}(0, I)$ and $U \sim \mathcal{U}(0, 1)$\;
        \uIf{
            $\mathcal{N}(\dot{X}; 0, I) \, U < \mathcal{N}(\dot{X} + z; 0, I)$\;
        }{
            Set $\dot{Y} = \dot{X} + z$\;
        }
        \uElse{
            Set $\dot{Y} = \dot{X} - 2 \left\langle e, \dot{X} \right\rangle \, e$\;
        }
        Set $X = a + \Sigma^{1/2}\, \dot{X}$ and $Y = b + \Sigma^{1/2}\, \dot{Y}$\;
        \Ret{$X$, $Y$}
    }
\end{algorithm}

\newpage
\section{Proof of Proposition~\ref{prop:explicit_cov}}
\label{app:optimal_cov}
Consider the problem in \eqref{eq:opt_cov}. Following \citet{vandenberghe1998determinant}, for any invertible matrix $C$, maximising $\log \det(\hat{\Sigma}^{-1})$ is equivalent to maximising $\log \det(C^\top \, \hat{\Sigma}^{-1} C)$ and $\hat{\Sigma}^{-1} \preceq \Sigma_q^{-1} \Leftrightarrow C^\top \, \hat{\Sigma}^{-1} C \preceq C^\top \, \Sigma_q^{-1} C$, and similarly for the other constraints. 

We can now consider $C = \Sigma_q^{1/2}$ so that solving \eqref{eq:opt_cov} is equivalent to solving 
\begin{equation}
    \label{eq:aux_opt_cov}
    \begin{split}
    &\max \log \det(X) \\
            & X \preceq I, \quad
                X \preceq Y, \quad
                X \succeq 0,
    \end{split}
\end{equation}
for $X = C^\top \, \hat{\Sigma}^{-1}\, C$ with $Y = C^\top \, \Sigma_p^{-1}\, C$. Now let $Y = V\, D \, V^\top$ be a diagonalisation of $Y$, with orthogonal matrix $V$. We can now use $V$ to transform \eqref{eq:aux_opt_cov} via $U = V^\top X \, V$ into the equivalent problem
\begin{equation}
    \begin{split}
    &\max \log \det(U) \\
            & U \preceq I, \quad
                U \preceq D, \quad
                U \succeq 0.
    \end{split}
\end{equation}
The solution to this problem is given by a diagonal matrix $\bar{U}$ with diagonal entries being $\bar{U}_{ii} = \min(1, D_{ii})$ for all $i$. This allows us to reconstruct a solution $\hat{\Sigma}_{opt}^{-1} = C^{-\top}\, V \, \bar{U} \, V^\top C^{-1}$, that is, $\hat{\Sigma}_{opt} = C\, V\, \bar{U}^{-1} \, V^\top C^\top$ to \eqref{eq:opt_cov}.

\newpage
\section{Illustration of Gaussian couplings}
\label{app:gauss}
\begin{figure}[!htb]
    \centering
    \resizebox{\textwidth}{!}{
    \input{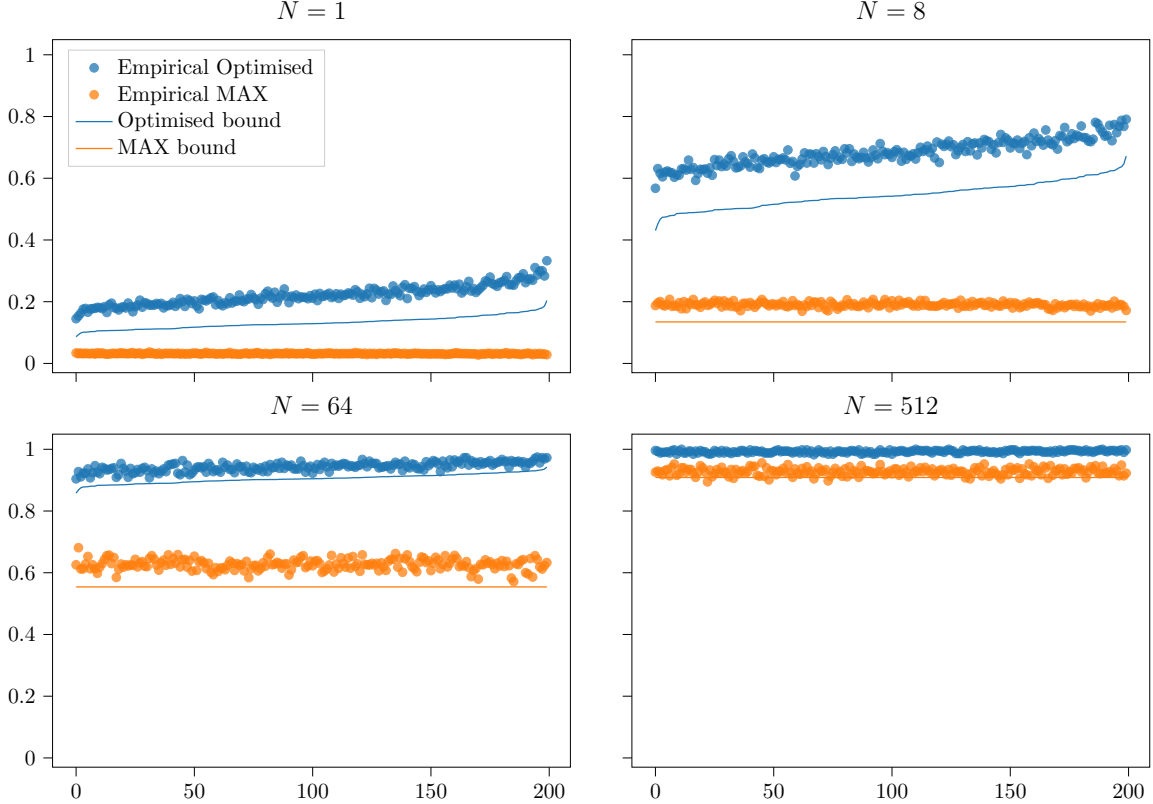}
    }
    \caption{Average number of steps $1 / \tau$ $N$ taken by Algorithm~\ref{alg:ensemble-rejection-coupling} to sample from the coupling using the reflection-maximal dominating coupling with $\hat{\Sigma}_{opt}$ and $\hat{\Sigma}_{max}$ and an ensemble proposal of size $N$. For better visualisation, the points have been ordered according to the value of the bound derived in Proposition~\ref{prop:rejection-coupling} using $\hat{\Sigma}_{opt}$ as the dominating covariance matrix.}
    \label{fig:acceptance_gaussian}
\end{figure}

To generate Figures~\ref{fig:acceptance_gaussian} and \ref{fig:coupling_gaussian}, we take the dimension $d$ for the underlying space $\mathbb{R}^d$ to be $10$, $\mu_p = \mu_q = 0$, $\Sigma_p = \mathrm{diag}\left(1, 2, \cdots, d\right)$ and we draw $200$ random $\Sigma_q$'s by sampling a random orthogonal matrix $U$ (given by the QR decomposition of a $d \times d$ matrix of i.i.d. standard Gaussian samples) and setting $\Sigma_q = U \Sigma_p U^{\top}$. We then run Algorithm~\ref{alg:coupled-rejection-resampling} with a reflection-maximal dominating coupling $500$ times for each pair of covariance matrices generated to compute the coupling and the acceptance probabilities.
    
\begin{figure}[!htb]
    \centering
    \resizebox{\textwidth}{!}{
    \input{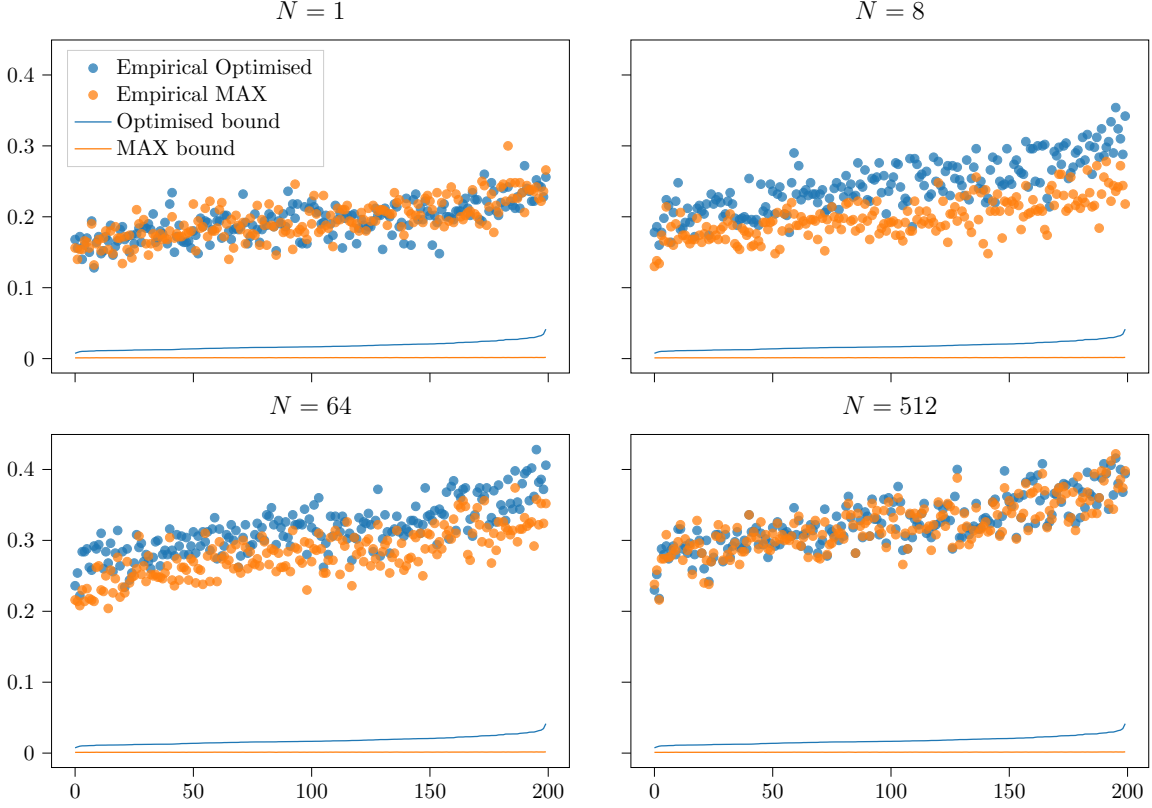}
    }
    \caption{Empirical coupling success probability of Algorithm~\ref{alg:ensemble-rejection-coupling} using the reflection-maximal dominating coupling with $\hat{\Sigma}_{opt}$ and $\hat{\Sigma}_{max}$ and an ensemble proposal of size $N$. For better visualisation, the points have been ordered according to the value of the bound derived in Proposition~\ref{prop:proba} using $\hat{\Sigma}_{opt}$ as the dominating covariance matrix.}
    \label{fig:coupling_gaussian}
\end{figure}
    
Figure~\ref{fig:acceptance_gaussian} shows that the number of trials required to sample from Algorithms~\ref{alg:rejection-coupling} and \ref{alg:ensemble-rejection-coupling} is positively impacted by using $\hat{\Sigma}_{opt}$ instead of $\hat{\Sigma}_{max}$ for all values of $N$. On the other hand, Figure~\ref{fig:coupling_gaussian} shows that optimising for $\hat{\Sigma}$ does not yield any benefit for $N=1$, but starts doing so for any $N > 1$. This is because, for $N=1$, the coupling success probability is driven mostly by that of the dominating coupling, while for $N>1$, the term $\max\left(\mathbb{V}\left[\frac{p(\hat{X})}{\hat{p}(\hat{X})}\right]^{1/2}, \mathbb{V}\left[\frac{q(\hat{Y})}{\hat{q}(\hat{Y})}\right]^{1/2}\right)$, appearing in Proposition~\ref{prop:asymptotics-bound}, drives the improvement and is directly impacted by $\det(\hat{\Sigma})$.

\newpage
\section{Maximal coupling of translated exponential distributions}
\label{app:max_coupling_expon}
In the experiment of Section~\ref{sec:gauss_tails} we use maximal coupling of translated exponential distributions as the proposal for the rejection coupling algorithm. The aim here is now to present details on how to sample from this maximal coupling. The method uses the following proposals:
\begin{equation}
\begin{split}
\hat{p}(x) &= \mathrm{Exp}(x; \mu,\alpha(\mu)) = \alpha(\mu) e^{-\alpha(\mu)(x - \mu)}, \qquad x \geq \mu, \\
\hat{q}(x) &= \mathrm{Exp}(x; \eta,\alpha(\eta)) = \alpha(\eta) e^{-\alpha(\eta)(x - \eta)}, \qquad x \geq \eta, 
\end{split}
\end{equation}
where $\alpha(z) = \frac{z + \sqrt{z^2 + 4}}{2}$. We assume that $\eta > \mu$ which also implies that $\alpha(\eta) > \alpha(\mu)$. 

Then we have
\begin{equation}
  c(x) = \frac{1}{Z^c} \min( \hat{p}(x), \hat{q}(x) ) = \begin{cases}
    \frac{1}{Z^c} \alpha(\mu) e^{-\alpha(\mu)(x - \mu)} ,& x \ge \eta \text{ and } x \le \gamma, \\
    \frac{1}{Z^c} \alpha(\eta) e^{-\alpha(\eta)(x - \eta)}, & x \ge \gamma, \\
    0 , & \text{otherwise},
  \end{cases}
  \label{eq:expt_c}
\end{equation}
where
\begin{equation}
  \gamma = \frac{\log \alpha(\eta) - \log \alpha(\mu) + \alpha(\eta) \eta - \alpha(\mu) \mu}{\alpha(\eta) - \alpha(\mu)}.
\end{equation}
We now get that
\begin{equation}
\begin{split}
  Z^c &= \int \min(p(x), q(x)) \, dx = 
   \exp(-\alpha(\mu) (\eta - \mu)) - \exp(-\alpha(\mu) (\gamma  - \mu)) + \exp(-\alpha(\eta) (\gamma - \eta)), \\
  Z^c_1 &= \int_{\eta}^{\gamma} \alpha(\mu) e^{-\alpha(\mu)(x - \mu)} \, dx = 
  \exp(-\alpha(\mu) (\eta - \mu)) - \exp(-\alpha(\mu) (\gamma  - \mu)), \\
  Z^c_2 &= \int_{\gamma}^{\infty} \alpha(\eta) e^{-\alpha(\eta)(x - \eta)} \, dx = 
  \exp(-\alpha(\eta) (\gamma - \eta)),
\end{split}
\end{equation}
where $Z^c$ is also the maximal coupling probability needed in the method. Both of the non-zero branches of \eqref{eq:expt_c} now have closed form inverse cumulative density functions (CDF):
\begin{equation}
\begin{split}
  C_1^{-1}(u) &= \mu - \frac{1}{\alpha(\mu)} \, \log\bigg[
    \exp(-\alpha(\mu) (\eta - \mu)) - u \, \left[ \exp(-\alpha(\mu) (\eta - \mu)) - \exp(-\alpha(\mu) (\gamma - \mu)) \right]
  \bigg], \\
  C_2^{-1}(u) &= \gamma - \frac{1}{\alpha(\eta)} \, \log (1 - u),
\end{split}
\end{equation}
and therefore we can sample from $c(x)$ by first drawing the branch with probabilities $Z^c_1 / Z^c$ and $Z^c_2 / Z^c$, and then by sampling with inverse CDF method by using the CDFs given above.

Let us then look at the density
\begin{equation}
\begin{split}
  \tilde{p}(x) &= \frac{1}{Z_p} \left[
  p(x) - \min( p(x), q(x) ) \right] \\
  &= \begin{cases}
    \frac{1}{Z^p} \left[ \alpha \, \exp(-\alpha(\mu) (x - \mu)) - \alpha(\eta) \, \exp(-\alpha(\eta) (x - \eta)) \right], & x \ge \gamma, \\
    \frac{1}{Z^p} \alpha(\mu) \, \exp(-\alpha(\mu) (x - \mu)), & \mu \le x \le \eta, \\
    0, & \text{ otherwise,}
  \end{cases} \\
\end{split}
\end{equation}
where the overall normalisation constant and the normalisation constants of the non-zero branches are given as
\begin{equation}
\begin{split}
  Z^p &= \int \left[ p(x) - \min( p(x), q(x) )  \right] \, dx \\
  &= 1 - \exp(-\alpha(\mu) (\eta - \mu)) + \exp(-\alpha(\mu) (\gamma  - \mu)) - \exp(-\alpha(\eta) (\gamma - \eta)), \\
  Z^p_1 &=  \int_{\gamma}^{\infty} \left[ \alpha(\mu) \, \exp(-\alpha(\mu) (x - \mu)) - \alpha(\eta) \, \exp(-\alpha(\eta) (x - \eta)) \right] \, dx \\
  &= \exp( -\alpha(\mu) (\gamma - \mu)) - \exp( -\alpha(\eta) (\gamma - \eta)), \\
  Z^p_2 &= \int_{\mu}^{\eta} \alpha(\mu) \, \exp(-\alpha(\mu) (x - \mu)) \, dx \\
  &= 1 - \exp(-\alpha(\mu) (\eta - \mu))
\end{split}
\end{equation}
We can now again first draw the branch with probabilities $Z^p_1 / Z^p$ and $Z^p_2 / Z^p$, and then use inverse CDF method in each of the branches. The second branch has a closed form inverse CDF
\begin{equation}
\begin{split}
  \tilde{P}^{-1}_2(u) &= \mu - \frac{1}{\alpha(\mu)} \log\bigg[
    1 - u \, \left[ 1 - \exp(-\alpha(\mu) (\eta - \mu)) \right]
  \bigg].
\end{split}
\end{equation}
However, the first branch does not have a closed form inverse CDF. Instead we need to solve the value of the inverse CDF $\tilde{P}^{-1}_1(u)$ at each $u$ by numerically solving the equation
\begin{equation}
\begin{split}
  \frac{1}{Z^p_1} \exp(-\alpha(\mu) (x - \mu)) - \frac{1}{Z^p_1} \exp(-\alpha(\eta) (x - \eta)) = u, \qquad x \ge \gamma.
\end{split}
\end{equation}
For that purpose, we use Chandrupatla's method \citep{Chandrupatla:1997}.

Finally, we need to sample from the distribution
\begin{equation}
\begin{split}
  \tilde{q}(x) &=
  \frac{1}{Z^q} \left[ q(x) - \min( p(x), q(x) ) \right] \\
  &= \begin{cases}
    \frac{1}{Z^q} \left[ \alpha(\eta) \, \exp(-\alpha(\eta) (x - \eta)) - \alpha(\mu) \, \exp(-\alpha (\mu) (x - \mu)) \right], & x \ge \eta \text{ and } x \le \gamma, \\
    0, & \text{ otherwise,}
  \end{cases} \\
\end{split}
\end{equation}
where the normalisation constant is
\begin{equation}
\begin{split}
Z^q = 1 - \exp(-\alpha(\mu) (\eta - \mu)) + \exp(-\alpha(\mu) (\gamma  - \mu)) - \exp(-\alpha(\eta) (\gamma - \eta)).
\end{split}
\end{equation}
We can again sample from $\tilde{q}(x)$ by inverse CDF method although the inverse CDF is not available in closed form. However, we can solve values of the inverse CDF $\tilde{Q}^{-1}(u)$ by numerically solving the equation
\begin{equation}
\begin{split}
  1 - \exp(-\alpha(\eta) (x- \eta)) - \exp(-\alpha(\mu) (\eta - \mu)) + \exp(-\alpha(\mu) (x - \mu)) = Z^q \, u,
\end{split}
\end{equation}
for $\eta \le x \le \gamma$. For this purpose we again used the Chandrupatla's method.

\newpage
\section{Coupled rejection resampling algorithm}
\label{app:resampling-algo}
In this section we reproduce the coupled (parallel) rejection resampling algorithm which is a modified version of the algorithm (Code 3) in \citet{Lawrence2016Parallel}. For the sake of concision, we only reproduce the non-ensemble version, the ensemble one being a direct modification following Algorithm~\ref{alg:ensemble-rejection-coupling}.

\begin{algorithm}[!htb]
    \caption{Coupled (parallel) rejection resampling of $w^X \in [0, \infty)^M$ and $w^Y \in [0, \infty)^M$}
    \label{alg:coupled-rejection-resampling}
    \DontPrintSemicolon
    \Fn{\RejR{$w^X$, $w^Y$, $\bar{w}^X$, $\bar{w}^Y$}}{
        \tcp{Supposing that for all $m \in \{1, \ldots, M\}$, $w^X_m \leq \bar{w}^X < \infty$ and $w^Y_m \leq \bar{w}^Y < \infty$.}
        \tcp{The loop below can be done in parallel}
        \For{$m=1, \ldots, M$}{
            Set $B^X_m = m$ and $B^Y_m = m$\;
            Set $A^X_m = 0$ and $A^Y_m = 0$ \tcp{Acceptance flags}
            Sample $U \sim \mathcal{U}(0, 1)$\;
            \uIf{
                    $U < w^X_{B^X_m} / \bar{w}^X$
                }{
                    Set $A^X_m = 1$\;
                }
            \uIf{
                    $U < w^Y_{B^Y_m} / \bar{w}^Y$
                }{
                    Set $A^Y_m = 1$\;
                }

            \While{
                $A^X_m = 0$ and $A^Y_m = 0$
            }
            {
                Sample $U \sim \mathcal{U}(0, 1)$\;
                Sample $B^X_m = B^Y_m$ from $\mathcal{U}(\{1, \ldots, N\})$\;
                \uIf{
                        $U <  w^X_{B^X_m} / \bar{w}^X$
                    }{
                        Set $A^X_m = 1$\;
                    }
                \uIf{
                        $U < w^Y_{B^Y_m} / \bar{w}^Y$
                    }{
                        Set $A^Y_m = 1$\;
                    }
            }
            \uIf{
                $A^X_m = 0$
            }{
                    Continue the rejection loop for $w^X$ only
            }
            \uIf{
                $A^Y_m = 0$
            }{
                Continue the rejection loop for $w^Y$ only
            }
            }
    \Ret{$B^X$, $B^Y$, $A^X$, $A^Y$}
    }
\end{algorithm}

\begin{remark}
    It is worth noting that the Metropolis-Hastings based parallel resampling method in \citet[][Code 2]{Lawrence2016Parallel} can also be coupled by coupling the individual chains appearing in the loop. This observation, although important, does not follow from the methods introduced in the current article, but is a direct consequence of the work of \citet{Jacob2020UnbiasedMCMC}. However, it is not directly clear if the resulting algorithm would target -- or even recover -- the maximal coupling between categorical distributions. Studying the behavior of the resulting resampling algorithm falls outside of the scope of this article.
\end{remark}

\newpage
\section{Thorisson algorithm}
\label{app:thorisson}
In this section we reproduce the modified version of Thorisson's algorithm~\citep{Thorisson2000Coupling,Gerber2020discussion}. This version of the algorithm introduces a parameter $0 < C < 1$ corresponding to the acceptable suboptimality of the coupling success. Formally, if one wants to sample $(X, Y)$ from a coupling of $(p, q)$, the resulting algorithm will have $\mathbb{P}(X=Y) = C \int \min(p, q)$. Introducing this variable enables control of the run time variance of the algorithm. $C=1$ corresponds to Thorisson's algorithm with infinite run time variance~\citep[see][for details]{Gerber2020discussion}. The algorithm is given in Algorithm~\ref{alg:thorisson}.

\begin{algorithm}[H]
    \caption{Modified Thorisson algorithm}
    \label{alg:thorisson}
    \DontPrintSemicolon
    \Fn{\Thorisson{$p$, $q$, $C$}}{
        Sample $X \sim p$\;
        Sample $U \sim  \mathcal{U}(0, 1)$\;
        \uIf{
            $U < \min(\frac{q(X)}{p(X)}, C)$
        }{
            Set $Y=X$\; 
        }
        \uElse{
            Set $A = 0$\;
            \While{$A \neq 1$}
            {
                Sample $U \sim  \mathcal{U}(0, 1)$\;
                Sample $Z \sim q$\;
                \uIf{
                    $U > \min\left(1, C \frac{p(Z)}{q(Z)}\right)$
                }{
                    Set $A=1$\; 
                }
            Set $Y=Z$\;
            }
        }
    \Ret{$X$, $Y$}
    }
\end{algorithm}

\newpage
\section{Run time efficiency of ensemble rejection sampling}
\label{app:efficiency}
We now provide a theoretical argument as to why using an ensemble proposal in the rejection sampler results in better overall run time when running on a parallel hardware such as a GPU. For the sake of simplicity, we make the following assumptions:

\begin{enumerate}
    \item $p_{\mathrm{RS}}$ is the acceptance probability of the standard rejection sampler with one proposal.
    \item The cost of sampling a proposal and computing the importance weight is given by a constant $K > 0$, and sampling $N$ proposals in parallel comes at the constant same cost $K$.
    \item The acceptance probability of the ensemble rejection sampler with $N$ particles is exactly (and not $\geq$) $p_{\mathrm{ERS}} = \frac{N p_{\mathrm{RS}}}{(N-1)p_{\mathrm{RS}} + 1}$.
    \item The run time of one step in the ERS algorithm is $K + \log_2(N)$, that is, the cost of sampling and computing weights $N$ times fully in parallel, and the cost of sampling one candidate using a prefix-sum type method~\citep[][Supplementary material]{Lawrence2016Parallel}. We note that this recovers the case of single proposal RS when $N=1$.
    \item The number of cores available is infinite.
\end{enumerate}

Let us understand the total run time of getting $M$ samples in parallel. Because parallelisation on a GPU is blocking~\footnote{For an informal discussion on the matter, in the context of rejection sampling, we refer the reader to \url{https://cas.ee.ic.ac.uk/people/dt10/research/rngs-gpu-rejection.html}, last accessed on 22/02/2022.}, the expected run time in the case of using simple RS consists in the expectation of the maximum of $M$ i.i.d. Geometric distributions with parameter $p_{\mathrm{RS}}$ times the cost of one step, and similarly for the ERS version with $p_{\mathrm{ERS}}$ instead. To fix notation, let us call $C(M, N)$ this quantity, and keep in mind that $N=1$ recovers the RS algorithm case. With these notations, the expected total run time of an algorithm sampling $M$ times in parallel with $N$ proposals per sample is therefore proportional to $[K + \log_2(N)] C(M, N)$.

\begin{figure}[!htb]
\begin{subfigure}{.48\linewidth}
\centering
\resizebox{\textwidth}{!}{
    \includegraphics{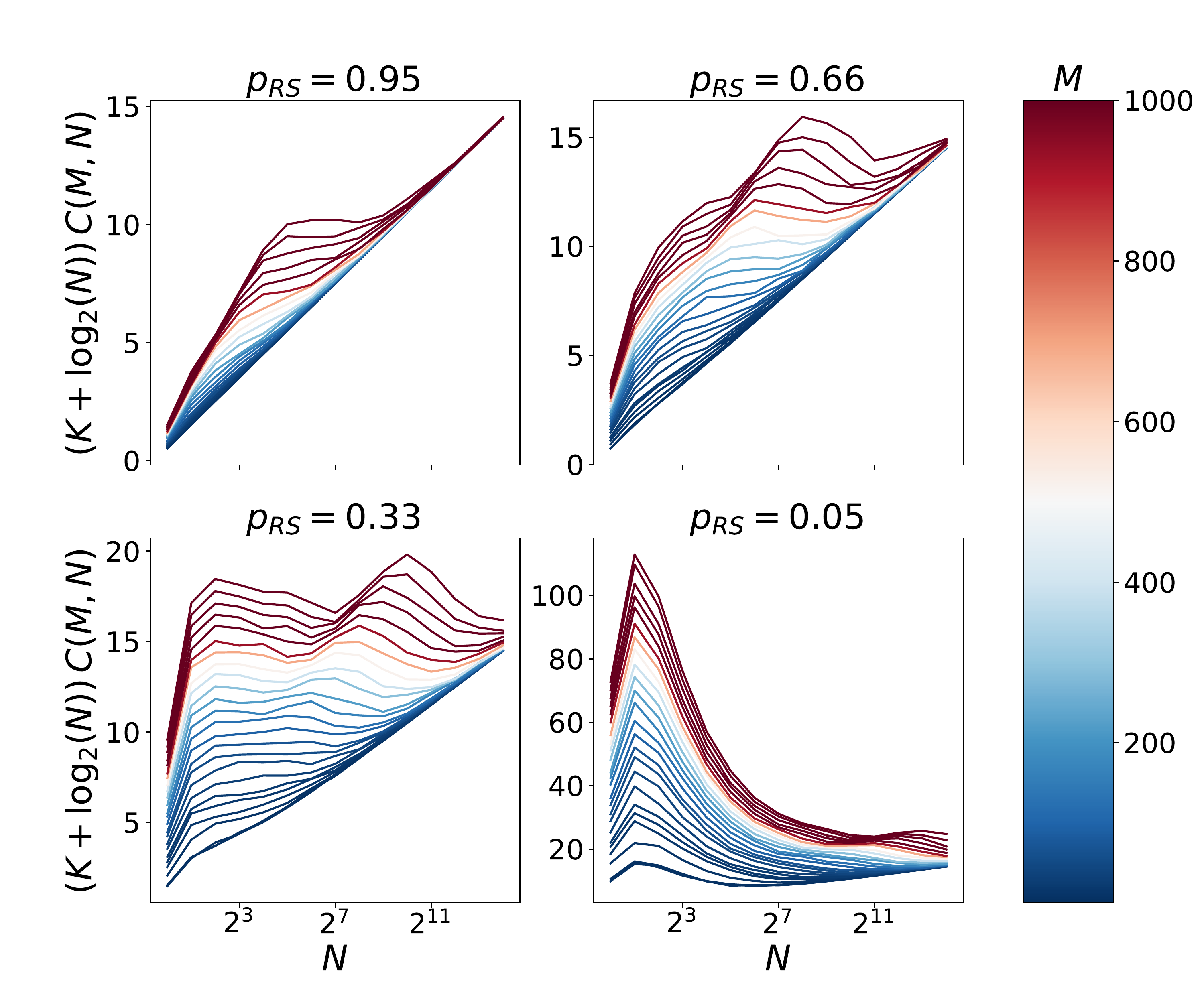}
}
\captionsetup{width=0.7\textwidth}
\caption{Cost analysis of the ensemble rejection sampler for $K=1/2$.}
\label{fig:cost_1/2}
\end{subfigure}%
\begin{subfigure}{.48\linewidth}
\centering
\resizebox{\textwidth}{!}{
    \includegraphics{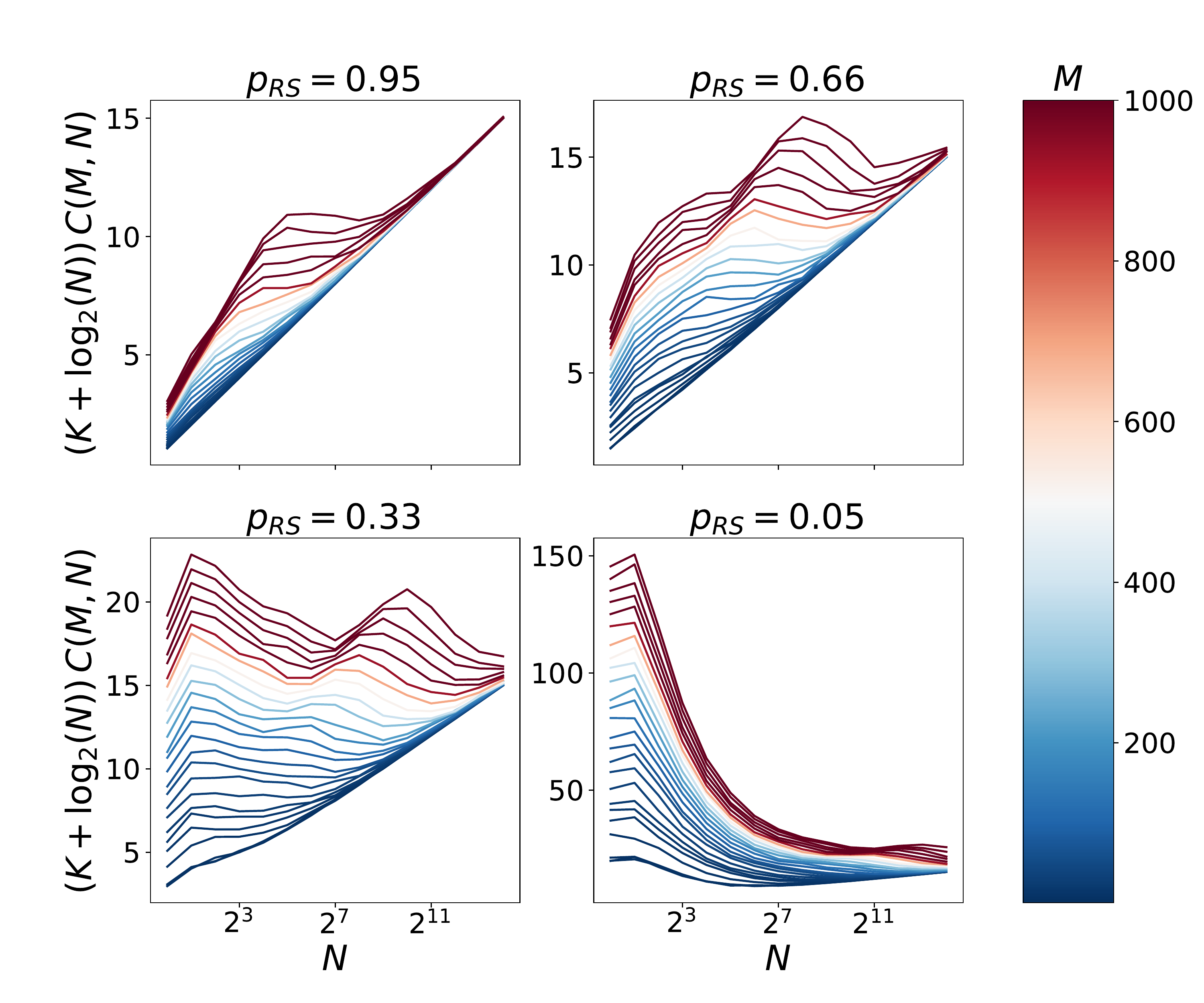}
}
\captionsetup{width=0.7\textwidth}
\caption{Cost analysis of the ensemble rejection sampler for $K=1$.}
\label{fig:cost_1}
\end{subfigure}\\[1ex]
\begin{subfigure}{\linewidth}
\centering
\resizebox{0.48\textwidth}{!}{
    \includegraphics{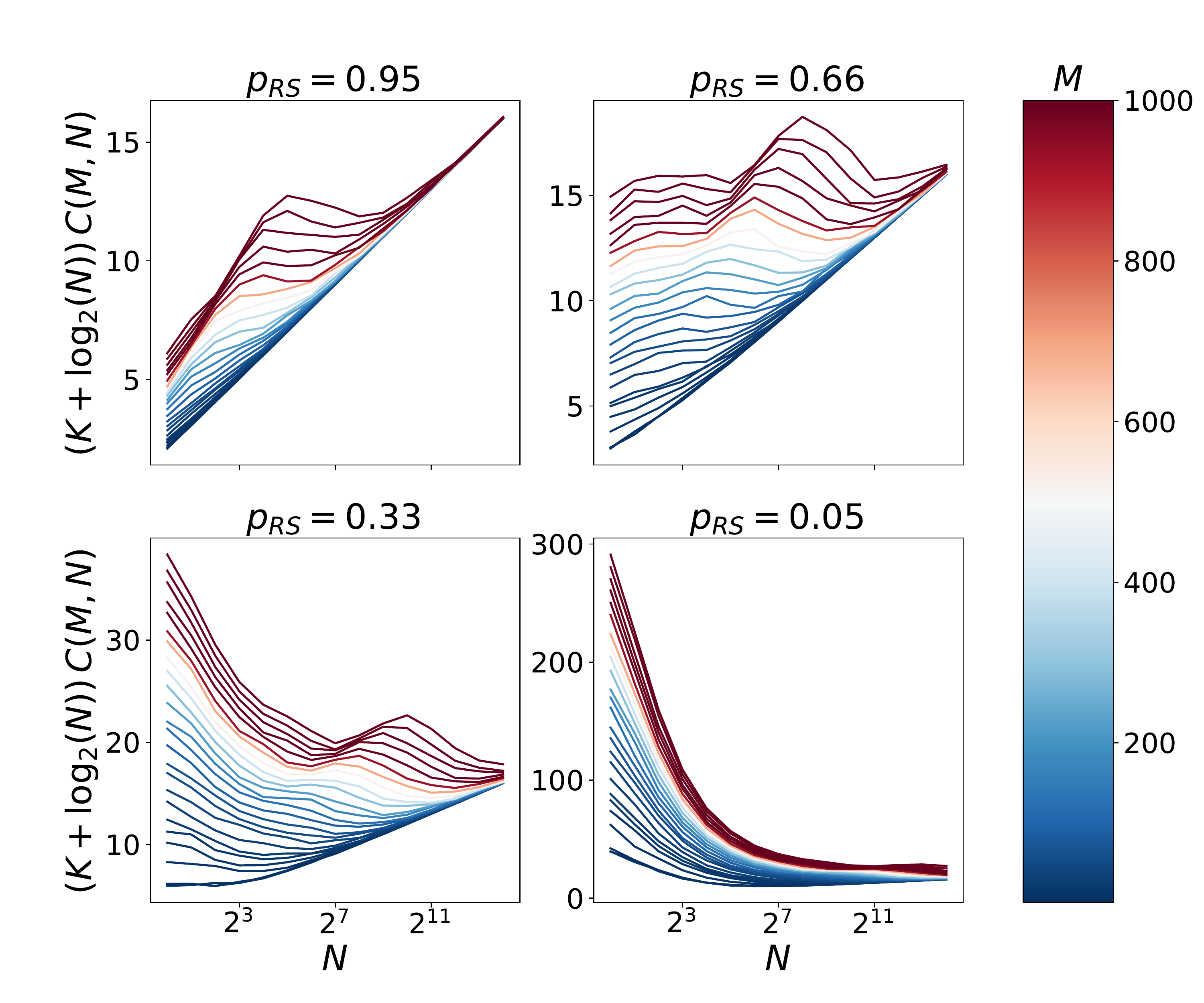}
}
\captionsetup{width=0.7\textwidth}
\caption{Cost analysis of the ensemble rejection sampler for $K=2$.}
\label{fig:cost_2}
\end{subfigure}
\end{figure}

In Figures~\ref{fig:cost_1/2}, \ref{fig:cost_1}, and \ref{fig:cost_2} we show the expected effective run time $N$, $M$, and $p_{\mathrm{RS}}$ for $K=1/2$, $K=1$, and $K=2$, respectively. We computed the (intractable) order statistics mean $C(M, N)$ using Monte Carlo.

As outlined by these, when either the cost of selecting a candidate in the ensemble is higher than the cost of sampling a single proposal ($K < 1$), or when the acceptance probability $p_{\mathrm{RS}}$ is large, using an ensemble proposal is detrimental to the total run time of the algorithm. However, as soon as the cost of sampling one proposal is large $K > 1$, and/or the acceptance probability $p_{\mathrm{RS}}$ is small, using an ensemble proves beneficial, with larger gains appearing when trying to sample from larger sample sizes $M$ from the target distribution. While analysing the actual parallel time complexity of the algorithm in the more realistic scenario of a finite number of cores is beyond the scope of the current paper, this analysis provides some intuition so as to why ensemble proposals may improve sampling speed on parallel hardware. This behavior is empirically confirmed in Section~\ref{subsec:resampling}.

\end{document}